\newif\ifcomments  %
\newif\ifsupp  %
\newif\ifhidefigure  %
\newcommand{\Restart}{DP-FTRL-TreeRestart\xspace}
\newcommand{\NoRestart}{DP-FTRL-NoTreeRestart\xspace}
\newcommand{\SomeRestart}{DP-FTRL-SometimesRestart\xspace}
\newcommand{\eps}{\ensuremath{\varepsilon}}
\newcommand{\bolda}{\ensuremath{\boldsymbol{a}}}
\newcommand{\boldb}{\ensuremath{\boldsymbol{b}}}
\newcommand{\boldr}{\ensuremath{\boldsymbol{r}}}
\newcommand{\bolds}{\ensuremath{\boldsymbol{s}}}
\newcommand{\boldv}{\ensuremath{\boldsymbol{v}}}
\newcommand{\boldx}{\bfx}
\newcommand{\boldR}{\mathbb{R}}
\newcommand{\bfS}{\ensuremath{\mathbf{S}}}
\newcommand{\bfb}{\ensuremath{\mathbf{b}}}
\newcommand{\bfx}{\ensuremath{\mathbf{x}}}
\newcommand{\calA}{\ensuremath{\mathcal{A}}}
\newcommand{\calB}{\ensuremath{\mathcal{B}}}
\newcommand{\calC}{\ensuremath{\mathcal{C}}}
\newcommand{\calD}{\ensuremath{\mathcal{D}}}
\newcommand{\calL}{\ensuremath{\mathcal{L}}}
\newcommand{\calN}{\ensuremath{\mathcal{N}}}
\newcommand{\calS}{\ensuremath{\mathcal{S}}}
\newcommand{\calT}{\ensuremath{\mathcal{T}}}
\renewcommand{\Pr}{\mathop{\mathbf{Pr}}}
\newcommand{\E}{\mathop{\mathbf{E}}}
\newcommand{\R}{\mathbb{R}}
\newtheorem{lem}{Lemma}[section]
\newtheorem{thm}[lem]{Theorem}
\newtheorem{cor}[lem]{Corollary}
\newtheorem{defn}[lem]{Definition}
\DeclareMathOperator*{\argmin}{arg\,min}
\newcommand{\vast}{\bBigg@{4}}
\newcommand{\Vast}{\bBigg@{5}}
\newcommand{\power}[2]{\left(#1\right)^{#2}}
\newcommand{\ip}[2]{\langle #1, #2\rangle}
\newcommand{\privT}{\theta^\texttt{priv}\,}
\newcommand{\ltwo}[1]{\left\|#1\right\|_2}
\newcommand{\norm}[1]{\| #1 \|}
\DeclarePairedDelimiterX{\infdivx}[2]{(}{)}{%
  #1\;\delimsize\|\;#2%
}
\newcommand*\samethanks[1][\value{footnote}]{\footnotemark[#1]}
\newcommand{\mypar}[1]{\smallskip
	\noindent{\textbf{{#1}:}}}
\renewcommand{\epsilon}{\varepsilon}
\renewcommand{\tilde}{\widetilde}
\newcommand{\aftrl}{\calA_{\sf FTRL}}
\newcommand{\aftrlls}{\calA_\textsf{FTRL-LS}}
\newcommand{\poprisk}[1]{{\sf PopRisk}(#1)\,}
\newcommand{\tree}{\calT}
\newcommand{\treeb}{\calT_{\sf bias}}
\newcommand{\treeq}{\calT_{\sf cov}}
\newcommand{\init}{{\texttt{InitializeTree}}\,}
\newcommand{\addt}{{\texttt{AddToTree}}\,}
\newcommand{\gett}{{\texttt{GetSum}}\,}
\newcommand{\gettrv}{{\texttt{GetSumReducedVariance}}\,}
\newcommand{\initb}{{\texttt{InitializeTreeBias}}\,}
\newcommand{\addtb}{{\texttt{AddToTreeBias}}\,}
\newcommand{\gettb}{{\texttt{GetSumBias}}\,}
\newcommand{\outputt}{{\bf Output}\,}
\newcommand{\initq}{{\texttt{InitializeTreeCov}}\,}
\newcommand{\addtq}{{\texttt{AddToTreeCov}}\,}
\newcommand{\gettq}{{\texttt{GetSumCov}}\,}
\newcommand{\node}{\texttt{node}}
\newcommand{\npJ}{J_t^{\sf np}}
\newcommand{\clip}[2]{{\sf clip}\left(#1,#2\right)}
\newcommand{\pJ}{J_t^{\sf priv}}
\newcommand{\thetash}{\widetilde{\theta}}
\newcommand{\pl}{{\sf polylog}}
\newcommand{\plnd}{\pl\left(\frac{1}{\delta},n\right)}
\newcommand{\plnh}{\pl\left(\frac{1}{\delta},n,\frac{1}{\beta}\right)}
\newcommand{\npsgd}{\textsf{Noisy-SGD}}
\newcommand{\dpftrl}{\textsf{DP-FTRL}}
\newcommand{\remove}[1]{}
\newcommand{\bigO}[1]{O\left(#1\right)}
\renewcommand{\bfb}{\boldb}
\newcommand{\leftn}{\texttt{left}}
\newcommand{\rightn}{\texttt{right}}
\newcommand{\epochs}{E}
\newcommand{\tr}{\calT}
\newcommand{\leaf}{\calN}
\newcommand{\level}{\gamma}
\newcommand{\participation}{\xi}
\newcommand{\contrib}{\texttt{contrib}}
\newcommand{\start}{\texttt{start}}
\newcommand{\en}{\texttt{end}}
\newcommand{\size}{\texttt{size}}
\newcommand{\nul}{\bot}
\begin{document}

\title{Practical and Private (Deep) Learning Without Sampling or Shuffling}
\author{Peter Kairouz\thanks{Google. \texttt{\{kairouz, mcmahan, shuangsong, omthkkr, athakurta, xuzheng\}@google.com}} \and Brendan McMahan\samethanks[1] \and Shuang Song\samethanks[1] \and Om Thakkar\samethanks[1] \and Abhradeep Thakurta\samethanks[1] \and Zheng Xu\samethanks[1]}
\maketitle

\begin{abstract}
We consider training models with differential privacy (DP) using mini-batch gradients.
The existing state-of-the-art, Differentially Private Stochastic Gradient Descent (DP-SGD), requires  \emph{privacy amplification by sampling or shuffling} to obtain the best privacy/accuracy/computation trade-offs. Unfortunately, the precise requirements on exact sampling and shuffling can be hard to obtain in important practical scenarios, particularly federated learning (FL).
We design and analyze a DP variant of  Follow-The-Regularized-Leader (DP-FTRL) that compares favorably (both theoretically and empirically) to amplified DP-SGD, while allowing for much more flexible data access patterns. DP-FTRL does not use any form of privacy amplification.
\end{abstract}

\section{Introduction}
\label{sec:intro}

Differentially private stochastic gradient descent (DP-SGD)~\cite{song2013stochastic,BST14,DP-DL} has become state-of-the-art in training private (deep) learning models~\cite{DP-DL,mcmahan2018general,esa2,papernot2020tempered,opacus,TB21}. It operates by running stochastic gradient descent~\cite{robbins1951stochastic} on noisy mini-batch gradients\footnote{Gradient computed on a subset of the training examples, also called a mini-batch.}, with the noise calibrated such that it ensures differential privacy. The privacy analysis heavily uses tools like \emph{privacy amplification by sampling/shuffling}~\cite{KLNRS,BST14,DP-DL,wang2019subsampled,zhu2019poission,soda-shuffling,feldman2020hiding} to obtain the best privacy/utility trade-offs. Such amplification tools require that each mini-batch is a perfectly (uniformly) random subset of the training data. This assumption can make practical deployment prohibitively hard, especially in the context of distributed settings like federated learning (FL) where one has little control on which subset of the training data one sees at any time~\cite{kairouz2019advances,balle2020privacy}. 

We propose a new online learning~\cite{hazan2019introduction,shalev2011online} based DP algorithm, \emph{differentially private follow-the-regularized-leader} (DP-FTRL), that has privacy/utility/computation trade-offs that are competitive with DP-SGD, and does not rely on privacy amplification. DP-FTRL \emph{significantly outperforms} un-amplified DP-SGD at all privacy levels. In the higher-accuracy / lower-privacy regime, DP-FTRL outperforms even \emph{amplified} DP-SGD. We emphasize that in the context of ML applications, using a DP mechanism even with a large $\epsilon$ is practically much better for privacy than using a non-DP mechanism~\cite{song2019auditing,jagielski2020auditing,thakkar2020understanding,nasr2021adversary}. 

\mypar{Privacy amplification and its perils}  At a high-level, DP-SGD can be thought of as an iterative noisy state update procedure for $T$ steps operating over mini-batches of the training data. 
For a time step $t \in [T]$ and an arbitrary mini-batch of size $k$ from a data set $D$ of size $n$, let $\sigma_t$ be the standard deviation of the noise needed in the $t^{th}$ update to satisfy $\epsilon_t$-differential privacy. 
If the mini-batch is chosen \emph{u.a.r. and i.i.d.} from $D$ at each time step\footnote{One can also create a mini-batch with Poisson sampling~\cite{DP-DL,mcmahan2017learning,zhu2019poission}, except the batch size is now a random variable. For brevity, we focus on the fixed batch setting.}  $t$, then privacy amplification by sampling~\cite{KLNRS,BST14,DP-DL,wang2019subsampled} allows one to scale down the noise to $\sigma_t\cdot(k/n)$, while still ensuring $\epsilon_t$-differential privacy.\footnote{A similar argument holds for amplification by shuffling~\cite{soda-shuffling,feldman2020hiding}, when the data are uniformly shuffled at the beginning of every epoch.We do not consider privacy amplification by iteration~\cite{FMTT18} in this paper, as it only applies to smooth convex functions.}
Such amplification is crucial for DP-SGD to obtain state-of-the-art models in practice~\cite{DP-DL,papernot2020tempered,TB21} when $k\ll n$.

There are two major bottlenecks for such deployments: 
i) For large data sets, achieving uniform sampling/shuffling of the mini-batches in every round (or epoch) can be prohibitively expensive in terms of computation and/or engineering complexity, 
ii) In distributed settings like federated learning (FL)~\cite{FL1}, uniform sampling/shuffling may be infeasible to achieve because of widely varying available population at each time step. 
Our work answers the following question in affirmative: \emph{Can we design an algorithm that does not rely on privacy amplification, and hence allows data to be accessed in an arbitrary order, while providing privacy/utility/computation trade-offs competitive with DP-SGD?}

\mypar{DP-FTRL and amplification-free model training} DP-FTRL can be viewed as a differentially private variant of the follow-the-regularized-leader (FTRL) algorithm~\cite{xiao2010dual,mcmahan2011follow,duchi2011adaptive}. The main idea in DP-FTRL is to use the \emph{tree aggregation trick}~\cite{Dwork-continual,CSS11-continual} to add noise to the sum of mini-batch gradients, in order to ensure privacy. Crucially, it deviates from DP-SGD by adding correlated noise across time steps, as opposed to independent noise. This particular aspect of DP-FTRL allows it to get strong privacy/utility trade-off without relying on privacy amplification.

\mypar{Federated Learning (FL) and DP-FTRL} There has been prior work \cite{balle2020privacy, ramaswamy2020training} detailing challenges for obtaining strong privacy guarantees that incorporate limited availability of participating clients in real-world applications of federated learning. Although there exist techniques like the Random Check-Ins \cite{balle2020privacy} that obtain privacy amplification for FL settings, implementing such techniques may still require clients to keep track of the number of training rounds being completed at the server during their period(s) of availability to be able to uniformly randomize their participation.
On the other hand, since the privacy guarantees of DP-FTRL (Algorithm~\ref{Alg:PFTRL}) do not depend on any type of privacy amplification, it does not require any local/central randomness apart from noise addition to the model updates.

Appendices~\ref{sec:related} and Section ~\ref{sec:background} describe additional related work and background, respectively.

\subsection{Problem Formulation}
\label{sec:problemDesc}

Suppose we have a stream of data samples $D=[d_1,\ldots,d_n]\in\calD^n$, where $\calD$ is the domain of data samples, and a loss function $\ell:\calC\times\calD\to\boldR$, where $\calC\in\boldR^p$ is the space of all models. We consider the following two problem settings.

\mypar{Regret Minimization} At every time step $t\in[n]$, while observing samples $[d_1,\ldots,d_{t-1}]$, the algorithm $\calA$ outputs a model $\theta_t\in\calC$ which is used to predict on example $d_t$. The performance of $\calA$ is measured in terms of regret against an arbitrary post-hoc comparator $\theta^* \in \calC$:
\begin{equation}
R_D(\calA;\theta^*)=\frac{1}{n}\sum\limits_{t=1}^n\ell(\theta_t;d_t)-\frac{1}{n}\sum\limits_{t=1}^n\ell(\theta^*;d_t).
\label{eq:reg}
\end{equation}
We consider the algorithm $\calA$ low-regret if $R_D(\calA;\theta^*)=o(1)$. To ensure a low-regret algorithm, we will  assume $\ltwo{\nabla\ell(\theta;d)}\leq L$ for any data sample $d$, and any models $\theta\in\calC$. We consider both \emph{adversarial regret}, where the data sample $d_t$ are drawn adversarially based on the past output $\{\theta_1,\ldots,\theta_t\}$~\cite{hazan2019introduction}, and \emph{stochastic regret}~\cite{hazan2014beyond}, where the data samples in $D$ are drawn i.i.d. from some fixed distribution $\tau$. 

\mypar{Excess Risk Minimization} In this setting, we look at the problem of minimizing the excess population risk. Assuming the data set $D$ is sampled i.i.d. from a distribution $\tau$, and the algorithm $\calA$ outputs $\widehat\theta\in\calC$, we want to minimize
\begin{equation}
\poprisk{\calA}=\mathbb{E}_{d\sim\tau}\ell(\widehat\theta;d)-\min\limits_{\theta\in\calC}\mathbb{E}_{d\sim\tau}\ell(\theta;d).
\label{eq:poprisk}
\end{equation}
All the algorithms in this paper guarantee differential privacy~\cite{DMNS,ODO} and R\'enyi differential privacy~\cite{mironov2017renyi} (See Section~\ref{sec:background} for details). The definition of a single data record can be one training example (a.k.a., \emph{example level} privacy), or a group of training examples from one individual (a.k.a., \emph{user level} privacy). Except for the empirical evaluations in the FL setting, we focus on example level privacy. The specific definition of differential privacy (DP) we use is in Definition~\ref{def:diiffP}, which is semantically similar to the traditional add/remove notion of DP~\cite{vadhan2017complexity}, where two data sets are neighbors if their symmetric difference is one. In particular, it is a special instantiation of~\cite[Definition II.3]{esa++}. 
An advantage of Definition~\ref{def:diiffP} is that it allows capturing the notion of neighborhood defined  by addition/removal of single data record in the data set, without having the necessity to change the number of records of the data set ($n$).
Since the algorithms in this paper are motivated from natural streaming/online algorithms, Definition~\ref{def:diiffP} is convenient to operate with.
  Similar to the traditional add/remove notion~\cite{vadhan2017complexity}, Definition~\ref{def:diiffP} can capture the regular replacement version of DP (originally defined in~\cite{DMNS}, where the notion of neighborhood is defined by replacing any data record with its worst-case alternative), by incurring up to a factor of two in the privacy parameters $\epsilon$ and $\delta$.

\begin{defn}[Differential privacy] 

Let $\calD$ be the domain of data records,  $\nul\not\in\calD$ be a special element, and let $\widehat{\calD}=\calD\cup\{\nul\}$ be the extended domain. 
A randomized algorithm $\calA:\widehat{\calD}^{n}\to\calS$ is $(\eps,\delta)$-differentially private if for any data set $D\in\widehat{\calD}^n$ and any neighbor  $D'\in\widehat{\calD}^n$ (formed from $D$ by replacing one record with $\nul$),
and for any event $S\in\calS$, we have 
\begin{align*}
    \Pr[\calA(D)\in S] &\leq e^{\eps} \cdot \Pr[\calA(D')\in S] +\delta,\ \ \text{and}\\
    \Pr[\calA(D')\in S] &\leq e^{\eps} \cdot \Pr[\calA(D)\in S] +\delta,
\end{align*}
where the probability is over the randomness of $\calA$.
\label{def:diiffP}
\end{defn}
In Algorithm~\ref{Alg:PFTRL} we treat $\nul$ specially, namely assuming it always produces a zero gradient.

\subsection{Our Contributions}
\label{sec:contrib}

\begin{table*}[ht]
\caption{\label{tab:tbl1} Best known regret guarantees for dataset size $n$ and model dimension $p$. Here, high probability means w.p. at least $1-\beta$ over the the randomness of the algorithm. The expected regret is an expectation over the random choice of the data set and the randomness of the algorithm.
} %

\renewcommand{\arraystretch}{1.6}
\begin{center}
\begin{tabular}{|*{5}{c|}}
\hline
Class &\multicolumn{2}{c|}{Adversarial Regret} &\multicolumn{2}{c|}{Stochastic Regret} \\
\hline
& Expected & High probability & Expected & High probability \\
\hline\hline
\makecell{Least-squares\\(and linear)} 
&\makecell{$O\left(\left(\frac{1}{\sqrt n}+\frac{\sqrt{p}}{\epsilon n}\right)\cdot\right.$\\$\left.\plnd\right)$\\\cite{agarwal2017price}}
&\makecell{Same as\\general\\convex}
&\makecell{$O\left(\left(\frac{1}{\sqrt n}+\frac{\sqrt{p}}{\epsilon n}\right)\cdot\right.$ \\$\left.\plnd\right)$\\\cite{agarwal2017price}}
&\makecell{$O\left(\left(\frac{1}{\sqrt n}+\frac{\sqrt{p}}{\epsilon n}\right)\cdot\right.$ \\$\left.\plnh\right)$\\\color{blue} [Theorem~\ref{thm:stocReg-ls-exist}]}\\
\hline
General convex & \multicolumn{4}{c|}{Constrained and unconstrained: $ O\left(\left(\frac{1}{\sqrt n}+\frac{p^{1/4}}{\sqrt{\epsilon n}}\right)\cdot\plnh\right)$ {\color{blue} [Theorem~\ref{thm:regFTRL}]}}\\
\hline
\end{tabular}
\end{center}
\end{table*}

Our primary contribution in this paper is a private online learning algorithm: differentially private follow-the-regularized leader (DP-FTRL) (Algorithm~\ref{Alg:PFTRL}). We provide tighter privacy/utility trade-offs based on DP-FTRL (see Table~\ref{tab:tbl1} for a summary), and show how it can be easily adapted to train (federated) deep learning models, with comparable, and sometimes even better privacy/utility/computation trade-offs as DP-SGD. We summarize these contributions below.

\mypar{DP-FTRL algorithm} We provide DP-FTRL, a differentially private variant of the Follow-the-regularized-leader (FTRL) algorithm~\cite{mcmahan10boundopt,mcmahan2011follow,shalev2011online,hazan2019introduction} for online convex optimization (OCO). We also provide a variant called the momentum DP-FTRL that has superior performance in practice. \cite{agarwal2017price} provided a instantiation of DP-FTRL specific to linear losses. \cite{thakurta2013nearly} provided an algorithm similar to DP-FTRL, where instead of just linearizing the loss, a quadratic approximation to the regularized loss was used.

\mypar{Regret guarantees} In the adversarial OCO setting (Section~\ref{sec:advReg}), compared to prior work~\cite{JKT-online,thakurta2013nearly,agarwal2017price}, DP-FTRL has the following major advantages. First, it improves the best known regret guarantee in~\cite{thakurta2013nearly} by a factor of $\sqrt{\epsilon}$ (from $\widetilde{O}\left(\sqrt{\frac{\sqrt{p}}{\epsilon^2 n}}\right)$ to $\widetilde{O}\left(\sqrt{\frac{\sqrt{p}}{\epsilon n}}\right)$, when $\epsilon\leq 1$). This improvement is significant because it \emph{distinguishes centrally private OCO from locally private}~\cite{Warner,evfimievski2003limiting,KLNRS} OCO\footnote{Although not stated formally in the literature, a simple argument shows that locally private SGD~\cite{DJW13} can achieve the same regret as in~\cite{thakurta2013nearly}.}. Second, unlike~\cite{thakurta2013nearly}, DP-FTRL (and its analysis) extends to the unconstrained setting $\calC=\mathbb{R}^p$. Also, in the case of composite losses~\cite{duchi2010composite,xiao2010dual,mcmahan2011follow,mcmahan17survey}, i.e., where the loss functions are of the form $\ell(\theta;d_t)+r_t(\theta)$ with $r:\calC\to\mathbb{R}^+$ (e.g., $\norm{\cdot}_1$) being a convex regularizer, DP-FTRL has a regret guarantee for the losses $\ell(\theta;d_t)$'s of form: (regret bound without the $r_t$'s) $+\frac{1}{n}\sum\limits_{t=1}^n r_t(\theta^*)$. 

In the stochastic OCO setting (Section~\ref{sec:lssq}), we show that for least-square losses (where $\ell(\theta;d_t)=(y_t-\ip{\boldx_t}{\theta})^2$ with $d_t=(\boldx_t,y_t)$) and linear losses (when $\ell(\theta;d_t)=\ip{d_t}{\theta}$),
a variant of DP-FTRL achieves regret of the form $O\left(\left(\frac{1}{\sqrt n}+\frac{\sqrt p}{\epsilon n}\right)\cdot\plnh\right)$ with probability $1-\beta$ over the randomness of algorithm. Our guarantees are strictly high-probability guarantees, i.e., the regret only depends on ${\sf polylog}(1/\beta)$. 

\mypar{Population risk guarantees} In Section~\ref{sec:popRisk},
using the standard online-to-batch conversion~\cite{cesa2002generalization,SSSS09}, we obtain a population risk guarantee for DP-FTRL. For general Lipschitz convex losses, the population risk for DP-FTRL in Theorem~\ref{thm:poprisk} is same as that in \citep[Appendix F]{BST14} (up to logarithmic factors), but the advantage of DP-FTRL is that it is a single pass algorithm (over the data set $D$), as opposed to requiring $n$ passes over the data. 
\emph{Thus, we provide the best known population risk guarantee for a single pass algorithm that does not rely on convexity for privacy.}
While the results in~\cite{bassily2019private,bassily2020stability,feldman2019private} have a tighter (and optimal) excess population risk of $\widetilde \Theta(1/\sqrt{n}+\sqrt{p}/(\epsilon n))$, they either require convexity to ensure privacy for a single pass algorithm, or need to make $n$-passes over the data. 
For restricted classes like linear and least-squared losses, DP-FTRL can achieve the optimal population risk via the tighter stochastic regret guarantee. Whether DP-FTRL can achieve the optimal excess population risk in the general convex setting 
is left as an open problem.

\mypar{Empirical contributions}
In Sections~\ref{sec:practical_variants} and~\ref{sec:empEval} we study some trade-offs between privacy/utility/computation for DP-FTRL and DP-SGD. 
We conduct our experiments on four benchmark data sets: MNIST, CIFAR-10, EMNIST, and StackOverflow.
We start by fixing the computation available to the techniques, and observing privacy/utility trade-offs. 
We find that DP-FTRL achieves better utility compared to DP-SGD for moderate to large $\epsilon$.
In scenarios where amplification cannot be ensured (e.g., due to practical/implementation constraints), DP-FTRL provides substantially better performance as compared to unamplified DP-SGD.
Moreover, we show that with a modest increase in the computation cost, DP-FTRL, without any need for amplification, can match the performance of amplified DP-SGD.
Next, we focus on privacy/computation trade-offs for both the techniques when a utility target is desired.
We show that DP-FTRL can provide better trade-offs compared to DP-SGD for various accuracy targets, which can result in significant savings in privacy/computation cost as the size of data sets becomes limited.

To shed light on the empirical efficacy of DP-FTRL (in comparison) to DP-SGD, in Section~\ref{sec:equivDP-GD}, we show that a variant of DP-SGD (with correlated noise) can be viewed as an equivalent formulation of DP-FTRL in the unconstrained setting (
$\calC=\mathbb{R}^p$). In the case of traditional DP-SGD~\cite{BST14}, the scale of the noise added per-step $t\in[n]$ is asymptotically same as that of DP-FTRL once $t=\omega(n)$.

{\color{blue}
\section{Errata and Fixes for the ICML 2021 version}
\label{sec:errata}

In this section we provide an errata for the ICML-2021 proceedings version~\cite{kairouz21b} of this paper. For the no-tree-restart case of DP-FTRL, the privacy accounting of Theorem D.2 in~\cite{kairouz21b} was erroneous, as it incorrectly computed the sensitivity of the complete binary tree. Specifically, the analysis did not take into account that if DP-FTRL is executed across multiple epochs of the training data, then a single user can both contribute to multiple leaf nodes in the tree, and can contribute more than once to a single non-leaf node. In Section~\ref{app:multipass} we provide corrected versions of the theorem, and also provide corrected empirical evaluation for MNIST, CIFAR-10, and EMNIST based on this (these were the only empirical results affected). These experiments are detailed in Section~\ref{sec:empEval}, and the corresponding appendices.  Qualitatively, when DP-FTRL and DP-SGD (the amplified version) are compared for large number of epochs of training, the crossover point (w.r.t. $\epsilon$) where DP-FTRL outperforms DP-SGD shifts to a larger value.
However, for small number of epochs of training, the crossover point remains unchanged.}
\section{Background}
\label{sec:background}

\mypar{Differential Privacy} Throughout the paper, we use the notion of approximate differential privacy~\cite{DMNS,ODO} and R\'enyi differential privacy (RDP)~\cite{DP-DL, mironov2017renyi}. For meaningful privacy guarantees, $\epsilon$ is assumed to be a small constant,  and $\delta \ll 1/|D|$.

\begin{defn}[R\'enyi differential privacy]
Analogous to the definitiion of $(\epsilon,\delta)$-differential privacy in Definition~\ref{def:diiffP}, a randomized algorithm $\calA$ is $(\alpha, \eps)$-RDP if the condition on $\calA(D)$ and $\calA(D')$ in Definition~\ref{def:diiffP} are replaced with the following:
\begin{align*}
    \frac{1}{\alpha-1} \log \E_{s\sim \calA(D)}{\power{\frac{\Pr\left[\calA(D) = s\right]}{\Pr\left[\calA(D') = s\right]}}{\alpha}} &\leq \epsilon,\ \ \text{and}\\
     \frac{1}{\alpha-1} \log \E_{s\sim \calA(D')}{\power{\frac{\Pr\left[\calA(D') = s\right]}{\Pr\left[\calA(D) = s\right]}}{\alpha}} &\leq \epsilon.
\end{align*}
\end{defn}

\citet{DP-DL} and~\citet{mironov2017renyi} have shown that an $(\alpha,\epsilon)$-RDP algorithm guarantees $\left(\epsilon + \frac{\log (1/\delta)}{\alpha-1}, \delta\right)$-differential privacy. Follow-up works~\cite{asoodeh2020better,canonne2020discrete} provide tighter conversions. We used the conversion in~\cite{canonne2020discrete} in our experiments.

To answer a query $f(D)$ with $\ell_2$ sensitivity $L$, i.e., $\max_{\text{neighboring }D,D'}\|f(D) - f(D')\|_2 \leq L$, the Gaussian mechanism~\cite{DMNS} returns $f(D) + \calN(0, L^2 \sigma^2)$, which guarantees $\left(\sqrt{1.25 \log(2/\delta)}/\sigma, \delta\right)$-differential privacy~\cite{DMNS,dwork2014algorithmic} and $(\alpha, \alpha/2\sigma^2)$-RDP~\cite{mironov2017renyi}.

\mypar{DP-SGD and Privacy Amplification} Differentially-private stochastic gradient descent (DP-SGD) is a common algorithm to solve private optimization problems. The basic idea is to enforce a bounded $\ell_2$ norm of individual gradient, and add Gaussian noise to the gradients used in SGD updates. 
Specifically, consider a dataset $D = \{d_1,\dots,d_n\}$ and an objective function of the form $\sum_{i=1}^n\ell(\theta; d_i)$ for some loss function $\ell$. DP-SGD uses an update rule
\begin{align*}
\theta_{t+1} \leftarrow \theta_t - \frac{\eta}{|\calB|} \left(\sum_{i\in\calB}\clip{\nabla_{\theta} \ell(\theta_t;d_i)}{L} + \calN(0, L^2 \sigma^2)\right)
\end{align*}
where $\clip{v}{L}$ projects $v$ to the $\ell_2$-ball of radius $L$, and $\calB \subseteq [n]$ represents a mini-batch of data.

Using the analysis of the Gaussian mechanism, we know that such an update step guarantees $(\alpha, \alpha/2\sigma^2)$-RDP with respect to the mini-batch $\calB$. By parallel composition, running one epoch with disjoint mini-batches guarantees $(\alpha, \alpha/2\sigma^2)$-RDP.
On the other hand, previous works~\cite{BST14,DP-DL,wang2019subsampled} has shown that if $\calB$ is chosen uniformally at random from $[n]$, or if we use poisson sampling to collect a batch of samples $\calB$, then one step would guarantee $\left(\alpha, \bigO{\alpha/2\sigma^2 \cdot (|\calB|/n)^2} \right)$-RDP.

\mypar{Tree-based Aggregation} 
Consider the problem of privately releasing prefix sum of a data stream, i.e., given a stream $D = (d_1, d_2, \dots, d_T)$ such that each $d_i \in \mathbb{R}^p$ has $\ell_2$ norm bounded by $L$, we aim to release $s_t=\sum_{i=1}^t d_i$ for all $t\in [1,T]$ under differential privacy. 
\citet{Dwork-continual} and \citet{CSS11-continual} proposed a tree-based aggregation algorithm to solve this problem. Consider a complete binary tree $\calT$ with leaf nodes as $d_1$ to $d_T$, and internal nodes as the sum of all leaf nodes in its subtree. To release the exact prefix sum $s_t$, we only need to sum up $\bigO{\log(t)}$ nodes.
To guarantee differential privacy for releasing the tree $\calT$, since any $d_i$ appears in $\log(T)$ nodes in $\calT$, using composition, we can add Gaussian noise of standard deviation of the order $L\sqrt{\log(T) \log(1/\delta)}/\epsilon$ to guarantee $(\epsilon, \delta)$-differential privacy.

\citet{thakurta2013nearly} used this aggregation algorithm to build a nearly optimal algorithms for private online learning. Importantly, this work showed the privacy guarantee holds even for \emph{adaptively chosen sequences} $\{d_t\}_{t=1}^T$, which is crucial for model training tasks.

\section{Private Follow-The-Regularized-Leader}
\label{sec:privateFTRL}

In this section, we provide the formal description of the DP-FTRL algorithm (Algorithm~\ref{Alg:PFTRL}) and its privacy analysis. We then show that a variant of differentially private stochastic gradient descent (DP-SGD)~\cite{song2013stochastic,BST14} can be viewed of as an instantiation of DP-FTRL under appropriate choice of learning rate. 

Critically, \emph{our privacy guarantees for DP-FTRL hold when the data $D$ are processed in an arbitrary (even adversarially chosen) order,} and do not depend on the convexity of the loss functions.  The utility guarantees, i.e., the regret and the excess risk guarantees require convex losses (i.e., $\ell(\cdot;\cdot)$ is convex in the first parameter). In the presentation below, we assume differentiable losses for brevity. The arguments extend to non-differentiable convex losses via standard use of sub-differentials~\cite{shalev2011online,hazan2019introduction}.

\subsection{Algorithm Description}
\label{sec:algoDesc}

The main idea of DP-FTRL is based on three observations: i) For online convex optimization, to bound the regret, for a given loss function $\ell(\theta;d_t)$ (i.e., the loss at time step $t$), it suffices for the algorithm to operate on a linearization of the loss at $\theta_t$ (the model output at time step $t$): $\tilde \ell(\theta;d_t)=\ip{\nabla_\theta\ell(\theta_t;d_t)}{\theta-\theta_t}$, ii) Under appropriate choice of $\lambda$, optimizing for $\theta_{t+1}=\arg\min\limits_{\theta\in\calC}\sum\limits_{i=1}^t\tilde \ell(\theta;d_t)+\frac{\lambda}{2}\ltwo{\theta}^2$ over $\theta\in\calC$ gives a good model at step $t+1$, and iii) For all $t\in[n]$, one can privately keep track of $\sum\limits_{i=1}^t\tilde \ell(\theta;d_t)$ using the now standard \emph{tree aggregation protocol}~\cite{Dwork-continual,CSS11-continual}. While a variant of this idea was used in~\cite{thakurta2013nearly} under the name of \emph{follow-the-approximate-leader}, one key difference is that they used a quadratic approximation of the regularized loss, i.e., $\ell(\theta;d_t)+\frac{\lambda}{t}\ltwo{\theta}^2$. This formulation results in a more complicated algorithm,  sub-optimal regret analysis, and failure to maintain structural properties (like sparsity) introduced by composite losses~\cite{duchi2010composite,xiao2010dual,mcmahan2011follow,mcmahan17survey}. %

\begin{algorithm}[ht]
\caption{$\aftrl$: Differentially Private Follow-The-Regularized-Leader (DP-FTRL)}
\begin{algorithmic}[1]
\REQUIRE Data set: $D=[d_1,\cdots,d_n]$ arriving in a stream, in an arbitrary order; constraint set: $\calC$, noise scale: $\sigma$, regularization parameter: $\lambda$,  clipping norm: $L$.
\STATE $\theta_1\leftarrow\argmin\limits_{\theta\in\calC}\frac{\lambda}{2}\ltwo{\theta}^2$. \outputt $\theta_1$.
\STATE  $\tree\leftarrow \init(n,\sigma^2,L)$.
\FOR{$t\in[n]$}
    \STATE Let  $\nabla_t\leftarrow \clip{\nabla_\theta \ell(\theta_t;d_t)}{L}$, where $\clip{\boldv}{L}=\boldv\cdot\min\left\{\frac{L}{\ltwo{\boldv}},1\right\}$, taking $\nabla_\theta \ell(\theta;\nul)=\mathbf{0}$.
    \STATE $\tree\leftarrow\addt(\tree,t,\nabla_t)$.
    \STATE $\bolds_t\leftarrow\gett(\tree,t)$, i.e., estimate $\sum\limits_{i=1}^t\nabla_i$ via tree-aggregation protocol.
    \STATE $\theta_{t+1}\leftarrow\arg\min\limits_{\theta\in\calC}\ip{\bolds_t}{\theta}+\frac{\lambda}{2}\|\theta\|^2_2$. \outputt $\theta_{t+1}$. %
\ENDFOR
\end{algorithmic}
\label{Alg:PFTRL}
\end{algorithm}

Later in the paper, we provide two variants of DP-FTRL (momentum DP-FTRL, and DP-FTRL for least square losses) which will have superior privacy/utility trade-offs for certain problem settings.

DP-FTRL is formally described in Algorithm~\ref{Alg:PFTRL}. There are three functions, $\init$, $\addt$, $\gett$, that correspond to the tree-aggregation algorithm. 
At a high-level, $\init$ initializes the tree data structure $\tree$, $\addt$ allows adding a new gradient $\nabla_t$ to $\tree$, and $\gett$ returns the prefix sum $\sum\limits_{i=1}^t\nabla_i$ privately. 
In our experiments (Section~\ref{sec:empEval}), we use the iterative estimator from \cite{honaker2015efficient} to obtain the optimal estimate of the prefix sums in $\gett$.
Please refer to Appendix~\ref{app:tree} for the formal algorithm descriptions.

It can be shown that the error introduced in DP-FTRL due to privacy is dominated by the error in estimating $\sum\limits_{i=1}^t\nabla_t$ at each $t\in[n]$. It follows from~\cite{thakurta2013nearly} that for a sequence of (adaptively chosen) vectors $\{\nabla_t\}_{t=1}^n$, if we perform $\addt(\tree, t, \nabla_t)$ for each $t\in[n]$, then we can write $\gett(\tree,t) = \sum_{i=1}^t \nabla_i + \boldb_t$ where $\boldb_t$ is normally distributed with mean zero, and $\forall t\in[n] , \ltwo{\bfb_t}\leq L\sigma\sqrt{p\lceil\lg (n)\rceil\ln(n/\beta)}$ w.p. at least $1-\beta$. 

\mypar{Momentum Variant} 
We find that using a momentum term $\gamma\in[0,1]$ with 
Line 7 in Algorithm~\ref{Alg:PFTRL} replaced by
$$
\boldv_{t}\leftarrow\gamma\cdot\boldv_{t-1}+\bolds_t, \text{ } \theta_{t+1}\leftarrow\arg\min\limits_{\theta\in\calC}\ip{\boldv_t}{\theta}+\frac{\lambda}{2}\|\theta -\theta_0 \|^2_2
$$
gives superior empirical privacy/utility trade-off compared to the original algorithm when training non-convex models. 
Throughout the paper, we refer to this variant as momentum DP-FTRL, or DP-FTRLM. Although we do not provide formal regret guarantee for this variant, we conjecture that the superior empirical performance is due to the following reason. The noise added by the tree aggregation algorithm is always bounded by $O(\sqrt{p\ln(1/\delta)}\cdot \ln(n)/\epsilon)$. However, the noise at time step $t$ and $t+1$ can differ by a factor of $O(\sqrt{\ln n})$. This creates sudden jumps in between the output models comparing to DP-SGD. The momentum can smooth out these jumps.

\mypar{Privacy analysis} In Theorem~\ref{thm:privFTRL}, we provide the privacy guarantee for Algorithm~\ref{Alg:PFTRL} and its momentum variant (with proof in Appendix~\ref{app:privFTRL}). In Appendix~\ref{app:multipass}, we extend it to multiple passes over the data set $D$, and batch sizes $>1$.

\begin{thm}[Privacy guarantee]
Algorithm~\ref{Alg:PFTRL} (and its momentum variant) guarantees $\left(\alpha,\frac{\alpha \lceil\lg(n+1)\rceil}{2\sigma^2}\right)$-R\'enyi differential privacy, where $n$ is the number of samples in $D$.
Setting $\sigma=\frac{\sqrt{2\lceil\lg(n+1)\rceil\ln(1/\delta)}}{\epsilon}$, one can guarantee $(\epsilon,\delta)$-differential privacy, for $\epsilon\leq 2\ln(1/\delta)$. 
\label{thm:privFTRL}
\end{thm}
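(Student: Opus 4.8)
The plan is to peel Algorithm~\ref{Alg:PFTRL} down to the tree-aggregation mechanism and then apply Rényi composition of Gaussian mechanisms. First, note that everything the algorithm reveals is a deterministic post-processing of the noisy prefix sums $\bolds_1,\dots,\bolds_n$ returned by $\gett$ (in the momentum variant the iterates $\boldv_t$ and $\theta_{t+1}$ are again post-processings of $\bolds_1,\dots,\bolds_t$), and each $\bolds_t$ is a fixed linear combination of the noisy node values stored in $\tree$. Since RDP is closed under post-processing, it suffices to bound the RDP of the map $D\mapsto\tree$, i.e., of the vector of all noisy node values.

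Second, open up the tree. Every node $v$ holds $\hat c_v = c_v(D)+\calN(0,L^2\sigma^2 I_p)$, where $c_v(D)=\sum_{i\in\mathrm{leaves}(v)}\nabla_i$ and the per-node noises are independent. If $D'$ replaces record $d_{t^*}$ by $\nul$, then since $\nabla_\theta\ell(\theta;\nul)=\mathbf 0$ only $\nabla_{t^*}$ changes, and by at most $L$ in $\ell_2$; hence $c_v$ changes only at the $\le m:=\lceil\lg(n+1)\rceil$ nodes on the root-to-leaf path of $t^*$, each by $\ell_2$-norm at most $L$. For an oblivious (non-adaptive) stream the whole tree release is therefore a single Gaussian mechanism of $\ell_2$-sensitivity $L\sqrt m$ with noise scale $L\sigma$, which is $(\alpha,\alpha m/(2\sigma^2))$-RDP. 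To cover an \emph{adaptively} chosen stream (each $d_t$ a function of $\theta_1,\dots,\theta_t$), I would instead emit the node values in order of subtree-completion time: a node is released once all of its leaves have arrived, and one verifies that every node needed to form $\theta_1,\dots,\theta_i$ is released strictly before any node whose subtree's largest leaf index is $\ge i$. Under this ordering the release is a bona fide adaptive composition of one Gaussian mechanism per node — the $v$-th answering $c_v$ with sensitivity $\le L$ when $t^*$ lies below $v$ and $0$ otherwise — so adaptive RDP composition again gives $(\alpha,\alpha m/(2\sigma^2))$-RDP, and post-processing yields the first claim for Algorithm~\ref{Alg:PFTRL} and its momentum variant with $m=\lceil\lg(n+1)\rceil$.

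For the $(\epsilon,\delta)$ statement I would feed $\epsilon_{\mathrm{RDP}}(\alpha)=\alpha\lceil\lg(n+1)\rceil/(2\sigma^2)$ into the RDP-to-$(\epsilon,\delta)$ conversion $\epsilon_{\mathrm{DP}}=\epsilon_{\mathrm{RDP}}(\alpha)+\ln(1/\delta)/(\alpha-1)$, set $\sigma=\sqrt{2\lceil\lg(n+1)\rceil\ln(1/\delta)}/\epsilon$, choose the order $\alpha-1=2\ln(1/\delta)/\epsilon$ so the second term is $\epsilon/2$, and use the hypothesis $\epsilon\le 2\ln(1/\delta)$ to control the residual $\lceil\lg(n+1)\rceil/(2\sigma^2)=\epsilon^2/(4\ln(1/\delta))$ term; this gives $(\epsilon,\delta)$-DP (exactly, using the tighter conversion of~\cite{canonne2020discrete} that the experiments rely on, and up to a small constant with the elementary one).

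I expect the one genuinely delicate point to be the adaptivity argument: one must argue that the fresh noise on the $\le m$ affected nodes still hides the single-record change even though those node values are themselves revealed (through the $\theta_t$'s) and thereby influence which data the adversary supplies later. The subtree-completion ordering breaks this apparent circularity, but spelling it out against the $\init/\addt/\gett$ interface is the part that needs care; the sensitivity bookkeeping, the Gaussian-mechanism RDP, and the constant-chasing in the conversion are all routine.
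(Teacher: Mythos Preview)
Your proposal is correct and follows essentially the same skeleton as the paper's proof: reduce to the tree release via post-processing, observe that a single record touches at most $\lceil\lg(n+1)\rceil$ nodes each with $\ell_2$-sensitivity $L$, and invoke the Gaussian mechanism together with adaptive RDP composition. The paper's proof is in fact only a sketch that cites these ingredients; your subtree-completion ordering and the verification that all nodes feeding into $\theta_1,\ldots,\theta_i$ are released before any node containing leaf $i$ is exactly the kind of detail the paper elides with the phrase ``adaptive RDP composition'' (and its reference to \cite{thakurta2013nearly} for the adaptive tree-aggregation guarantee), so you have made explicit what the paper leaves to the reader.
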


\mypar{DP-FTRL's memory footprint as compared to DP-SGD} 
At any given iteration, the cost of computing the mini-batch gradients is exactly the same for both DP-FTRL and DP-SGD.
The only difference between the memory usage of DP-FTRL as compared to DP-SGD is that DP-FTRL needs to keep track of worst-case $(\log_2(t)+2)$ past gradient/noise information vectors (in $\R^p$) for iteration $t$. 
Note that these are precomputed objects that can be stored in memory.

\subsection{Comparing the Noise Added by DP-SGD with privacy amplification, and DP-FTRL}
\label{sec:equivDP-GD}

In this section, we 
use the equivalence of non-private SGD and FTRL~\cite{mcmahan17survey} to establish equivalence between DP-SGD with privacy amplification (a variant of noisy-SGD) and DP-FTRL in the unconstrained case, i.e., $\calC=\mathbb{R}^p$. We further compare them based on the noise variance added at the same privacy level of $(\epsilon,\delta)$-DP. For the brevity of presentation, we only make the comparison in the setting with $\epsilon=O(1)$.

Let $D = \{d_1,\dots,d_n\}$ be the data set of size $n$.
Consider a general noisy-SGD algorithm with update rule
\begin{equation}\label{eq:unconstrained-sgd}
\theta_{t+1}\leftarrow \theta_t-\eta\cdot\left(\nabla_\theta\ell\left(\theta_t;d_t\right)+\bolda_t\right),
\end{equation}
where $\eta$ is the learning rate and $\bolda_t$ is some random noise.
DP-SGD~\emph{with privacy amplification}, that achieves $(\epsilon, \delta)$-DP, can be viewed as a special case, where $d_t$ is sampled u.a.r. from $D$, and $\bolda_t$ is drawn i.i.d. from $\mathcal{N}\left(0,O\left(\frac{L^2\ln(1/\delta) }{n\epsilon^2}\right)\right)$~\cite{BST14}. If we expand the recursive relation, we can see that the total amount of noise added to the estimation of $\theta_{t+1}$ is 
$\eta\sum\limits_{i=1}^t\bolda_i = \mathcal{N}\left(0, O\left(\frac{\eta^2 L^2 t\cdot\ln(1/\delta)}{n\epsilon^2}\right)\right)$. 

For DP-FTRL, define $\boldb_0=0$, and let $\boldb_t$ be the noise added by the tree-aggregation algorithm at time step $t$ of Algorithm $\aftrl$.
We can show that DP-FTRL, that achieves $(\epsilon,\delta)$-DP, is equivalent to \eqref{eq:unconstrained-sgd}, where
i) the noise $\bolda_t=\boldb_t-\boldb_{t-1}$, 
ii) the data samples $d_t$'s are drawn in sequence from $D$, 
and iii) the learning rate $\eta$ is set to be $\frac{1}{\lambda}$, where $\lambda$ is the regularization parameter in Algorithm $\aftrl$. 
In this variant of noisy SGD, the total noise added to the model is 
$\eta\sum\limits_{i=1}^t\bolda_i = \eta\boldb_t = 
\mathcal{N}\left(0, O\left(\frac{\eta^2 L^2\cdot\ln(1/\delta)\cdot\ln^2(n)}{\epsilon^2}\right)\right)$. The variance of the noise in $\boldb_t$ follows from the following two facts: i) Theorem~\ref{thm:privFTRL} provides the explicit noise variance ($L^2\sigma^2$) to be added to ensure $(\epsilon,\delta)$-differential privacy to the tree-aggregation scheme in Algorithm~\ref{Alg:PFTRL} (Algorithm $\aftrl$), and ii) The $\gett(\tree,t)$ operation in Algorithm $\aftrl$, only requires $O(\ln(n))$ nodes of the binary tree in the tree-aggregation scheme.

Under the same form of the update rule, we can roughly 
(as the noise is not independent in the DP-FTRL case) compare the two algorithms. 
When $t=\Omega(n)$, the noise of DP-SGD \emph{with privacy amplification} matches that of DP-FTRL up to factor of ${\sf polylog}\left(n\right)$. As a result, we expect (and as corroborated by the population risk guarantees) sampled DP-SGD and DP-FTRL to perform similarly. (In Appendix~\ref{app:equivDP-GD} we provide a formal equivalence.) 

It is worth noting that the above calculation overestimates the variance of $\boldb_t$ used for DP-FTRL in the above analysis. If we look carefully at the tree-aggregation scheme, it should be evident that $\boldb_t\sim\mathcal{N}\left(0, O\left(\frac{\eta^2 L^2\cdot\ln(1/\delta)\cdot\ln(n)\cdot \nu}{\epsilon^2}\right)\right)$, where $\nu\in\left[\lceil\lg(n+1)\rceil\right]$ is the number of ones in the binary representation of $t\in[n]$. Furthermore, the variance is reduced by a factor of $\approx\sqrt{2}$ by using techniques from~\cite{honaker2015efficient}. 
Because of these, and due to the fact that privacy amplification by sampling~\cite{KLNRS} is most effective at smaller values of $\epsilon$, in our experiments we see that DP-FTRL is competitive to DP-SGD with amplification, even when there is a ${\sf polylog}(n)$ gap in our analytical noise variance computation.

\section{Regret and Population Risk Guarantees}
\label{sec:highProbabRegAndPopRisk}

In this section we consider the setting when loss function $\ell$ is convex in its first parameter, and provide for DP-FTRL: i) Adversarial regret guarantees for general convex losses, ii) Tighter stochastic regret guarantees for least-squares and linear losses, and iii) Population risk guarantees via online-to-batch conversion. All our guarantees are high-probability over the randomness of the algorithm, i.e., w.p. at least $1-\beta$, the error only depends on ${\sf polylog}(1/\beta)$.

\subsection{Adversarial Regret for (Composite) Losses}
\label{sec:advReg}

The theorem here gives a regret guarantee for Algorithm~\ref{Alg:PFTRL} against a \emph{fully adaptive}~\cite{shalev2011online} adversary who chooses the loss function $\ell(\theta;d_t)$ based on $[\theta_1,\ldots,\theta_{t}]$, but without knowing the internal randomness of the algorithm. See Appendix~\ref{app:regFTRL} for a more general version of Theorem~\ref{thm:regFTRL}, and its proof.

\begin{thm}[Regret guarantee]
Let $\theta$ be any model in $\calC$,
$[\theta_1,\ldots,\theta_n]$ be the outputs of Algorithm $\aftrl$ (Algorithm~\ref{Alg:PFTRL}),
and let $L$ be a bound on the $\ell_2$-Lipschitz constant of the loss functions. 
Setting $\lambda$ optimally and plugging in the noise scale $\sigma$ from Theorem~\ref{thm:privFTRL} to ensure $(\epsilon,\delta)$-differential privacy, we have that for any $\theta^*\in\calC$, w.p. at least $1-\beta$ over the randomness of $\aftrl$, the regret 
{\small
\begin{align*}
& R_D(\aftrl;\theta^*)= \\
& \qquad O\left(L\ltwo{\theta^*}\cdot\left(\frac{1}{\sqrt n}+\sqrt\frac{p^{1/2}\ln^{2}(1/\delta)\ln(1/\beta)}{\epsilon n}\right)\right).
\end{align*}}
\label{thm:regFTRL}
\end{thm}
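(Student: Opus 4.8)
The plan is to run the classical FTRL regret analysis, treating Algorithm~\ref{Alg:PFTRL} as ordinary noiseless FTRL on the true gradient stream and charging the tree-aggregation noise as a perturbation of the iterates. Since the problem formulation assumes $\ltwo{\nabla_\theta\ell(\theta;d)}\le L$ for all $\theta\in\calC$ and $d$, the clip in Line~4 is inactive, so $\nabla_t=\nabla_\theta\ell(\theta_t;d_t)$, and convexity of $\ell(\cdot;d_t)$ gives $\sum_{t=1}^n\bracket{\ell(\theta_t;d_t)-\ell(\theta^*;d_t)}\le\sum_{t=1}^n\ip{\nabla_t}{\theta_t-\theta^*}$; it therefore suffices to bound this linearized regret.

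I would introduce the ``clean'' FTRL iterates $\hat\theta_{t+1}=\argmin_{\theta\in\calC}\ip{\sum_{i=1}^t\nabla_i}{\theta}+\frac\lambda2\ltwo{\theta}^2$ (so $\hat\theta_1=\theta_1$) and split $\ip{\nabla_t}{\theta_t-\theta^*}=\ip{\nabla_t}{\hat\theta_t-\theta^*}+\ip{\nabla_t}{\theta_t-\hat\theta_t}$. The first sum is handled by the standard FTRL / be-the-leader bound for a $\lambda$-strongly-convex regularizer: $\sum_t\ip{\nabla_t}{\hat\theta_t-\theta^*}\le\frac\lambda2\ltwo{\theta^*}^2+\frac1{2\lambda}\sum_t\ltwo{\nabla_t}^2\le\frac\lambda2\ltwo{\theta^*}^2+\frac{nL^2}{2\lambda}$. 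The second sum is the ``cost of privacy'': $\theta_t$ and $\hat\theta_t$ minimize two $\lambda$-strongly-convex objectives differing only by the linear term $\ip{\boldb_{t-1}}{\cdot}$, where $\boldb_{t-1}$ is the tree-aggregation noise added to the prefix-sum estimate $\bolds_{t-1}$ (with $\boldb_0=\boldzero$), so the minimizer-stability lemma gives $\ltwo{\theta_t-\hat\theta_t}\le\ltwo{\boldb_{t-1}}/\lambda$ (in the constrained case using that the FTRL solution is the Euclidean projection of $-\frac1\lambda\bolds_{t-1}$ onto $\calC$, which is $1$-Lipschitz); hence $\sum_t\ip{\nabla_t}{\theta_t-\hat\theta_t}\le\frac L\lambda\sum_{t=1}^n\ltwo{\boldb_{t-1}}$.

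Next I would invoke the high-probability bound on the tree noise stated just before Theorem~\ref{thm:privFTRL}: with probability at least $1-\beta$, simultaneously for all $t\in[n]$, $\ltwo{\boldb_t}\le L\sigma\sqrt{p\lceil\lg n\rceil\ln(n/\beta)}=:B$ (each $\boldb_t$ is a fixed zero-mean Gaussian with per-coordinate variance at most $\lceil\lg(n+1)\rceil L^2\sigma^2$ that does not depend on the adaptively chosen data, and the $\ln(n/\beta)$ absorbs a union bound over the $n$ steps). On that event, combining the two bounds, dividing by $n$, and substituting $\sigma=\sqrt{2\lceil\lg(n+1)\rceil\ln(1/\delta)}/\epsilon$ from Theorem~\ref{thm:privFTRL} gives $R_D(\aftrl;\theta^*)\le\frac{\lambda\ltwo{\theta^*}^2}{2n}+\frac{L^2}{2\lambda}+\frac{LB}{\lambda}$; choosing $\lambda$ to balance the two sides (i.e.\ $\lambda\asymp\sqrt{nL(L+B)}/\ltwo{\theta^*}$) yields a bound of order $L\ltwo{\theta^*}\bracket{\frac1{\sqrt n}+\sqrt{B/(nL)}}$, and plugging in $B$ and simplifying the polylogarithmic factors gives the stated $O\bracket{L\ltwo{\theta^*}\bracket{\frac1{\sqrt n}+\sqrt{p^{1/2}\,\pl(n)\ln(1/\delta)\ln(1/\beta)/(\epsilon n)}}}$.

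The step I expect to be the main obstacle is this perturbation accounting: proving and then correctly using the iterate-stability bound $\ltwo{\theta_t-\hat\theta_t}\le\ltwo{\boldb_{t-1}}/\lambda$ uniformly over $t$ in the fully adaptive setting, so that a regret bound that holds pathwise for any gradient sequence (even one chosen adversarially in response to the realized noise) can be intersected with a data-independent high-probability event on the Gaussian tree noise; and making the be-the-leader term and the perturbation term combine with matched constants so that optimizing $\lambda$ delivers the $p^{1/4}/\sqrt{\epsilon n}$ rate rather than a cruder $\sqrt{p}/(\epsilon\sqrt n)$. The composite-loss refinement (an extra additive $\frac1n\sum_t r_t(\theta^*)$) and the unconstrained case $\calC=\R^p$ follow the same template and are covered by the more general appendix statement.
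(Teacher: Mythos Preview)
Your proposal is correct and follows essentially the same approach as the paper's proof: introduce the noiseless FTRL iterates $\thetash_t$, split the linearized regret into the standard FTRL term on $\thetash_t$ plus a perturbation term $\sum_t\ip{\nabla_t}{\theta_t-\thetash_t}$, control the latter via the minimizer-stability bound $\ltwo{\theta_t-\thetash_t}\le\ltwo{\boldb_{t-1}}/\lambda$ (the paper invokes this as Lemma~7 of \cite{mcmahan17survey}, your projection-Lipschitz argument is the same fact), apply the uniform high-probability Gaussian tail bound on $\ltwo{\boldb_t}$, and optimize $\lambda$. Your concern about the fully adaptive setting is exactly the right one, and the paper resolves it the way you anticipate: the tree noise is data-independent, so the high-probability event is fixed before the adversary acts.
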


\mypar{Extension to composite losses} Composite losses~\cite{duchi2010composite,mcmahan2011follow,mcmahan17survey} refer to the setting where in each round, the algorithm is provided with a function $f_t(\theta)=\ell(\theta;d_t)+r_t(\theta)$ with $r_t:\calC\to\mathbb{R}^+$ being a convex regularizer that does not depend on the data sample $d_t$. The $\ell_1$-regularizer, $r_t(\theta) = \norm{\theta}_1$, is perhaps the most important practical example, playing a critical role in high-dimensional statistics (e.g., in the LASSO method)~\cite{bhlmann11high}, as well as for applications like click-through-rate (CTR) prediction where very sparse models are needed for efficiency~\cite{mcmahan2013ad}. In order to operate on composite losses, we simply replace Line~7 of Algorithm $\aftrl$ with \[
\theta_{t+1}\leftarrow\arg\min\limits_{\theta\in\calC}\ip{\bolds_t}{\theta}+{\sum\limits_{i=1}^t r_i(\theta)}+\frac{\lambda}{2}\|\theta\|^2_2,\]
which can be solved in closed form in many important cases such as $\ell_1$ regularization. We obtain Corollary~\ref{cor:regFTRLComp}, analogous to \citep[Theorem 1]{mcmahan17survey} in the non-private case. We do not require any assumption (e.g., Lipschitzness) on the regularizers beyond convexity since we \emph{only linearize the losses} in Algorithm $\aftrl$. It is worth mentioning that~\cite{thakurta2013nearly} is fundamentally incompatible with this type of guarantee.

\begin{cor}
Let $\theta$ be any model in $\calC$,
$[\theta_1,\ldots,\theta_n]$ be the outputs of Algorithm $\aftrl$ (Algorithm~\ref{Alg:PFTRL}),
and $L$ be a bound on the $\ell_2$-Lipschitz constant of the loss functions. W.p. at least $1-\beta$ over the randomness of the algorithm, for any $\theta^*\in\calC$, assuming $\mathbf{0}\in\calC $, we have: 
$$R_D(\aftrl;\theta^*)  \leq\frac{L\sigma\sqrt{p\lceil\lg n\rceil\ln(n/\beta)} +L^2}{\lambda}
+\frac{\lambda}{2n} \ltwo{\theta^*}^2 + \frac{1}{n}\sum\limits_{t=1}^n r_t(\theta^*). $$

\label{cor:regFTRLComp}
\end{cor}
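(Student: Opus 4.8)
The plan is to run the usual composite, linearized \dpftrl{} regret analysis (in the spirit of \citep[Theorem~1]{mcmahan17survey}, and as carried out in Appendix~\ref{app:regFTRL} for Theorem~\ref{thm:regFTRL}) directly on the iterates of Algorithm~$\aftrl$, while treating the tree-aggregation noise as a per-step perturbation of the linear part of the FTRL objective. First I would reduce to the linearized composite regret: since $L$ bounds the $\ell_2$-Lipschitz constant of each $\ell(\cdot;d_t)$ we have $\ltwo{\nabla_\theta\ell(\theta_t;d_t)}\le L$, so the clip in Line~4 is a no-op and $\nabla_t=\nabla_\theta\ell(\theta_t;d_t)$; convexity of $\ell(\cdot;d_t)$ then gives $\ell(\theta_t;d_t)-\ell(\theta^*;d_t)\le\ip{\nabla_t}{\theta_t-\theta^*}$. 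Writing $\widetilde R:=\sum_{t=1}^n\big(\ip{\nabla_t}{\theta_t}+r_t(\theta_t)-\ip{\nabla_t}{\theta^*}-r_t(\theta^*)\big)$, a one-line rearrangement gives
\[
n\,R_D(\aftrl;\theta^*)\ \le\ \sum_{t=1}^n\ip{\nabla_t}{\theta_t-\theta^*}\ =\ \widetilde R+\sum_{t=1}^n r_t(\theta^*)-\sum_{t=1}^n r_t(\theta_t)\ \le\ \widetilde R+\sum_{t=1}^n r_t(\theta^*),
\]
where the last step drops the nonnegative terms $r_t(\theta_t)\ge0$. So it suffices to prove $\widetilde R\le\frac{\lambda}{2}\ltwo{\theta^*}^2+\frac{n\big(L\sigma\sqrt{p\lceil\lg n\rceil\ln(n/\beta)}+L^2\big)}{\lambda}$ with probability at least $1-\beta$.

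Next I would recast the composite update. With the tree-aggregation output $\bolds_t=\sum_{i\le t}\nabla_i+\boldb_t$ (whose properties are recalled right after Algorithm~\ref{Alg:PFTRL}; set $\boldb_0=\mathbf{0}$), the modified Line~7 is $\theta_{t+1}=\argmin_{\theta\in\calC}\sum_{i\le t}\ip{\nabla_i}{\theta}+\sum_{i\le t}r_i(\theta)+\frac{\lambda}{2}\ltwo{\theta}^2+\ip{\boldb_t}{\theta}$, i.e.\ composite FTRL on the linearized losses $\ip{\nabla_t}{\cdot}$ with convex regularizers $r_t$, a fixed $\lambda$-strongly-convex regularizer $\tfrac{\lambda}{2}\ltwo{\cdot}^2$, and at step $t$ an extra linear perturbation $\ip{\boldb_t}{\cdot}$; since $\mathbf{0}\in\calC$, $\theta_1=\mathbf{0}$. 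I would then invoke the strong-FTRL-lemma / ``be-the-leader'' telescoping used for Theorem~\ref{thm:regFTRL}: bounding $\widetilde R$ by comparator terms of the form $F_n(\theta^*)-F_0(\theta_1)$ (with $F_t$ the objective minimized at step $t{+}1$) plus the per-step stability $\sum_t\big(F_t(\theta_t)-F_t(\theta_{t+1})\big)$. Two points carry the argument: (i) the composite regularizers $r_t$ enter the stability estimate only through differences of the \emph{linear} parts, so they need only be convex (not Lipschitz) --- this is precisely why only linearizing the losses allows Corollary~\ref{cor:regFTRLComp} to drop the Lipschitz assumption on $r_t$; and (ii) adding the linear perturbation $\ip{\boldb_t}{\cdot}$ to a $\lambda$-strongly-convex objective moves its minimizer by at most $\ltwo{\boldb_t}/\lambda$, and a careful telescoping/summation-by-parts keeps the total noise contribution \emph{linear} in $\max_t\ltwo{\boldb_t}$ rather than quadratic.

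Assembling: the comparator term contributes $\tfrac{\lambda}{2}\ltwo{\theta^*}^2$ (using $\theta_1=\mathbf{0}$, so $F_0(\theta_1)=0$), the linearized-loss stability contributes at most $\sum_t\ltwo{\nabla_t}^2/\lambda\le nL^2/\lambda$, and the perturbation contributes $O\!\big(\tfrac{n}{\lambda}\max_t\ltwo{\boldb_t}\big)$. Finally, by the tree-aggregation guarantee stated after Algorithm~\ref{Alg:PFTRL}, with probability at least $1-\beta$ we have $\max_{t\in[n]}\ltwo{\boldb_t}\le L\sigma\sqrt{p\lceil\lg n\rceil\ln(n/\beta)}$; substituting this bound into $\widetilde R$, combining with the reduction above, and dividing by $n$ yields the stated inequality. (If one additionally wants $(\epsilon,\delta)$-DP, set $\sigma$ as in Theorem~\ref{thm:privFTRL}.)

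The main obstacle I anticipate is point (ii): running the composite-FTRL telescoping so that the tree-aggregation noise appears only linearly in $\max_t\ltwo{\boldb_t}$ --- without ever having to control $\ltwo{\theta_{t+1}-\theta_t}$, which the non-Lipschitz $r_t$'s would otherwise obstruct --- while simultaneously keeping the $r_t$ terms out of the stability estimate. This is exactly the place where the ``only linearize the losses'' design of Algorithm~$\aftrl$ (as opposed to the quadratic-approximation approach of \cite{thakurta2013nearly}) is essential, and where the argument departs from the standard non-private composite-FTRL proof.
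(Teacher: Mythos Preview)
Your overall reduction to the linearized composite regret $\widetilde R$ (dropping $r_t(\theta_t)\ge 0$) is fine and matches what the paper needs. However, the route you propose for bounding $\widetilde R$ differs from the paper's, and the paper's route is simpler precisely at your ``main obstacle'' (ii).

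The paper does \emph{not} run a be-the-leader / strong-FTRL telescoping on the noisy objective. Instead (as in the proof of Theorem~\ref{thm:regFTRL}, which the corollary merely extends), it introduces the hypothetical \emph{non-private} composite-FTRL iterates
\[
\thetash_{t+1}=\argmin_{\theta\in\calC}\ \sum_{i\le t}\ip{\nabla_i}{\theta}+\sum_{i\le t}r_i(\theta)+\tfrac{\lambda}{2}\ltwo{\theta}^2,
\]
applies Lemma~\ref{lem:boundDev} (which only needs convexity of the $r_i$, not Lipschitzness) to get $\ltwo{\theta_t-\thetash_t}\le\ltwo{\boldb_{t-1}}/\lambda$, and then decomposes
\[
\sum_t\ip{\nabla_t}{\theta_t-\theta^*}=\underbrace{\sum_t\ip{\nabla_t}{\thetash_t-\theta^*}}_{A}+\underbrace{\sum_t\ip{\nabla_t}{\theta_t-\thetash_t}}_{B}.
\]
Term $B$ is bounded by $\sum_t L\cdot\ltwo{\boldb_{t-1}}/\lambda$ via Cauchy--Schwarz; the $r_t$'s never appear here. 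Term $A$ is the linear part of the \emph{non-private} composite-FTRL regret, so the black-box bound from \cite[Theorem~1]{mcmahan17survey} gives $A\le \tfrac{nL^2}{\lambda}+\tfrac{\lambda}{2}\ltwo{\theta^*}^2+\sum_t r_t(\theta^*)$ after dropping $\sum_t r_t(\thetash_t)\ge 0$. This completely decouples the noise from the composite terms, so there is no telescoping, no summation by parts, and no need to keep ``the $r_t$ terms out of the stability estimate'': they are never in it.

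Your direct be-the-leader analysis of the noisy composite objective could probably be pushed through, but the sketch does not say how the per-step term $F_t(\theta_t)-F_t(\theta_{t+1})$, which contains $r_t(\theta_t)-r_t(\theta_{t+1})$ \emph{and} the increment $\ip{\boldb_t-\boldb_{t-1}}{\cdot}$, is controlled without Lipschitz $r_t$ and without $\ltwo{\theta_t-\theta_{t+1}}$; that is exactly the difficulty the $\thetash_t$-decomposition sidesteps. Also, a minor correction: the proof of Theorem~\ref{thm:regFTRL} in Appendix~\ref{app:regFTRL} does not use be-the-leader telescoping; it already uses the $\thetash_t$-comparison, so ``the strong-FTRL-lemma / be-the-leader telescoping used for Theorem~\ref{thm:regFTRL}'' is a mischaracterization.
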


\subsection{Stochastic Regret for Least-squared Losses}
\label{sec:lssq}

In this setting, for each data sample $d_i=(\boldx_i,y_i)$ (with $\boldx_i\in\mathbb{R}^p$ and $y_i\in\mathbb{R}$) in the data set $D=\{d_1,\ldots,d_n\}$, the corresponding loss takes the least-squares form\footnote{A similar argument as in Theorem~\ref{thm:stocReg-ls-exist} can be used in the setting where the loss functions are linear,  $\ell(\theta;d)=\ip{\theta}{d}$ with $d\in\mathbb{R}^p$ and $\ltwo{d}\leq L$.}:
$\ell(\theta;d_i)=(y_i-\ip{\boldx_i}{\theta})^2$.  
We also assume that each data sample $d_i$ is drawn i.i.d. from some fixed distribution $\tau$. 

A straightforward modification of DP-FTRL, $\aftrlls$ (Algorithm \ref{Alg:PFTRL-ls} in Appendix~\ref{app:lssq}), achieves the following.

\begin{thm}[Stochastic regret for least-squared losses]
Let $D=\{(\boldx_1,y_1),\ldots,(\boldx_n,y_n)\}\in\calD^n$ be a data set drawn i.i.d. from $\tau$, let $L=\max\limits_{\boldx\in\calD}\ltwo{\boldx}$, and let $\max\limits_{y\sim\calD}|y|\leq 1$. Let $\theta^* \in \calC$, $\mu=\max\limits_{\theta\in\calC}\ltwo{\theta}$, and $\rho=\max\{\mu,\mu^2\}$.
Then $\aftrlls$ provides $(\epsilon,\delta)$-differentially privacy while outputting $[\theta_1,\ldots,\theta_n]$ s.t. w.p. at least $1-\beta$ for any $\theta^*\in\calC$, $\mathbb{E}_D\left[R_D(\aftrlls;\theta^*)\right]=$

$O \left(L^2\rho^2\left(\sqrt\frac{\ln(n)}{n}+\frac{\sqrt{p\ln^5(n/\beta)\cdot\ln(1/\delta)}}{\epsilon n}\right)\right).$

\label{thm:stocReg-ls-exist}
\end{thm}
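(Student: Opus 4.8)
The plan is to exploit that, for least‑squares, the cumulative loss is a quadratic determined by two low‑order sufficient statistics, so that — unlike the general convex case (Algorithm~\ref{Alg:PFTRL}) — $\aftrlls$ never linearizes. Concretely, it runs two tree‑aggregation structures in parallel: $\treeb$ tracking the vector $\boldp_t=\sum_{i\le t}y_i\boldx_i\in\R^p$, and $\treeq$ tracking the matrix $\boldM_t=\sum_{i\le t}\boldx_i\boldx_i^\top$, so that $\sum_{i\le t}\ell(\theta;d_i)=\bigl(\sum_{i\le t}y_i^2\bigr)-2\ip{\boldp_t}{\theta}+\theta^\top\boldM_t\theta$. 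At step $t$ it reads noisy estimates $\widetilde\boldp_t=\boldp_t+\boldu_t$ and $\widetilde\boldM_t=\boldM_t+\boldW_t$ ($\boldW_t$ a symmetric Gaussian), and outputs $\theta_{t+1}=\argmin_{\theta\in\calC}\bigl(-2\ip{\widetilde\boldp_t}{\theta}+\theta^\top[\widetilde\boldM_t]_+\theta+\tfrac{\lambda}{2}\ltwo{\theta}^2\bigr)$, where $[\cdot]_+$ projects onto the PSD cone so the objective stays convex (and, since $\boldM_t\succeq0$, this distorts the Hessian by at most $2\|\boldW_t\|_{\mathrm{op}}$).

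\emph{Privacy.} Replacing a record $d_i=(\boldx_i,y_i)$ by $\nul$ changes $\boldp_t$ by a vector of $\ell_2$‑norm at most $|y_i|\,\ltwo{\boldx_i}\le L$ and changes $\boldM_t$ by a matrix of Frobenius norm at most $\ltwo{\boldx_i}^2\le L^2$ (using the hypotheses $|y_i|\le1$, $\ltwo{\boldx_i}\le L$); each record appears in at most $\lceil\lg(n+1)\rceil$ nodes of each tree. Calibrating per‑node Gaussian noise to these sensitivities (relative scale $\sigma=\bigO{\sqrt{\lceil\lg(n+1)\rceil\ln(1/\delta)}/\epsilon}$, i.e.\ standard deviation $L\sigma$ for $\treeb$ and $L^2\sigma$ for $\treeq$) and composing over the two trees' nodes via RDP yields $(\alpha,O(\alpha\lceil\lg(n+1)\rceil/\sigma^2))$‑RDP and hence $(\epsilon,\delta)$‑DP, exactly as in Theorem~\ref{thm:privFTRL} up to the constant for two‑tree composition; as in \cite{thakurta2013nearly}, adaptivity of the stream is harmless since tree aggregation is private against adaptively chosen leaves.

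\emph{Regret.} Decompose $n\,R_D(\aftrlls;\theta^*)=\underbrace{\sum_t\bigl(\ell(\thetanp_t;d_t)-\ell(\theta^*;d_t)\bigr)}_{(A)}+\underbrace{\sum_t\bigl(\ell(\theta_t;d_t)-\ell(\thetanp_t;d_t)\bigr)}_{(B)}$, where $\thetanp_t$ is the FTRL iterate run on the exact statistics $\boldp_{t-1},\boldM_{t-1}$. Term $(A)$ is the non‑private FTRL regret on quadratic (hence smooth) losses: a be‑the‑leader argument bounds it by $\tfrac{\lambda}{2}\mu^2+\sum_t G\,\ltwo{\thetanp_t-\thetanp_{t+1}}$, and $\lambda$‑strong convexity of the cumulative ridge objective gives $\ltwo{\thetanp_t-\thetanp_{t+1}}\le G/\lambda$ with $G=\bigO{L\max\{1,L\mu\}}$ a Lipschitz bound on $\ell(\cdot;d)$ over $\calC$; choosing $\lambda$ to balance $\lambda\mu^2$ against $nG^2/\lambda$ gives $(A)=\bigO{L^2\rho^2\sqrt{n\ln n}}$. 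For $(B)$, note $\ell(\cdot;d_t)$ depends on $\theta$ only through $\ip{\boldx_t}{\theta}$, so $|\ell(\theta_t;d_t)-\ell(\thetanp_t;d_t)|\le 2(1+L\mu)\,|\ip{\boldx_t}{\theta_t-\thetanp_t}|$; comparing the two step‑$t$ problems (objectives differing by $-2\ip{\boldu_{t-1}}{\theta}$ and a Hessian perturbation of norm $\le 2\|\boldW_{t-1}\|_{\mathrm{op}}$) and using that for i.i.d.\ data $\boldM_{t-1}$ concentrates around $(t-1)\,\E_{\boldx\sim\tau}[\boldx\boldx^\top]$ (matrix Bernstein, union‑bounded over $t$, costing $\mathrm{polylog}(n/\beta)$), one bounds the minimizer perturbation in the $\E[\boldx\boldx^\top]$‑seminorm by $\bigO{(\ltwo{\boldu_{t-1}}+\mu\|\boldW_{t-1}\|_{\mathrm{op}})/\sqrt{t\lambda}}$ via the eigenvalue‑wise inequality $\sqrt{\nu}/((t-1)\nu+\lambda)\le 1/(2\sqrt{(t-1)\lambda})$. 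With $\ltwo{\boldu_{t-1}}=\bigO{L\sigma\sqrt{p\lceil\lg n\rceil\ln(n/\beta)}}$ and $\|\boldW_{t-1}\|_{\mathrm{op}}=\bigO{L^2\sigma\sqrt{p\lceil\lg n\rceil\ln(n/\beta)}}$, summing over $t$, inserting the $\lambda$ from $(A)$, and substituting $\sigma$ gives $(B)=\bigO{L^2\rho^2\sqrt{p\ln^5(n/\beta)\ln(1/\delta)}/\epsilon}$; dividing by $n$ yields the claim.

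\emph{Main obstacle.} The delicate point is $(B)$ when $\E_{\boldx\sim\tau}[\boldx\boldx^\top]$ is rank‑deficient or ill‑conditioned, so that the data curvature does not accumulate in some directions and one cannot simply divide the noise by $(t-1)\cdot(\text{eigenvalue})$. The resolution is a spectral dichotomy: components of $\theta_t-\thetanp_t$ along (near‑)kernel directions of $\E[\boldx\boldx^\top]$ are essentially invisible to $\ell(\cdot;d_t)$ and interact only benignly with $\calC$ and the ridge term, while on the well‑conditioned block the $\approx 1/t$ attenuation kicks in after an $O(\mathrm{polylog})$ burn‑in whose own contribution to the regret is itself $O(\mathrm{polylog}\cdot L^2\rho^2)$; the AM–GM inequality above is exactly what removes the cross‑over threshold from the final rate. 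Making this quantitative, keeping the matrix‑Bernstein concentration and the tree‑noise tail bounds uniform over all $t\in[n]$ at only polylogarithmic cost, and tracking the $L,\mu,\rho$ dependence, is where the bulk of the work lies.
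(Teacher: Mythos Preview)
Your decomposition $n\,R_D = (A) + (B)$ with $(A)$ the adversarial FTRL regret of the non-private iterates and $(B)$ the per-step loss difference cannot deliver the claimed $\sqrt{p}/(\epsilon n)$ privacy term. Trace your own arithmetic: with $\|\Delta_t\|_\Sigma \lesssim \text{noise}/\sqrt{t\lambda}$ (which is what your AM--GM inequality $\sqrt{\nu}/((t-1)\nu+\lambda)\le 1/(2\sqrt{(t-1)\lambda})$ actually gives), you get
\[
(B)/n \;\lesssim\; (1+L\mu)\cdot\text{noise}\cdot\frac{1}{n}\sum_{t}\frac{1}{\sqrt{t\lambda}}
\;=\; O\!\left(\frac{(1+L\mu)\,\text{noise}}{\sqrt{n\lambda}}\right).
\]
With the $\lambda\asymp\sqrt{n}$ that balances $(A)$, this is $O(\text{noise}/n^{3/4})$, not $O(\text{noise}/n)$; your assertion that ``$(B)=O(L^2\rho^2\sqrt{p\ln^5(n/\beta)\ln(1/\delta)}/\epsilon)$'' silently drops an $n^{1/4}$. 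Pushing $\lambda$ larger to shrink $(B)$ blows up $(A)$: no choice of $\lambda$ in your three-term tradeoff $\lambda\mu^2/n + G^2/\lambda + \text{noise}/\sqrt{n\lambda}$ recovers the additive $\sqrt{p}/(\epsilon n)$ rate. The spectral dichotomy you flag as the ``main obstacle'' is therefore a red herring; the structural problem is the decomposition itself.

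The paper takes a different route that avoids this barrier. It does \emph{not} split off a non-private regret term. Instead it observes that $\theta_{t+1}$ is an \emph{approximate} minimizer of the regularized empirical objective $\npJ(\theta)=\sum_{i\le t}\ell(\theta;d_i)+\tfrac{\lambda}{2}\ltwo{\theta}^2$, with function-value suboptimality $\npJ(\theta_{t+1})-\npJ(\thetash_{t+1})=O((\ltwo{\boldb_t}+\mu\ltwo{B_t})^2/\lambda)=O(\text{noise}^2/\lambda)$ by strong convexity. It then invokes Theorem~2 of \cite{SSSS09} (stability of regularized ERM) to bound the \emph{population} risk of $\theta_{t+1}$ directly: the empirical suboptimality enters with a $1/t$ prefactor, giving a per-step excess of $O(\text{noise}^2/(\lambda t) + G^2/\lambda + (\lambda/t)\ltwo{\theta^*}^2)$. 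Since $d_t$ is independent of $\theta_t$, the expected instantaneous regret equals the population excess; summing over $t$ and optimizing $\lambda$ then yields $O(\sqrt{\ln n/n}+\sqrt{p\cdot\mathrm{polylog}}/(\epsilon n))$. No matrix concentration, no spectral splitting, and no assumption on the conditioning of $\E[\boldx\boldx^\top]$ are needed: the $1/t$ factor comes for free from stability, and that is precisely what your $(A)+(B)$ route cannot reproduce.
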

The arguments of~\cite{agarwal2017price} can be extended to show a similar regret guarantee \emph{in expectation only}, whereas ours is a high-probability guarantee.

\subsection{Excess Risk via Online-to-Batch Conversion}
\label{sec:popRisk}

Using the online-to-batch conversion~\cite{cesa2002generalization,SSSS09}, from Theorem~\ref{thm:regFTRL}, we can obtain a population risk guarantee\\ $O\left(\left(\sqrt\frac{{\ln(1/\beta)}}{n}+\sqrt{\frac{p^{1/2}\ln^{2}(1/\delta)\ln(1/\beta)}{\epsilon n}}\right)\right)$, where $\beta$ is the failure probability. (See Appendix~\ref{app:pop} for a formal statement.) 
For least squares and linear losses, using the regret guarantee in Theorem~\ref{thm:stocReg-ls-exist} and online-to-batch conversion, one can actually achieve the optimal population risk (up to logarithic factors) $O\left(\sqrt\frac{\ln(n)\ln(1/\beta)}{n}+\frac{\sqrt{p\ln^5(n/\beta)\cdot\ln(1/\delta)}}{\epsilon n}\right)$.

\section{Practical Extensions}
\label{sec:practical_variants}

In this section we consider two practical extensions to Algorithm~\ref{Alg:PFTRL} that are important for real-world use, and are considered in our empirical evaluations.  %

\mypar{Minibatch DP-FTRL} So far, for simplicity we have focused on DP-FTRL with model updates corresponding to new gradient from a single sample. However, in practice, instead of computing the gradient on a single data sample $d_t$ at time step $t$, we will estimate gradient over a batch
$M_t=\left\{d^{(1)}_t,\ldots,d^{(q)}_t\right\}$ as
$\nabla_t=\frac{1}{q}\sum\limits_{i=1}^q\clip{\nabla_\theta\ell\left(\theta_t;d^{(1)}_t\right)}{L}$. This immediately implies the number of steps per epoch to be $\lceil n/q\rceil$. Furthermore, since the $\ell_2$-sensitivity in each batch gets scaled down to $\frac{L}{q}$ instead of $L$ (as in Algorithm $\aftrl$). We take the above two observations into consideration in our privacy accounting.

\mypar{Multiple participations} While Algorithm~\ref{Alg:PFTRL} is stated for a single epoch of training, i.e., where each sample in the data set is used once for obtaining a gradient update, there can be situations where $E > 1$ epochs of training are preferred. We consider three algorithm variants that support this:

\begin{itemize}

    \item \Restart: The simplest approach, discussed in detail \cref{sec:DP-FTRL-TR}, is to simply use separate trees for each epoch, and compose the privacy costs using strong composition.
    
    \item \NoRestart: This approach, considered in detail in \cref{sec:DP-FTRL-NTR}, allows a single aggregation tree to span multiple epochs (possibly even processing data in an arbitrary order as long as each $d_i$ occurs in at most $E$ steps). This requires a more nuanced privacy analysis, as the same training example occurs in multiple leaf nodes, and further can influence interior nodes multiple times, increasing the sensitivity.
    
    \item \SomeRestart: One can combine the above ideas, which can yield improved privacy/utility tradeoffs. For example (as in the experiments of  \cref{sec:interleaving}), one can perform 100 epochs of training, resetting the tree every 20 epochs, using the analysis for \NoRestart within each group of 20 epochs with a shared aggregation tree, and then combining these 5 blocks via strong composition as in \cref{sec:DP-FTRL-TR}. This approach is discussed in depth in \cref{sec:DP-FTRL-SometimesRestart}.

\end{itemize}

\section{Empirical Evaluation}
\label{sec:empEval}
We provide an empirical evaluation of DP-FTRL on four benchmark data sets, and compare its performance with the state-of-the-art  DP-SGD on three axes: (1) \textbf{Privacy}, measured as an $(\epsilon, \delta)$-DP guarantee on the mechanism, 
(2) \textbf{Utility}, measured as (expected) test set accuracy for the trained model under the DP guarantee,
and (3) \textbf{Computation cost}, which we measure in terms of mini-batch size
and number of training iterations. The code is open sourced\footnote{\url{https://github.com/google-research/federated/tree/master/dp_ftrl} for FL experiments, and \url{https://github.com/google-research/DP-FTRL} for centralized learning.}.
First, we evaluate the privacy/utility trade-offs provided by each technique at fixed computation costs. 
Second, we evaluate the privacy/computation trade-offs each technique can provide at fixed utility targets.
A natural application for this is distributed frameworks such as FL, where the privacy budget and a desired utility threshold can be fixed, and the goal is to satisfy both constraints with the least computation. 
Computational cost is of critical importance in FL, as it can get challenging to find available clients with increasing mini-batch size and/or number of training rounds. 
We show the following results: 
(1) DP-FTRL provides superior privacy/utility trade-offs than unamplified DP-SGD, 
(2) For a modest increase in computation cost, DP-FTRL (that does not use any privacy amplification) can match the privacy/utility trade-offs of amplified DP-SGD for all privacy regimes, and further
(3) For regimes with large privacy budgets, DP-FTRL achieves higher accuracy than amplified DP-SGD even at the same computation cost, 
(4) For realistic data set sizes, DP-FTRL can provide superior privacy/computation trade-offs compared to DP-SGD.

\subsection{Experimental Setup}
\label{sec:exp_setup}
\mypar{Datasets} We conduct our evaluation on three image classification tasks, MNIST \cite{lecun1998gradient}, CIFAR-10~\cite{cifar10}, EMNIST (ByMerge split) \cite{cohen_afshar_tapson_schaik_2017}; and a next word prediction task on StackOverflow data set~\cite{so_data}.
Since StackOverflow is naturally keyed by users, we assume training in a federated learning setting, i.e., using the Federated Averaging optimizer for training over users in StackOverflow. The privacy guarantee is thus user-level, in contrast to the example-level privacy for the other three datasets (see Definition \ref{def:diiffP}).

For all experiments with DP, we set the privacy parameter $\delta$ to $10^{-5}$ on MNIST and CIFAR-10, and $10^{-6}$ on EMNIST and StackOverflow, s.t. $\delta < n^{-1}$, where $n$ is the number of users in StackOverflow (or examples in the other data sets).

\mypar{Model Architectures} For all the image classification tasks, we use small convolutional neural networks as in prior work \cite{papernot2020tempered}. 
For StackOverflow, we use the one-layer LSTM network described in \citep{reddi2020adaptive}.
See Appendix~\ref{app:mod_archs} for more details.

\mypar{Optimizers}
We consider DP-FTRL with mini-batch model updates, and multiple epochs. 
In the centralized training experiments, we use \Restart in this section with a small number of epochs. In Appendix~\ref{sec:interleaving}, we provides more results for \SomeRestart for a larger number of epochs. For StackOverflow, we always use \Restart and there are less than five restarts even for 1000 clients per round due to the large population. We provide a privacy analysis for both approaches in Appendix~\ref{app:multipass}. We also consider the momentum variant DP-FTRLM, and find that DP-FTRLM with momentum $0.9$ always outperforms DP-FTRL. 
Similarly, for DP-SGD~\cite{TFpriv}, we consider its momentum variant (DP-SGDM), and report the best-performing variant in each task.
See Appendix~\ref{app:mom_comp} for a comparison of the two optimizers for both techniques.

\begin{figure*}[ht]
\centering
\begin{subfigure}[b]{0.9\textwidth}
\centering
\includegraphics[width=\textwidth]{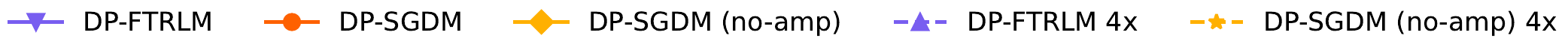}
\end{subfigure}

\begin{subfigure}[b]{0.3\textwidth}
\centering
\includegraphics[width=\textwidth]{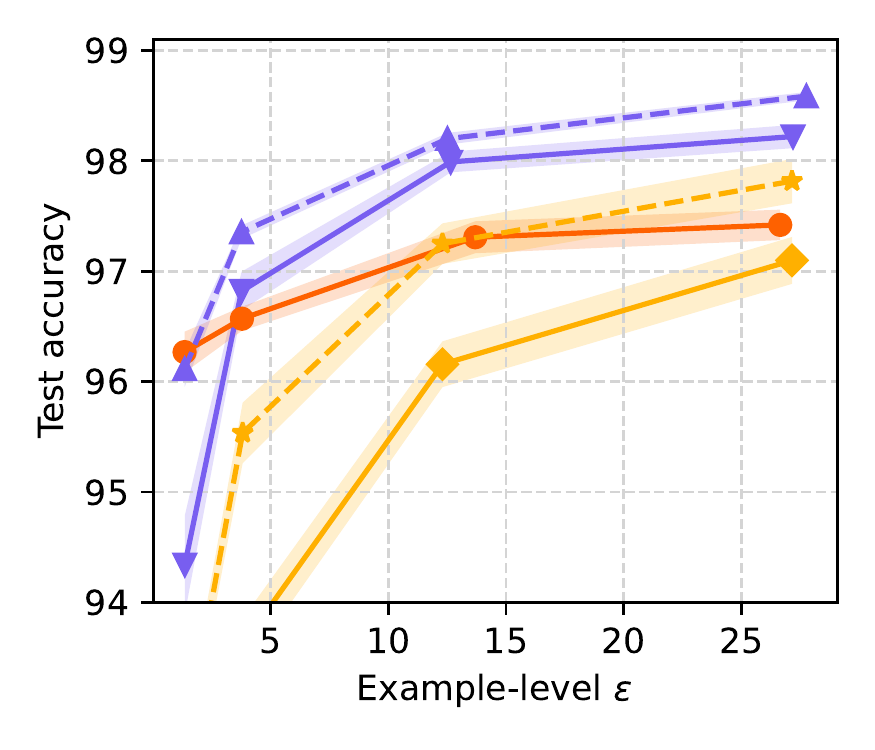}
\caption{MNIST}
\label{fig:acc_privacy_mnist}
\end{subfigure}
\begin{subfigure}[b]{0.3\textwidth}
\centering
\includegraphics[width=\textwidth]{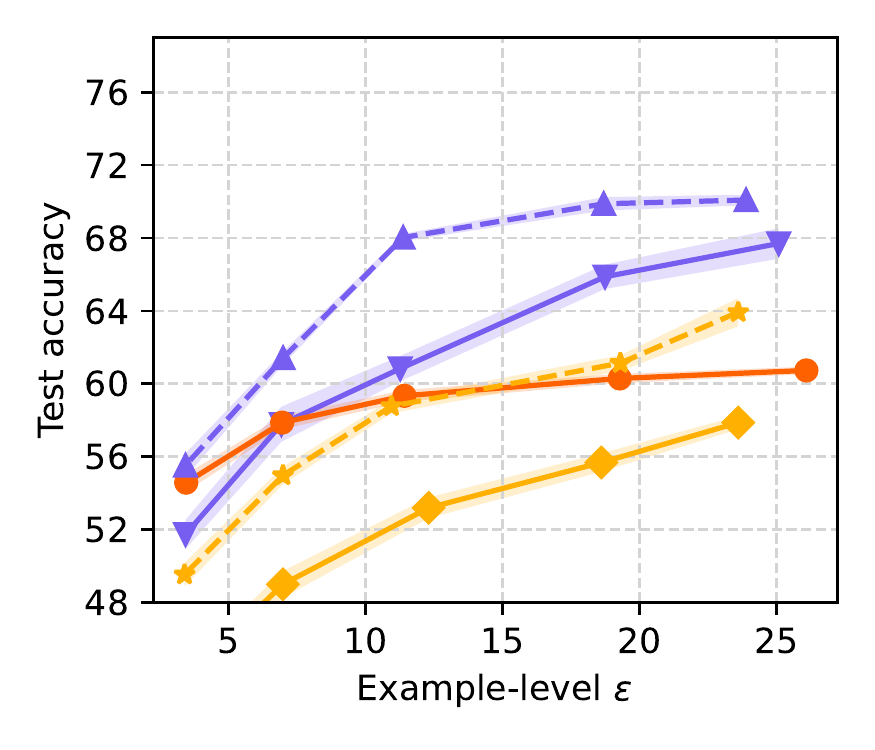}
\caption{CIFAR-10}
\label{fig:acc_privacy_cifar}
\end{subfigure}
\begin{subfigure}[b]{0.3\textwidth}
\centering
\includegraphics[width=\textwidth]{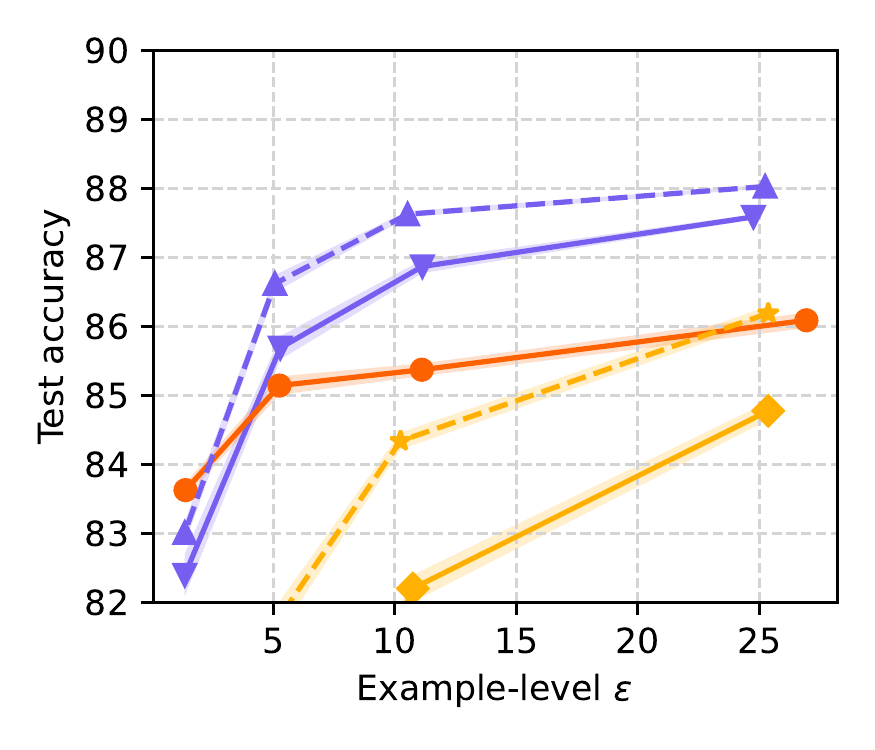}
\caption{EMNIST (ByMerge)}
\label{fig:acc_privacy_emnist}
\end{subfigure}
\caption{Privacy/accuracy trade-offs for DP-SGD (private baseline), DP-SGD without amplification (label ``DP-SGD (no-amp)"), and DP-FTRLM on MNIST (mini-batch size 250), CIFAR-10 (mini-batch size 500), and EMNIST (mini-batch size 500). ``4x" in the label denotes four times computation cost (by increasing batch size four times). All algorithms use 5 epochs of training for the smaller batch size and 20 epochs for the larger batch size, handled by \Restart for FTRL.
}
\label{fig:acc_privacy}
\end{figure*}

\subsection{Privacy/Utility Trade-offs with Fixed Computation}
\label{sec:privTarget}

In Figure~\ref{fig:acc_privacy}, we show accuracy / privacy tradeoffs (by varying the noise multiplier) at fixed computation costs. 
Since both DP-FTRL and DP-SGD require clipping gradients from each sample and adding noise to the aggregated update in each iteration, we consider the number of iterations and the minibatch size as a proxy for computation cost. For each experiment, we run five independent trials, and plot the mean and standard deviation of the final test accuracy at different privacy levels.
We provide details of hyperparameter tuning for all the techniques in Appendix~\ref{app:hyp_tun_privTarget}.

DP-SGD is the state-of-the-art technique used for private deep learning, and amplification by subsampling (or shuffling) forms a crucial component in its privacy analysis.
Thus, we take amplified DP-SGD (or its momentum variant when performance is better) at a fixed computation cost as our baseline. 
We fix the (samples in mini-batch, training iterations) to (250, 1200) for MNIST, (500, 500) for CIFAR-10, and (500, 6975) for EMNIST. These number of steps correspond to $5$ epochs for the smaller batch size and $20$ epochs for the larger batch size.
Our goal is to achieve equal or better tradeoffs \emph{without relying on any privacy amplification}.
As has been mentioned before, we use the \Restart variant of DP-FTRL in this section. Additionally, we make use of a trick where we add additional nodes to the aggregation tree to ensure we can use the root as a low-variance estimate of the total gradient sum for each epoch; details are given in Appendix~\ref{sec:tree completion}.
The privacy computation follows from Appendix~\ref{sec:privacy given order} and~\ref{sec:tree completion}.

DP-SGD without any privacy amplification (``DP-SGD (no-amp)") cannot achieve this:
For all the data sets, the accuracy with DP-SGD (no-amp) at the highest $\eps$ in Figure~\ref{fig:acc_privacy} is worse than the accuracy of the DP-SGD baseline even at its lowest $\eps$.
Further, if we increase the computation by four times (increasing the mini-batch size by four times), the privacy/utility trade-offs of ``DP-SGD (no-amp) 4x" are still substantially worse than the private baseline.

For DP-FTRLM at the same computation cost as our DP-SGD baseline, as the privacy parameter $\eps$ increases, the relative performance of DP-FTRLM improves for each data set, even outperforming the baseline for larger values of $\eps$. Further, if we increase the batch size by four times for DP-FTRLM, its privacy-utility trade-off almost always matches or outperforms the amplified DP-SGD baseline, affirmatively answering this paper's primary question.
In particular, for CIFAR-10 (Figure~\ref{fig:acc_privacy_cifar}), ``DP-FTRLM 4x'' provides superior performance than the DP-SGD even for the lowest $\eps$. 

The number of epochs used here is relatively small. We chose to consider this setting as the advantage of DP-FTRL is more significant in such regime. In Appendix~\ref{sec:interleaving}, we consider running $100$ epochs on CIFAR-10 and $50$ epochs on EMNIST using the \SomeRestart variant. The results demonstrate similar trends, except that the ``cross-over'' point of $\eps$ after which DP-FTRL outperforms DP-SGD shifts right (but is still $< 15$).

We observe similar results for StackOverflow with user-level DP in Figure~\ref{fig:acc_privacy_stackoverflow}.  We fix the computation cost to 100 clients per round (also referred to as the report goal), and $1600$ training rounds. 
DP-SGDM (or more precisely in this case, DP-FedAvg with server momentum) is our baseline.
For DP-SGDM without privacy amplification (DP-SGDM no-amp), the privacy/accuracy trade-off never matches that of the DP-SGDM baseline, and gets significantly worse for lower $\eps$.
With a 4x increase in report goal, DP-SGDM no-amp nearly matches the privacy/utility trade-off of the DP-SGD baseline, outperforming it for larger $\eps$.

For DP-FTRLM, with the same computation cost as the DP-SGDM baseline, it outperforms the baseline for the larger $\eps$, whereas for the four-times increased report goal, it provides a strictly better privacy/utility trade-off.
We conclude DP-FTRL provides superior privacy/utility trade-offs than unamplified DP-SGD, and for a modest increase in computation cost, it can match the performance of DP-SGD, without the need for privacy amplification.

\begin{figure*}[htb]
\centering
\begin{subfigure}[b]{0.33\textwidth}
\centering
\includegraphics[width=\textwidth]{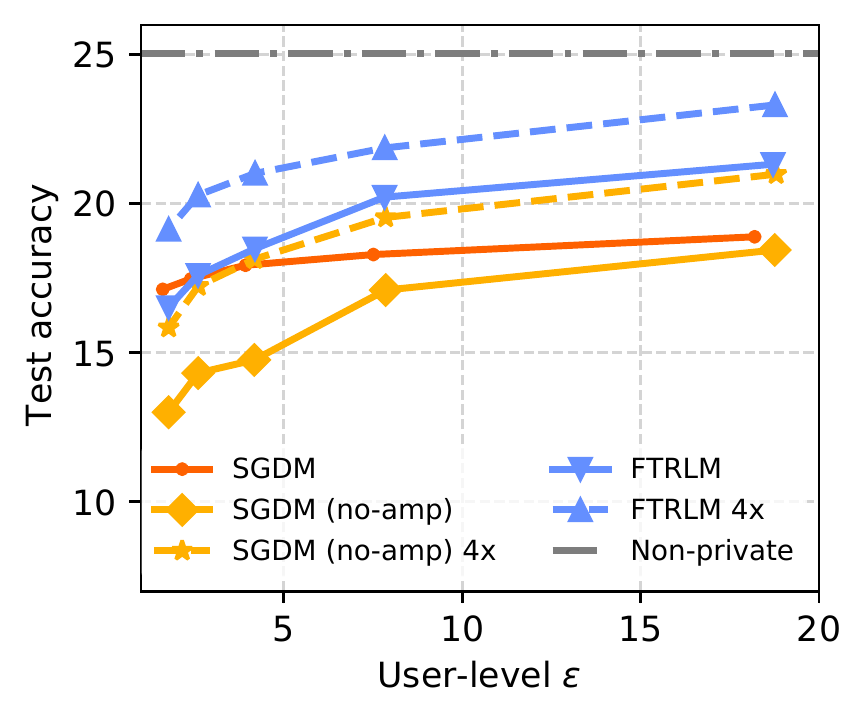}
\caption{}
\label{fig:acc_privacy_stackoverflow}
\end{subfigure}
\begin{subfigure}[b]{0.33\textwidth}
\centering
\includegraphics[width=\textwidth]{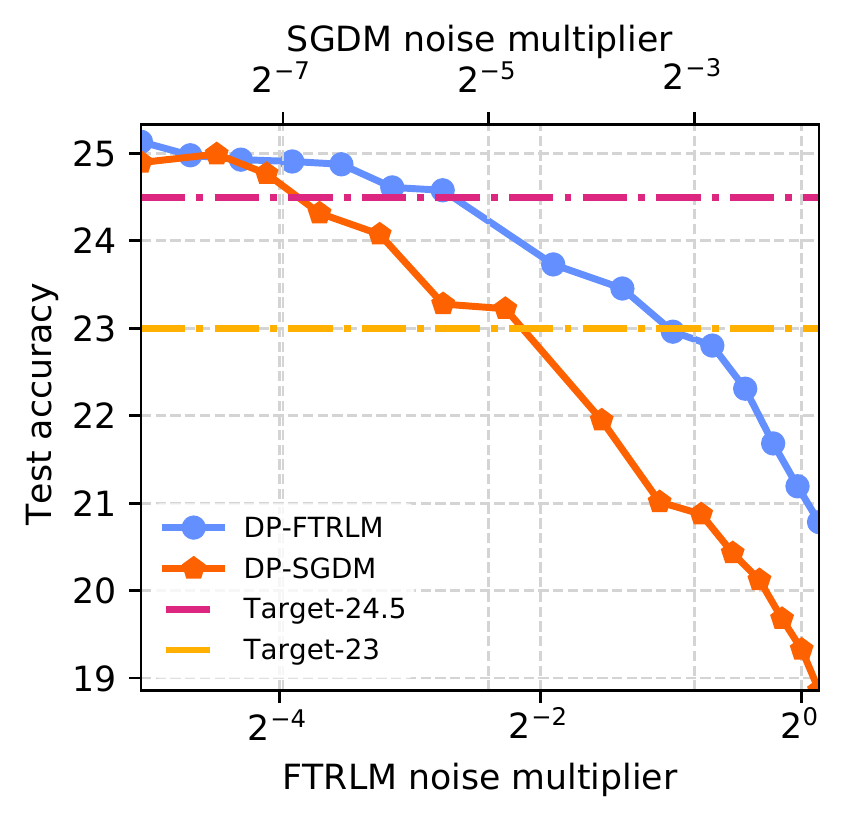}
\caption{}
\label{fig:fl-utility-noise}
\end{subfigure}
\begin{subfigure}[b]{0.33\textwidth}
\centering
\includegraphics[width=\textwidth]{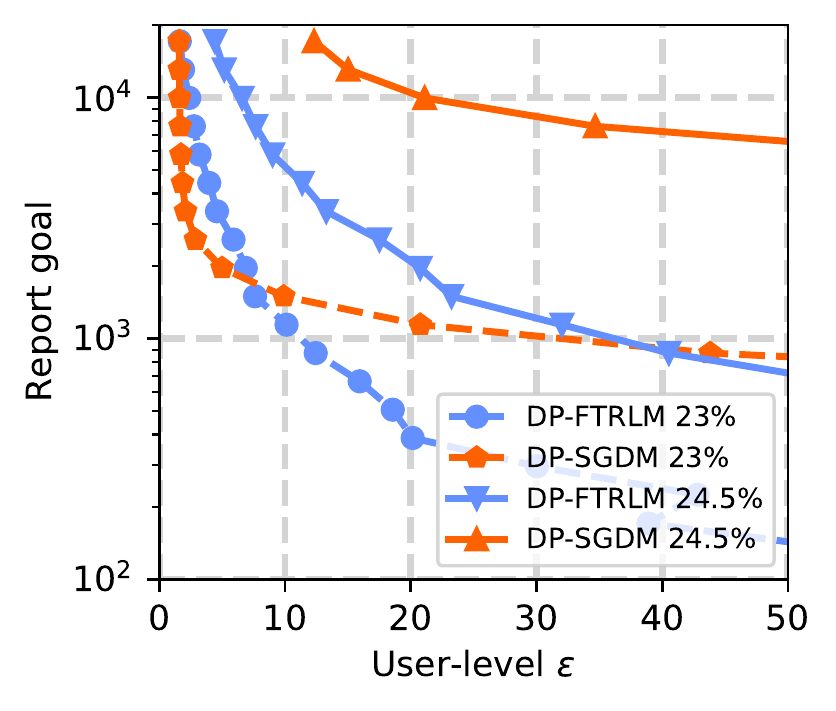}
\caption{}
\label{fig:fl-utility-population-real}
\end{subfigure}
\caption{(a) Accuracy on StackOverflow under different privacy epsilon by varying noise multiplier and batch sizes.
(b) Test accuracy of DP-SGDM and DP-FTRLM with various noise multipliers for StackOverflow.
(c) Relationship between user-level privacy $\eps$ (when $\delta\approx\nicefrac{1}{\text{population}}$) and computation cost (report goal) for two fixed accuracy targets (see legend) on the StackOverflow data set.
}
\label{fig:fl_results}
\end{figure*}

\subsection{Privacy/Computation Trade-offs with Fixed Utility}
\label{sec:utilTarget}
For a sufficiently large data set / population, better privacy vs. accuracy trade-offs can essentially always be achieved at the cost of increased computation. Thus, in this section we slice the privacy/utility/computation space by fixing utility (accuracy) targets, and evaluating how much computation (report goal) is necessary to achieve different $\eps$ for StackOverflow.
Our non-private baseline achieves an accuracy of 25.15\%, and we fix 24.5\% (2.6\% relative loss) and 23\% (8.6\% relative loss) as our accuracy targets. Note that from the accuracy-privacy trade-offs presented in Figure~\ref{fig:acc_privacy_stackoverflow}, achieving even 23\% for either DP-SGD or DP-FTRL will result in a large $\eps$ for the considered report goals.

For each target, we tune hyperparameters (see Appendix~\ref{app:hyp_tun_utilTarget} for details) for both DP-SGDM and DP-FTRLM at a fixed computation cost to obtain the maximum noise scale for each technique while ensuring the trained models meet the accuracy target. Specifically, we fix a report goal of 100 clients per round for 1600 training rounds, and tune DP-SGD and DP-FTRL for 15 noise multipliers, ranging from $(0, 0.3)$ for DP-SGD, and $(0, 1.13)$ for DP-FTRL. At this report goal, for noise multiplier $0.3$, DP-SGD provides $\sim19\%$ accuracy at $\eps\sim18.2$, whereas for noise multiplier $1.13$ DP-FTRL provides $\sim21\%$ accuracy at $\eps\sim18.7$. 
We provide the results in Figure~\ref{fig:fl-utility-noise}. 

For each target accuracy, we choose the largest noise multiplier for each technique that results in the trained model achieving the accuracy target. For accuracies (23\%, 24.5\%), we select noise multipliers (0.035, 0.007) for DP-SGDM, and (0.387, 0.149) for DP-FTRLM, respectively.  
This data allows us to evaluate the privacy/computation trade-offs for both techniques, assuming the accuracy stays constant as we scale up the noise and report goal together (maintaining a constant signal-to-noise ratio while improving $\eps$). This assumption was introduced and validated by \cite{mcmahan2017learning}, which showed that keeping the clipping norm bound, training rounds, and the scale of the noise added to the model update constant, increasing the report goal does not change the final model accuracy.  %

We plot the results in Figure~\ref{fig:fl-utility-population-real}.
We see that for utility target 24.5\% and $\delta=10^{-6}$, DP-FTRLM achieves any privacy $\eps \in (0, 50)$ at a lower computational cost than DP-SGDM. 
For utility target 23\%, we observe the same behavior for $\eps > 8.8$.

\section{Conclusion}
\label{sec:conclusion}

In this paper we introduce the DP-FTRL algorithm, which we show to have the tightest known regret guarantees under DP, and have the best known excess population risk guarantees for a single pass algorithm on non-smooth convex losses. 
For linear and least-squared losses, we show DP-FTRL actually achieves the optimal population risk.
Furthermore, we show on benchmark data sets that DP-FTRL, which does not rely on any privacy amplification, can outperform amplified DP-SGD at large values of $\eps$, and be competitive to it for all ranges of $\epsilon$ for a modest increase in computation cost (batch size).
This work leaves two main open questions: i) Can DP-FTRL achieve the optimal excess population risk for all convex losses in a single pass?, and ii) Can one tighten the empirical gap between DP-SGD and DP-FTRL at smaller values of $\epsilon$, possibly via a better estimator of the gradient sums from the tree data structure?
\section*{Acknowledgements}

We would specially thank Thomas Steinke for providing us with dynamic programming based privacy accounting scheme (and its associated proof) for \NoRestart. We would also like to thank Adam Smith for suggesting the use of~\cite{honaker2015efficient} for variance reduction, Vinith Suriyakumar for noticing an error in a reported empirical result, and Yin-Tat Lee for independently finding the privacy accounting bug in multi-pass  \NoRestart. We would additionally like to thank Borja Balle and Satyen Kale for the helpful discussions through the course of this project. 

\bibliography{reference}
\bibliographystyle{plainnat}

\ifsupp
\appendix
\section{Other Related Work}
\label{sec:related}

Differentially private empirical risk minimization (ERM) and private online learning are  well-studied areas in the privacy literature~\cite{chaudhuri2011differentially,kifer2012private,JKT-online,thakurta2013nearly,song2013stochastic,BST14,jain2014near,DP-DL,mcmahan2017learning, WLKCJN17,agarwal2017price,abernethy2017online,bassily2019private,iyengar2019towards, pichapati2019adaclip,TAB19,feldman2019private,papernot2020tempered}\footnote{This is only a small representative subset of the literature.}. The connection between private ERM and private online learning was first explored in~\cite{JKT-online}, and the idea of using stability induced by differential privacy for designing low-regret algorithms was explored in~\cite{kalai2005efficient,agarwal2017price,abernethy2017online}. To the best of our knowledge, this paper for the first time explores the idea using a purely online learning algorithm for training deep learning models, without relying on any stochasticity in the data for privacy.
\section{Missing Details from Section~\ref{sec:privateFTRL}}
\label{sec:privFTRL}

\subsection{Details of the Tree Aggregation Scheme}
\label{app:tree}

In this section we provide the formal details of the tree aggregation scheme used in Algorithm~\ref{Alg:PFTRL} (Algorithm $\aftrl$).

\begin{enumerate}
\item $\init(n,\sigma^2,L)$: Initialize a complete binary tree $\tree$ with $2^{\lceil\lg(n)\rceil}$ leaf nodes, with each node being sampled i.i.d. from $\calN(0,L^2\sigma^2\cdot\mathbb{I}_{p\times p})$.
\item $\addt(\tree,t,\boldv)$: Add $\boldv$ to all the nodes along the path to the root of $\tree$, starting from $t$-th leaf node.
\item $\gett(\tree,t)$: Let $[\node_1,\ldots,\node_{h}]$ be the list of nodes from the root of $\tree$ to the $t$-th leaf node, with $\node_1$ being the root node and $\node_h$ being the leaf node.  
\begin{enumerate}
    \item Initialize $\bolds\leftarrow {\bf 0}^p$ and convert $t$ to binary in $h$ bit representation $[b_1,\ldots,b_h]$, with $b_1$ being the most significant bit.
    \item For each $j\in[h]$, if $b_j=1$, then add the value in left sibling of $\node_j$ to $\bolds$. Here if $\node_j$ is the left child, then it is treated as its own left sibling.
    \item Return $\bolds$.
\end{enumerate}
\end{enumerate}

\mypar{Incorporating the iterative estimator from~\cite{honaker2015efficient}} Here, we state a variant of the $\gett(\tree,t)$ function (called $\gettrv(\tree,t)$) based on the variance reduction technique used in~\cite{honaker2015efficient}. The main idea is as follows: In the estimator for $\gett(\tree,t)$ above, each $\node_j$ refers to a noisy/private estimate of all the nodes in the sub-tree of $\tree$ rooted at $\node_j$. Notice that one can obtain independent estimates of the same, with one for each level of the sub-tree rooted at $\node_j$, by summing up the nodes at the corresponding level. Of course, the variance of each of these estimates will be different.~\cite{honaker2015efficient} provided an estimator to combine these independent estimates in order to lower the overall variance in the final estimate. In the following, we provide the formal description of $\gettrv(\tree,t)$. The text {\color{blue}colored in blue} is the only difference from $\gett(\tree,t)$. The recurrent updating rule in \cref{eq:tree_recur} only use the nodes ``below'' the current node to reduce the variance \cite{honaker2015efficient} as we can not access the future gradients for a streaming algorithm. In practice, the value of the left node of a sub-tree $\boldr'_{[x:y]}$ is stored in the worst-case $\log_2(t)+1$ memory and only the right node will be recursively calculated on the fly. 

\begin{enumerate}
    \setcounter{enumi}{2}
    \item   $\gettrv(\tree,t)$: Let $[\node_1,\ldots,\node_{h}]$ be the list of nodes from the root of $\tree$ to the $t$-th leaf node, with $\node_1$ being the root node and $\node_h$ being the leaf node. 
    \begin{enumerate}
    \item Initialize $\bolds\leftarrow {\bf 0}^p$ and convert $t$ to binary in $h$ bit representation $[b_1,\ldots,b_h]$, with $b_1$ being the most significant bit.
        {\color{blue}\item For each $j\in[h]$, if $b_j=1$, then do the following.
        \begin{enumerate}
            \item Indexing the leaf nodes $1, 2, \dots$, for any two leaf node indices $\leftn\leq\rightn$, let 
            $r_{\leftn:\rightn}\leftarrow $ value in $\tree$ corresponding to the least common ancestor of $\leftn$ and $\rightn$.
            \item For the sub-tree rooted at the left sibling of $\node_j$ (or $\node_j$ itself if it is the left child), let $[a:b]$ be the indices of the leaf nodes in this subtree of $\tree$.
            \item Estimate $\bolds_{[x:z]}$ representing the sum of the values in leaf nodes $x$ through $z$ recursively as follows:
            \begin{equation}\label{eq:tree_recur}\bolds_{[x:z]}\leftarrow\frac{\boldr'_{[x:z]}}{2-(x-z+1)^{-1}},\ \ \text{where}\ \  \boldr'_{[x:z]}\leftarrow\boldr_{[x:z]}+\frac{\boldr'_{[x:y]}+\boldr'_{[y+1:z]}}{2} \text{ and $y=\lfloor(x+z)/2\rfloor$},\end{equation}
            with base case $\boldr'_{[x:x]} = \boldr_{[x:x]}$, which is simply the value at leaf $x$.
            \item Add $\bolds_{[a:b]}$ to $\bolds$.
        \end{enumerate}}
        \item Return $\bolds$.
    \end{enumerate}
\end{enumerate}

\subsection{Proof of Theorem~\ref{thm:privFTRL}}
\label{app:privFTRL}

\begin{proof}
Notice that in Algorithm~\ref{Alg:PFTRL}, all accesses to private information is only through the tree data structure $\calT$. Hence, to prove the privacy guarantee, it is sufficient to show that for any data set $V=\{\boldv_1,\ldots,\boldv_n\}$ (with each $\ltwo{\boldv_i}\leq L$), the operations on the tree data structure (i.e., the $\init$, $\addt$, $\gett$)  provide the privacy guarantees in the Theorem statement. First, notice that each $\boldv_i$ affects at most $\lceil\lg(n+1)\rceil$ nodes in the tree $\calT$. Additionally, notice that the computation in each node of the tree $\calT$ is essentially a summation query. With these two observations, one can use standard properties of Gaussian mechanism~\cite{ODO},\citep[Corollary 3]{mironov2017renyi}, and adaptive RDP composition~\citep[Proposition1]{mironov2017renyi} to complete the proof. 

While the original work on tree aggregation~\cite{Dwork-continual,CSS11-continual} did not use either Gaussian mechanism or RDP composition, it is not hard to observe that the translation to the current setting is immediate.
\end{proof}

\subsection{Missing details from Section~\ref{sec:equivDP-GD} (Comparing Noise in DP-SGD (with amplification) and DP-FTRL)}
\label{app:equivDP-GD}

\begin{thm}
Consider data set $D=\{d_1,\ldots,d_n\}$, model space $\calC=\mathbb{R}^p$ and initial model $\theta_0={\mathbf{0}}^p$.
For $t\in[n]$, 
let the update of Noisy-SGD be $\theta^{\npsgd}_{t+1}\leftarrow \theta_t-\eta\cdot\left(\nabla_\theta\ell\left(\theta_t^{\npsgd};d_t\right)+\bolda_t\right)$, where $\bolda_t$'s are noise random variables. 
Let the DP-FTRL (Algorithm~\ref{Alg:PFTRL}) updates be $\theta^{\dpftrl}_{t+1}\leftarrow \argmin\limits_{\theta\in\mathbb{R}^p}\sum\limits_{i=1}^t \nabla_\theta\ell\left(\theta^{\dpftrl}_i;d_i\right)+\ip{\boldb_t}{\theta}+\frac{1}{2\eta}\ltwo{\theta}^2$, where $\boldb_t$'s are the noises added by the tree-aggregation mechanism. 

If we instantiate $\bolda_t=\boldb_t-\boldb_{t-1}$, and $\eta=\frac{1}{\lambda}$, then for all $t\in[n]$, $\theta^{\npsgd}_t=\theta^{\dpftrl}_t$.
\label{thm:eqvDPSGD}
\end{thm}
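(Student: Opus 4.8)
The plan is to prove the identity $\theta^{\npsgd}_t = \theta^{\dpftrl}_t$ by induction on $t$, after first rewriting the DP-FTRL update in closed form. Since $\calC = \mathbb{R}^p$, the minimization defining $\theta^{\dpftrl}_{t+1}$ is an unconstrained quadratic in $\theta$, so setting the gradient to zero gives
\begin{equation*}
\theta^{\dpftrl}_{t+1} = -\eta\left(\sum_{i=1}^t \nabla_\theta\ell(\theta^{\dpftrl}_i; d_i) + \boldb_t\right),
\end{equation*}
using $\eta = 1/\lambda$; and similarly $\theta^{\dpftrl}_1 = \theta_0 = \mathbf{0}^p$ (the empty-sum case, with $\boldb_0 = 0$ as stipulated in Section~\ref{sec:equivDP-GD}). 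On the Noisy-SGD side, unrolling the recursion $\theta^{\npsgd}_{t+1} = \theta^{\npsgd}_t - \eta(\nabla_\theta\ell(\theta^{\npsgd}_t; d_t) + \bolda_t)$ from $\theta^{\npsgd}_0 = \mathbf{0}^p$ gives $\theta^{\npsgd}_{t+1} = -\eta\sum_{i=1}^t(\nabla_\theta\ell(\theta^{\npsgd}_i; d_i) + \bolda_i)$.

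The induction itself: the base case $t=1$ is immediate since both equal $\mathbf{0}^p$. For the inductive step, assume $\theta^{\npsgd}_j = \theta^{\dpftrl}_j$ for all $j \le t$. Then the gradients agree termwise, $\nabla_\theta\ell(\theta^{\npsgd}_i; d_i) = \nabla_\theta\ell(\theta^{\dpftrl}_i; d_i)$ for $i \le t$, so it remains only to check that the noise terms match, i.e. that $\sum_{i=1}^t \bolda_i = \boldb_t$. This is where the telescoping hypothesis $\bolda_i = \boldb_i - \boldb_{i-1}$ enters: $\sum_{i=1}^t(\boldb_i - \boldb_{i-1}) = \boldb_t - \boldb_0 = \boldb_t$. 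Plugging this into the unrolled Noisy-SGD expression yields exactly the closed form for $\theta^{\dpftrl}_{t+1}$ derived above, completing the step.

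There is essentially no hard part here — the statement is an algebraic equivalence, and the only things to be careful about are bookkeeping: that the data samples $d_i$ are fed in the same sequence to both algorithms (which is part of the theorem's setup), that $\boldb_0 = 0$ is the right convention so the telescoping closes cleanly, and that the unconstrained minimizer is correctly computed (the quadratic $\ip{\bolds_t}{\theta} + \frac{\lambda}{2}\ltwo{\theta}^2$, or here with $\bolds_t$ replaced by $\sum_i \nabla_\theta\ell(\theta^{\dpftrl}_i;d_i) + \boldb_t$, is strictly convex so the critical point is the unique global minimum). If anything needs emphasis, it is simply noting that the FTRL update as written in Theorem~\ref{thm:eqvDPSGD} uses the linearized-loss formulation consistent with Algorithm~\ref{Alg:PFTRL}, so that $\bolds_t$ there equals $\sum_{i=1}^t \nabla_\theta\ell(\theta^{\dpftrl}_i; d_i) + \boldb_t$ with the gradients evaluated at the previously output models.
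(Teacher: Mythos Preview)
Your proof is correct and follows essentially the same approach as the paper: both unroll Noisy-SGD, write DP-FTRL's unconstrained argmin in closed form, and match the two via the telescoping identity $\sum_{i=1}^t \bolda_i = \boldb_t$. Your explicit induction on $t$ is in fact slightly more careful than the paper's argument, which leaves implicit the point that the gradients $\nabla_\theta\ell(\theta_i;d_i)$ on the two sides agree only once the earlier iterates have been shown equal.
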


\begin{proof}
Consider the non-private SGD and FTRL. Recall that the SGD update is $\theta^{\sf SGD}_{t+1}\leftarrow \theta^{\sf SGD}_t-\eta\nabla_\theta\ell(\theta^{\sf SGD}_t;d_t)$, where $\eta$ is the learning rate. Opening up the recurrence, we have $\theta^{\sf SGD}_{t+1}\leftarrow\theta_0-\eta\sum\limits_{i=1}^t \nabla_\theta\ell(\theta^{\sf SGD}_i;d_i)$. If $\theta^{\sf SGD}_0=\mathbf{0}^p$, then equivalently $\theta^{\sf SGD}_{t+1}\leftarrow \argmin\limits_{\theta\in\mathbb{R}^p}\ip{\sum\limits_{i=1}^t \nabla_\theta\ell(\theta^{\sf SGD}_i;d_i)}{\theta} + \frac{1}{2\eta}\ltwo{\theta}^2$. 
This is identical to the update rule of the non-private FTRL (i.e., with $\sigma$ set to $0$ in DP-FTRL) with regularization parameter $\lambda$ set to $\frac{1}{\eta}$.

Now we consider the Noisy-SGD and DP-FTRL.
Recall that Noisy-SGD has update rule $\theta^{\npsgd}_{t+1}\leftarrow \theta^{\npsgd}_t-\eta\left(\nabla_\theta\ell(\theta^{\npsgd}_t;d_t)+\bolda_t\right)$, where $\bolda_t$ is the Gaussian noise added at time step $t$. 
Similar as before, this rule can be written as 
\begin{align}\label{eqn:equivalent pf sgd}
\theta^{\npsgd}_{t+1}\leftarrow
\argmin\limits_{\theta\in\mathbb{R}^p}
\ip{\sum\limits_{i=1}^t \nabla_\theta\ell(\theta^{\npsgd}_i;d_i)}{\theta} + \ip{\sum_{i=1}^t \bolda_i}{\theta} + \frac{1}{2\eta}\ltwo{\theta}^2.    
\end{align}
The update rule of DP-FTRL can be written as
\begin{align}\label{eqn:equivalent pf ftrl}
\theta^{\dpftrl}_{t+1}\leftarrow \argmin\limits_{\theta\in\mathbb{R}^p}\ip{\sum\limits_{i=1}^t \nabla_\theta\ell\left(\theta^{\dpftrl}_i;d_i\right)}{\theta}+\ip{\boldb_t}{\theta}+\frac{\lambda}{2}\ltwo{\theta}^2,
\end{align}
where $\boldb_t$ is the noise that gets added by the tree-aggregation mechanism at time step $t+1$. 
If we 
1) set $\lambda = \frac{1}{\eta}$,
2) draw data samples sequentially from $D$ in Noisy-SGD, and
3) set $\bolda_t=\boldb_t-\boldb_{t-1}$ so that $\sum\limits_{i=1}^t \bolda_t=\boldb_t$,
we can establish the equivalence between \eqref{eqn:equivalent pf sgd} and \eqref{eqn:equivalent pf ftrl}.
This completes the proof.
\end{proof}

\section{Missing Details from Section~\ref{sec:highProbabRegAndPopRisk}}
\label{app:highProbabRegAndPopRisk}

\subsection{Proof of Theorem~\ref{thm:regFTRL}}
\label{app:regFTRL}
We first present a more detailed version of Theorem~\ref{thm:regFTRL} and then present its proof.
\begin{thm}[Regret guarantee (Theorem~\ref{thm:regFTRL} in detail)]
Let $[\theta_1,\ldots,\theta_n]$ be the outputs of Algorithm $\aftrl$ (Algorithm~\ref{Alg:PFTRL}), and $L$ be a bound on the $\ell_2$-Lipschitz constant of the loss functions. W.p. at least $1-\beta$ over the randomness of $\aftrl$, the following is true for any $\theta^*\in\calC$.
\begin{align*}
& \frac{1}{n}\sum\limits_{t=1}^n\ell(\theta_t;d_t)-\frac{1}{n}\sum\limits_{t=1}^n\ell(\theta^*;d_t) 
\leq \frac{L\sigma\sqrt{p\lceil\lg n\rceil\ln(n/\beta)} +L^2}{\lambda}+\frac{\lambda}{2n}\left(\ltwo{\theta^*}^2-\ltwo{\theta_1}^2\right)
\end{align*}
Setting $\lambda$ optimally and plugging in the noise scale $\sigma$ from Theorem~\ref{thm:privFTRL} to ensure $(\epsilon,\delta)$-differential privacy, we have
\begin{align*}
& R_D(\aftrl;\theta^*)= O\left(L\ltwo{\theta^*}\cdot\left(\frac{1}{\sqrt n}+\sqrt\frac{p^{1/2}\ln^{2}(1/\delta)\ln(1/\beta)}{\epsilon n}\right)\right).
\end{align*}
\label{thm:regFTRLDet}
\end{thm}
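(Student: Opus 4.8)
The plan is to run the standard FTRL regret argument, viewing Algorithm~\ref{Alg:PFTRL} as FTRL on \emph{linearized} losses whose running gradient sum is additively perturbed by the tree noise, and then to pay separately for that perturbation. First I would reduce to linear losses: since the losses are $L$-Lipschitz, the clip at radius $L$ in Line~4 is vacuous, so $\nabla_t=\nabla_\theta\ell(\theta_t;d_t)$, and convexity gives $\ell(\theta_t;d_t)-\ell(\theta^*;d_t)\le\ip{\nabla_t}{\theta_t-\theta^*}$; hence it suffices to bound $\frac1n\sum_{t=1}^n\ip{\nabla_t}{\theta_t-\theta^*}$. Note that $\theta_{t+1}=\arg\min_{\theta\in\calC}\ip{\bolds_t}{\theta}+\frac\lambda2\ltwo{\theta}^2$ with $\bolds_t=\sum_{i\le t}\nabla_i+\boldb_t$, i.e.\ $\theta_{t+1}$ is exactly the noiseless FTRL iterate on $\{\nabla_i\}_{i\le t}$ with the $\lambda$-strongly-convex regularizer $\frac\lambda2\ltwo{\cdot}^2$, except that the running sum is perturbed by $\boldb_t$. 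The structural fact I would import (stated just after Algorithm~\ref{Alg:PFTRL}, following~\cite{thakurta2013nearly}) is that each $\boldb_t$ is a sum of at most $\lceil\lg n\rceil$ i.i.d.\ $\calN(0,L^2\sigma^2\mathbb{I}_{p\times p})$ vectors, so Gaussian norm concentration together with a union bound over $t\in[n]$ gives $\max_{t\in[n]}\ltwo{\boldb_t}\le L\sigma\sqrt{p\lceil\lg n\rceil\ln(n/\beta)}=:B$ with probability at least $1-\beta$; this event concerns only the internal noise, so it holds even when the losses $d_t$ are chosen adaptively from $\theta_1,\dots,\theta_t$.

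Next I would introduce the \emph{noiseless} (``ghost'') iterate $\bar\theta_t=\arg\min_{\theta\in\calC}\ip{\sum_{i<t}\nabla_i}{\theta}+\frac\lambda2\ltwo{\theta}^2$, which satisfies $\bar\theta_1=\theta_1$, and split $\ip{\nabla_t}{\theta_t-\theta^*}=\ip{\nabla_t}{\bar\theta_t-\theta^*}+\ip{\nabla_t}{\theta_t-\bar\theta_t}$. Summing the first term over $t$ and applying the textbook be-the-leader bound for FTRL with a $\lambda$-strongly-convex regularizer gives $\sum_t\ip{\nabla_t}{\bar\theta_t-\theta^*}\le\frac\lambda2\bracket{\ltwo{\theta^*}^2-\ltwo{\theta_1}^2}+\frac1\lambda\sum_t\ltwo{\nabla_t}^2\le\frac\lambda2\bracket{\ltwo{\theta^*}^2-\ltwo{\theta_1}^2}+\frac{nL^2}{\lambda}$, using $\ltwo{\nabla_t}\le L$. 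For the second term, $\theta_t$ and $\bar\theta_t$ minimize the same $\lambda$-strongly-convex objective up to the linear term $\ip{\boldb_{t-1}}{\cdot}$, so the standard perturbed-minimizer inequality gives $\ltwo{\theta_t-\bar\theta_t}\le\ltwo{\boldb_{t-1}}/\lambda$, whence $\sum_t\ip{\nabla_t}{\theta_t-\bar\theta_t}\le\frac{L}{\lambda}\sum_t\ltwo{\boldb_{t-1}}\le\frac{nLB}{\lambda}$ on the good event. Adding the two pieces and dividing by $n$ yields the first displayed inequality of Theorem~\ref{thm:regFTRLDet}, up to the precise bookkeeping of the $L$-factors, which the paper carries through the final $L\ltwo{\theta^*}$ prefactor.

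Finally I would tune $\lambda$ and plug in $\sigma$. Writing the bound as $R_D(\aftrl;\theta^*)\le\frac{c}{\lambda}+\frac{\lambda}{2n}\ltwo{\theta^*}^2$ with $c=O(B+L^2)$, the choice $\lambda=\sqrt{2nc}/\ltwo{\theta^*}$ gives $R_D(\aftrl;\theta^*)=O\bracket{\ltwo{\theta^*}\sqrt{c/n}}=O\bracket{\ltwo{\theta^*}\bracket{\sqrt{B/n}+L/\sqrt n}}$; substituting $\sigma=\sqrt{2\lceil\lg(n+1)\rceil\ln(1/\delta)}/\epsilon$ from Theorem~\ref{thm:privFTRL} into $B$ and absorbing the $\lceil\lg n\rceil$-type and Lipschitz factors into the $O(\cdot)$ recovers $O\bracket{L\ltwo{\theta^*}\bracket{\frac1{\sqrt n}+\sqrt{\frac{p^{1/2}\ln^{2}(1/\delta)\ln(1/\beta)}{\epsilon n}}}}$. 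The one genuinely delicate point is the handling of the correlated noise in the middle step: the entire point of DP-FTRL is that the running-sum error $\ltwo{\boldb_t}$ stays $O\bracket{L\sigma\sqrt{p\cdot\mathrm{polylog}\,n}}$ for \emph{every} $t$ rather than growing like $\sqrt t\,\sigma$ as independent per-step noise would, and one must exploit this through the ghost-iterate comparison; attempting to charge the regret to the per-step increments $\boldb_t-\boldb_{t-1}$ instead would fail, since consecutive prefix sums can reuse very different sets of tree nodes and a single such increment can already have norm $\Theta(\max_t\ltwo{\boldb_t})$. Everything else — vacuous clipping, the be-the-leader lemma, the perturbed-minimizer inequality, Gaussian concentration, and a union bound — is routine.
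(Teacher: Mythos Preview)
Your proposal is correct and follows essentially the same route as the paper: introduce the noiseless ``ghost'' iterates $\thetash_t$, split $\ip{\nabla_t}{\theta_t-\theta^*}$ into the FTRL term $\ip{\nabla_t}{\thetash_t-\theta^*}$ (bounded via the standard strongly-convex FTRL regret lemma, which the paper cites as~\citep[Theorem~5.2]{hazan2019introduction}) plus the perturbation term $\ip{\nabla_t}{\theta_t-\thetash_t}$ (controlled by the perturbed-minimizer inequality, which the paper invokes as Lemma~7 of~\cite{mcmahan17survey}), apply Gaussian concentration with a union bound over $t$ to control $\max_t\ltwo{\boldb_t}$, and then optimize $\lambda$. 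Your commentary on why one must compare running-sum noises rather than per-step increments is apt and is exactly the reason the ghost-iterate decomposition is the right one here.
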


\begin{proof}
Recall that by Algorithm $\aftrl$, $\theta_{t+1}\leftarrow\argmin\limits_{\theta\in\calC}\underbrace{\sum\limits_{i=1}^t\ip{\nabla_i}{\theta}+\frac{\lambda}{2}\ltwo{\theta}^2+\ip{\bfb_t}{\theta}}_{\pJ(\theta)}$, where the Gaussian noise $\bfb_t=\bolds_t-\sum\limits_{i=1}^t\nabla_i$ for $\bolds_t$ being the output of $\gett(\tree,t)$.
By standard concentration of spherical Gaussians, w.p. at least $1-\beta$, $\forall t\in[n]$, $\ltwo{\bfb_t}\leq L\sigma\sqrt{p\lceil\lg (n)\rceil\ln(n/\beta)}$. We will use this bound to control the error introduced due to privacy.
Now, consider the optimizer of the non-private objective: 
\begin{align*}
\thetash_{t+1} \leftarrow 
\argmin\limits_{\theta\in\calC}\underbrace{
\sum\limits_{i=1}^t\ip{\nabla_i}{\theta}+\frac{\lambda}{2}\ltwo{\theta}^2}_{\npJ(\theta)},
\qquad \text{where } \nabla_t = \nabla \ell(\theta_t;d_t).
\end{align*}
That is, post-hoc we consider the hypothetical application of non-private FTRL to the same sequence of \emph{linearized} loss functions $f_t(\thetash) = \ip{\nabla_t}{\thetash} = \ip{\nabla \ell(\theta_t;d_t)}{\thetash}$ seen in the private training run.
In the following, we will first bound how much the models output by $\aftrl$ deviate from models output by the hypothetical non-private FTRL discussed above. 
Then, we invoke standard regret bound for FTRL, while accounting for the deviation of the models output by $\aftrl$.

To bound $\ltwo{\thetash_{t+1}-\theta_{t+1}}$, we apply Lemma~\ref{lem:boundDev}. 
We set $\phi_1(\theta)=\npJ(\theta)/\lambda$, $\phi_2(\theta)=\pJ(\theta) / \lambda$, and both $\|\cdot\|$ and its dual as the $\ell_2$ norm. 
We thus have $\Psi(\theta) = \ip{\boldb_t   }{\theta}/\lambda$, with $\boldb_t/\lambda$ being its subgradient.
Therefore,
\begin{equation}
    \ltwo{\thetash_{t+1}-\theta_{t+1}}\leq \frac{\ltwo{\bfb_t}}{\lambda}.\label{eq:2}
\end{equation}

\begin{lem}[Lemma 7 from~\cite{mcmahan17survey} restated]
Let $\phi_1:\calC\to\mathbb{R}$ be a convex function (defined over $\calC\subseteq\mathbb{R}^p$) s.t. $\theta_1\in\argmin\limits_{\theta\in\calC }\phi_1(\theta)$ exists. Let $\Psi(\theta)$ be a convex function s.t. $\phi_2(\theta)=\phi_1(\theta)+\Psi(\theta)$ is 1-strongly convex w.r.t. $\|\cdot\|$-norm. Let $\theta_2\in\argmin\limits_{\theta\in\calC}\phi_2(\theta)$. Then for any $\bfb$ in the subgradient of $\Psi$ at $\theta_1$, the following is true:
$\|\theta_1-\theta_2\|_*\leq \|\bfb\|_*$. Here $\|\cdot\|_*$ is the dual-norm of $\|\cdot\|$.
\label{lem:boundDev}
\end{lem}

We can now easily bound the regret. By standard linear approximation ``trick'' from the online learning literature~\cite{shalev12,hazan2019introduction}, we have the following. For $\nabla_t=\nabla_\theta\ell(\theta_t;d_t)$,
\begin{align}
  \frac{1}{n}\sum\limits_{t=1}^n\ell(\theta_t;d_t)-\frac{1}{n}\sum\limits_{t=1}^n\ell(\theta^*;d_t)
& \leq \frac{1}{n}\sum\limits_{t=1}^n\ip{\nabla_t}{\theta_t-\theta^*}\nonumber\\
& =\frac{1}{n}\sum\limits_{t=1}^n\ip{\nabla_t}{\theta_t-\thetash_t+\thetash_t-\theta^*}\nonumber\\
& =\underbrace{\frac{1}{n}\sum\limits_{t=1}^n\ip{\nabla_t}{\thetash_t-\theta^*}}_{A}+\underbrace{\frac{1}{n}\sum\limits_{t=1}^n\ip{\nabla_t}{\theta_t-\thetash_t}}_{B}.
\label{eq:4}
\end{align}
One can bound the term $A$ in~\eqref{eq:4} by~\citep[Theorem 5.2]{hazan2019introduction} and get $A\leq \left(\frac{L^2}{\lambda}+\frac{\lambda}{2n}\left(\ltwo{\theta^*}^2-\ltwo{\theta_1}^2\right)\right)$.
As for term $B$, using~\eqref{eq:2} and the concentration on $\boldb_t$ mentioned earlier, we have, w.p. at least $1-\beta$,
\begin{align}
B&\leq \frac{1}{n}\sum\limits_{t=1}^n\ltwo{\nabla_t}\cdot\ltwo{\thetash_t-\theta_t}
\leq\frac{1}{n}\sum\limits_{t=1}^n L\cdot\ltwo{\thetash_t-\theta_t}\leq \frac{L\sigma\sqrt{p\lceil\lg n\rceil\ln(n/\beta)}}{\lambda}.\label{eq:5}
\end{align}
Combining~\eqref{eq:4} and~\eqref{eq:5}, we immediately have the first part of of Theorem~\ref{thm:regFTRL}. 
To prove the second part of the theorem, we just optimize for the regularization parameter $\lambda$ and plug in the noise scale $\sigma$ from Theorem~\ref{thm:privFTRL}.
\end{proof}

\subsection{Additional Details for Section~\ref{sec:lssq}}
\label{app:lssq}

In Algorithm~\ref{Alg:PFTRL-ls}, we present a version of DP-FTRL for least square loss. In this modified algorithm, the functions $\initb$, $\addtb$, and $\gettb$ are identical to $\init$, $\addt$, and $\gett$ respectively in Algorithm~\ref{Alg:PFTRL}. 
The functions $\addtq$, $\addtq$, and $\gettq$ are similar to $\init$, $\addt$, and $\gett$, except that the $p$-dimensional vector versions are replaced by $p\times p$-dimensional matrix version, and the noise in $\initq$ is initialized by symmetric $p\times p$ Gaussian matrices with each entry drawn i.i.d. from $\calN\left(0,L^4\sigma^2\right)$.

\begin{algorithm}[h]
\caption{$\aftrlls$: Differentially Private Follow-The-Regularized-Leader (DP-FTRL) for least-squared losses}
\begin{algorithmic}[1]
\REQUIRE Data set: $D=\{(\boldx_1,y_1),\cdots,(\boldx_n,y_n)\}$ arriving in a stream, constraint set: $\calC$, noise scale: $\sigma$, regularization parameter: $\lambda$, upper bound on $\left\{\ltwo{\boldx_t}\right\}_{t=1}^n: L$.
\STATE $\theta_1\leftarrow\argmin\limits_{\theta\in\calC}\frac{\lambda}{2}\ltwo{\theta}^2$. \outputt $\theta_1$. 
\STATE  $\treeb\leftarrow \initb(n,\sigma^2,L)$, $\treeq\leftarrow \initq(n,\sigma^2,L^2)$.
\FOR{$t\in[n]$}
    \STATE Let $\boldv_t\leftarrow y_t\cdot \boldx_t$, and $M_t\leftarrow \boldx_t\boldx_t^\top$. \STATE $\treeb\leftarrow\addtb(\treeb,t,\boldv_t)$ and $\treeq\leftarrow\addtq(\treeq,t,M_t)$.
    \STATE $\bolds_t\leftarrow\gettb(\treeb,t)$, and $W_t\leftarrow\gettq(\treeq,t)$.
    {\STATE $\theta_{t+1}\leftarrow\arg\min\limits_{\theta\in\calC}\left(\theta^\top\cdot W_t\cdot\theta-2\ip{\bolds_t}{\theta}\right)+\frac{\lambda}{2}\|\theta\|^2_2$. \outputt $\theta_{t+1}$.}
\ENDFOR
\end{algorithmic}
\label{Alg:PFTRL-ls}
\end{algorithm}

We first present the privacy guarantee of Algorithm~\ref{Alg:PFTRL-ls} in Theorem~\ref{thm:privFTRL-ls}. Its proof is almost identical to that of Theorem~\ref{thm:privFTRL}, except that we need to measure the sensitivity of the covariance matrix in the Frobenius norm. 
\begin{thm}[Privacy guarantee] If $\ltwo{\boldx}\leq L$ and $|y|\leq 1$ for all $(\boldx,y)\in\calD$ and $\theta\in\calC$, then 
Algorithm~\ref{Alg:PFTRL} (Algorithm $\aftrl$) satisfies $\left(\alpha,\frac{\alpha\lceil\lg(n)\rceil}{\sigma^2}\right)$-RDP. Correspondingly, by setting $\sigma=\frac{2\sqrt{\lceil\lg(n)\rceil\ln(1/\delta)}}{\epsilon}$  one can satisfy $(\epsilon,\delta)$-differential privacy guarantee, as long as $\epsilon\leq 2\ln(1/\delta)$.
\label{thm:privFTRL-ls}
\end{thm}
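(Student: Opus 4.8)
The plan is to follow the proof of Theorem~\ref{thm:privFTRL} essentially verbatim, with the two changes that (i) Algorithm~\ref{Alg:PFTRL-ls} ($\aftrlls$) maintains \emph{two} aggregation trees instead of one, and (ii) one of them holds matrices, so a sensitivity must be measured in the Frobenius norm. First I would observe, exactly as in Theorem~\ref{thm:privFTRL}, that every data-dependent quantity in $\aftrlls$ is obtained by post-processing the two noised trees $\treeb$ (holding the vectors $\boldv_t = y_t\boldx_t$) and $\treeq$ (holding the rank-one matrices $M_t = \boldx_t\boldx_t^\top$): the models $\theta_{t+1}$ and the outputs of $\gettb$ and $\gettq$ depend on $D$ only through these trees. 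Hence it suffices to bound the privacy cost of releasing the noised $\treeb$ and $\treeq$; and since the sequences $\{\boldv_t\}$ and $\{M_t\}$ are adaptively chosen (they depend on the earlier $\theta_t$'s, hence on earlier noise), I would account for this via adaptive RDP composition \citep[Proposition~1]{mironov2017renyi}, just as for Theorem~\ref{thm:privFTRL}.

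Next come the two sensitivity computations. Replacing one record $(\boldx_t,y_t)$ by $\nul$ changes the value added along any leaf-to-root path of $\treeb$ by $\boldv_t = y_t\boldx_t$, and $\ltwo{\boldv_t} = |y_t|\cdot\ltwo{\boldx_t}\le L$ by assumption; since $\initb$ places spherical Gaussian noise of standard deviation $L\sigma$ in each node, each node of $\treeb$ is an instance of the Gaussian mechanism with $\ell_2$-sensitivity $L$ and noise scale $L\sigma$, hence $(\alpha,\alpha/(2\sigma^2))$-RDP. For $\treeq$, the change is $M_t = \boldx_t\boldx_t^\top$, and the relevant norm is Frobenius: $\norm{\boldx_t\boldx_t^\top}_F = \ltwo{\boldx_t}^2 \le L^2$. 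Identifying a symmetric node with the vector of its (at most $\binom{p+1}{2}$) on-or-above-diagonal entries — so that the Euclidean norm of that vector is at most the Frobenius norm of the matrix — and recalling that $\initq$ adds i.i.d.\ $\calN(0,L^4\sigma^2)$ noise to the entries, each node of $\treeq$ is again a Gaussian mechanism with sensitivity at most $L^2$ and noise scale $L^2\sigma$, hence $(\alpha,\alpha/(2\sigma^2))$-RDP.

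With these in hand I would finish by composition. A single record $\boldx_t$ sits at one leaf of each tree, so it influences at most $\lceil\lg(n)\rceil$ nodes of $\treeb$ and at most $\lceil\lg(n)\rceil$ nodes of $\treeq$: in total $2\lceil\lg(n)\rceil$ Gaussian sub-mechanisms, each $(\alpha,\alpha/(2\sigma^2))$-RDP. Adaptive RDP composition yields $\left(\alpha,\,2\lceil\lg(n)\rceil\cdot\frac{\alpha}{2\sigma^2}\right) = \left(\alpha,\,\frac{\alpha\lceil\lg(n)\rceil}{\sigma^2}\right)$-RDP, which is the first claim (a factor of two worse than the single-tree bound of Theorem~\ref{thm:privFTRL}, which is exactly why the prescribed $\sigma$ is correspondingly larger). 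The $(\epsilon,\delta)$-DP statement then follows by plugging $\epsilon' = \alpha\lceil\lg(n)\rceil/\sigma^2$ and $\sigma = 2\sqrt{\lceil\lg(n)\rceil\ln(1/\delta)}/\epsilon$ into the standard RDP-to-DP conversion $(\epsilon',\delta)\mapsto(\epsilon' + \ln(1/\delta)/(\alpha-1),\delta)$ and optimizing over the order (taking $\alpha\approx 1 + 2\ln(1/\delta)/\epsilon$); the assumption $\epsilon\le 2\ln(1/\delta)$ is what makes this choice of order admissible, and this step is word-for-word the one used in Theorem~\ref{thm:privFTRL}.

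The only genuinely new point — and the one I would be most careful about — is the Frobenius-norm accounting for $\treeq$: one needs the identity $\norm{\boldx\boldx^\top}_F = \ltwo{\boldx}^2$ (rather than some other matrix norm), and one must confirm that adding i.i.d.\ Gaussian noise to the symmetric matrix's free coordinates really is a Gaussian mechanism for the $\ell_2$-geometry those coordinates inherit from the Frobenius norm (the extra weight on off-diagonal coordinates only helps, so the bound is safe). Everything else is a transcription of the tree-aggregation privacy argument already established for Theorem~\ref{thm:privFTRL}.
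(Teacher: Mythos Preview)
Your proposal is correct and matches the paper's approach exactly: the paper's proof is a single sentence stating that the argument is ``almost identical to that of Theorem~\ref{thm:privFTRL}, except that we need to measure the sensitivity of the covariance matrix in the Frobenius norm,'' and you have simply filled in the details of that sketch (two trees, Frobenius sensitivity $\norm{\boldx_t\boldx_t^\top}_F=\ltwo{\boldx_t}^2\le L^2$ for $\treeq$, and the resulting factor-of-two in the composed RDP bound). Note that the theorem statement itself contains a typo---it refers to Algorithm~\ref{Alg:PFTRL} ($\aftrl$) when it should refer to Algorithm~\ref{Alg:PFTRL-ls} ($\aftrlls$)---and you correctly read through it.
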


In Theorem~\ref{thm:stocReg-ls}, we present the regret guarantee for Algorithm~\ref{Alg:PFTRL-ls}.
\begin{thm}[Stochastic regret for least-squared losses]
Let $D=\{(\boldx_1,y_1),\ldots,(\boldx_n,y_n)\}\in\calD^n$ be a data set drawn i.i.d. from $\tau$, 
with $L=\max\limits_{\boldx\in\calD}\ltwo{\boldx}$
and $\max\limits_{y\sim\calD}|y|\leq 1$. 
Let $\calC$ be the model space and $\mu=\max\limits_{\theta\in\calC}\ltwo{\theta}$. 
Let $\theta^*$ be any model in $\calC$, and $[\theta_1,\ldots,\theta_n]$ be the outputs of Algorithm $\aftrlls$ (Algorithm~\ref{Alg:PFTRL-ls}). Then w.p. at least $1-\beta$ (over the randomness of the algorithm), we have
\begin{align*}
&\mathbb{E}_{D}\left[R_D(\aftrlls;\theta^*)\right]=\mathbb{E}_D\left[\frac{1}{n}\sum\limits_{t=1}^n\left(y_t-\ip{\boldx_t}{\theta_t}\right)^2-\left(y_t-\ip{\boldx_t}{\theta^*}\right)^2\right]
\\=&
O\left(\frac{p\ln^2(n)\ln(n/\beta)\sigma^2\cdot\left(L^2+L^4\mu^2+L^3\mu\right)}{\lambda n}
+\frac{L^4\mu^2}{\lambda}+\frac{\lambda\ln(n)}{n}\cdot\ltwo{\theta^*}^2\right).
\end{align*}
Setting $\lambda$ optimally and plugging in the noise scale $\sigma$ from Theorem~\ref{thm:privFTRL-ls} to ensure $(\epsilon,\delta)$-differential privacy, we have, 
\begin{align*}
&\mathbb{E}_{D}\left[R_D(\aftrlls;\theta^*)\right]=L^2\cdot\ltwo{\theta^*}\cdot O\left(\left(\sqrt\frac{\mu^2\ln(n)}{n}+\frac{\sqrt{p\ln^5(n/\beta)\cdot\ln(1/\delta)\cdot\max\{\mu,\mu^2\}}}{\epsilon n}\right)\right).
\end{align*}
\label{thm:stocReg-ls}
\end{thm}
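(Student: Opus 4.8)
\noindent\textbf{Proof proposal for Theorem~\ref{thm:stocReg-ls}.}
The plan is to mirror the proof of Theorem~\ref{thm:regFTRLDet}, but to exploit that a least-squares loss is \emph{exactly} quadratic. Writing $\ell(\theta;d_i)=(y_i-\ip{\boldx_i}{\theta})^2=\theta^\top M_i\theta-2\ip{\boldv_i}{\theta}+y_i^2$ with $M_i=\boldx_i\boldx_i^\top$ and $\boldv_i=y_i\boldx_i$, the cumulative loss $\sum_{i\le t}\ell(\theta;d_i)$ is, up to a $\theta$-independent constant, determined by the sufficient statistics $\sum_{i\le t}M_i$ and $\sum_{i\le t}\boldv_i$, and — crucially — these do \emph{not} depend on the past outputs $\theta_1,\dots,\theta_t$. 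It is this non-adaptivity of the privatized quantities that I expect to buy the $\sqrt p/(\epsilon n)$ dependence, as opposed to the $p^{1/4}/\sqrt{\epsilon n}$ of Theorem~\ref{thm:regFTRL}. Since $\aftrlls$ accesses $D$ only through the two tree structures — one on the vectors $\boldv_i$ (per-node noise $\calN(0,L^2\sigma^2)$, matching $\max_i\ltwo{\boldv_i}\le L$) and one on the matrices $M_i$ (per-node noise $\calN(0,L^4\sigma^2)$, matching $\max_i\|M_i\|_F\le L^2$) — the privacy claim is exactly Theorem~\ref{thm:privFTRL-ls}. For the utility bound, fix a draw of $D$ and condition on the probability-$\ge 1-\beta$ event (over the Gaussian noise) that for every $t\in[n]$ the vector noise $\bfb_t$ returned by the tree satisfies $\ltwo{\bfb_t}\le L\sigma\sqrt{p\lceil\lg n\rceil\ln(n/\beta)}$ and the aggregated covariance noise matrix $\bfB_t$ satisfies $\|\bfB_t\|_{\mathrm{op}}=O\big(L^2\sigma\sqrt{p\lceil\lg n\rceil\ln(n/\beta)}\big)$; these follow from Gaussian, resp.\ Gaussian-matrix, concentration and a union bound over the $n$ steps, which is the source of the $\ln(n/\beta)$ factors.

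I would then decompose the regret against the noise-free run. Let $\Phi_t(\theta)=\sum_{i\le t}\ell(\theta;d_i)+\tfrac\lambda2\ltwo{\theta}^2$ and $\thetanp_{t+1}=\argmin_{\theta\in\calC}\Phi_t(\theta)$, the hypothetical FTRL iterate fed the \emph{same} quadratic losses with the exact (un-noised) sufficient statistics; note $\Phi_t$ is $\lambda$-strongly convex and $G:=\sup_t\sup_{\theta\in\calC}\ltwo{\nabla\ell(\theta;d_t)}\le 2L(1+L\mu)$, so $G^2\asymp L^2+L^3\mu+L^4\mu^2$. Writing $n\cdot R_D(\aftrlls;\theta^*)=\sum_t\big(\ell(\theta_t;d_t)-\ell(\theta^*;d_t)\big)=(\mathrm{I})+(\mathrm{II})$ with
\begin{align*}
(\mathrm{I})=\sum_{t=1}^n\big(\ell(\thetanp_t;d_t)-\ell(\theta^*;d_t)\big),\qquad
(\mathrm{II})=\sum_{t=1}^n\big(\ell(\theta_t;d_t)-\ell(\thetanp_t;d_t)\big),
\end{align*}
term $(\mathrm{I})$ is exactly the regret of noise-free FTRL on a sequence of convex (quadratic) losses with the fixed regularizer $\tfrac\lambda2\ltwo{\cdot}^2$, so the standard FTRL / be-the-leader bound (\citep[Theorem 5.2]{hazan2019introduction}, \citep{mcmahan17survey}, using $\lambda$-strong convexity of each leader objective and the gradient bound $G$) gives $(\mathrm{I})=O\big(\tfrac{nG^2}{\lambda}+\lambda\ltwo{\theta^*}^2\big)$ up to a $\mathrm{polylog}(n)$ factor; dividing by $n$ reproduces the $\tfrac{L^4\mu^2}{\lambda}$ and $\tfrac{\lambda\ln n}{n}\ltwo{\theta^*}^2$ terms, and since the bound is deterministic given $D$, taking $\mathbb{E}_D[\cdot]$ is harmless.

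The crux — and the step I expect to be the main obstacle — is bounding the ``cost of privacy'' term $(\mathrm{II})$ \emph{without} incurring a factor of $n$ times something first order in the noise. First, $\theta_{t+1}$ minimizes over $\calC$ the perturbed objective $\Phi_t(\theta)+\big(\theta^\top\bfB_t\theta-2\ip{\bfb_t}{\theta}\big)$, which on the event above is still strongly convex once $\lambda$ is a sufficiently large polynomial in $n$ (so that $\tfrac\lambda2\ge\|\bfB_t\|_{\mathrm{op}}$); the minimizer-stability Lemma~\ref{lem:boundDev}, applied with $\Psi=\theta^\top\bfB_t\theta-2\ip{\bfb_t}{\theta}$, then yields $\ltwo{\theta_{t+1}-\thetanp_{t+1}}=O\big((\ltwo{\bfb_t}+\mu\|\bfB_t\|_{\mathrm{op}})/\lambda\big)$. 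Naively bounding $(\mathrm{II})\le\sum_t G\,\ltwo{\theta_t-\thetanp_t}$ is too lossy. Instead I would use that each $\ell(\cdot;d_t)$ is quadratic to write the exact identity $\ell(\theta_t;d_t)-\ell(\thetanp_t;d_t)=\ip{\nabla\ell(\thetanp_t;d_t)}{\theta_t-\thetanp_t}+\big(\boldx_t^\top(\theta_t-\thetanp_t)\big)^2$, and combine the first-order terms across $t$ using first-order optimality of $\thetanp_t$ for $\Phi_{t-1}$ over $\calC$; this telescoping turns the net contribution of $(\mathrm{II})$ into a quantity that is \emph{second order} in $(\ltwo{\bfb_t},\|\bfB_t\|_{\mathrm{op}})$ and scaled by $1/\lambda$ — the quadratic analogue of the fact that a noised sufficient statistic costs only second-order error for regularized least squares — which, after substituting the concentration bounds, is $O\big(p\sigma^2\ln^2(n)\ln(n/\beta)(L^2+L^3\mu+L^4\mu^2)/\lambda\big)$ un-normalized, matching the first term of the theorem. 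Finally I would add $(\mathrm{I})$ and $(\mathrm{II})$, divide by $n$, optimize over $\lambda$ (which balances the three resulting terms and comes out polynomially large in $n$, so the requirement $\tfrac\lambda2\ge\|\bfB_t\|_{\mathrm{op}}$ holds a posteriori), substitute $\sigma=\Theta\big(\sqrt{\lceil\lg n\rceil\ln(1/\delta)}/\epsilon\big)$ from Theorem~\ref{thm:privFTRL-ls}, and collapse the polylogarithmic factors to obtain the stated bound. The linear-loss case in the footnote is strictly easier: only the single vector tree on $\sum_i d_i$ is needed, $\bfB_t\equiv 0$, and $(\mathrm{II})$ is already free of first-order noise terms by the same argument.
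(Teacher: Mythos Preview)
Your decomposition $(\mathrm{I})+(\mathrm{II})$ is natural, and your bounds on $\ltwo{\theta_t-\thetanp_t}$ and on the noise concentration match the paper's.  The gap is in the control of $(\mathrm{II})$.  You assert that the first--order pieces $\sum_t\ip{\nabla\ell(\thetanp_t;d_t)}{\theta_t-\thetanp_t}$ can be ``telescoped'' into something second order in the noise using only the first--order optimality of $\thetanp_t$ for $\Phi_{t-1}$.  But the optimality of $\thetanp_t$ concerns $\nabla\Phi_{t-1}(\thetanp_t)=\sum_{i<t}\nabla\ell(\thetanp_t;d_i)+\lambda\thetanp_t$, whereas the term you need involves the \emph{fresh} gradient $\nabla\ell(\thetanp_t;d_t)$; no cancellation or telescoping between these is apparent, and none is supplied.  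Pointwise in $D$, each summand can be of order $G\cdot\ltwo{\theta_t-\thetanp_t}=O(G\cdot\text{noise}/\lambda)$, and summing $n$ of them gives $O(nG\cdot\text{noise}/\lambda)$, which after normalization is first order in the noise and reproduces only the $p^{1/4}/\sqrt{\epsilon n}$ rate of Theorem~\ref{thm:regFTRL}, not the claimed $\sqrt{p}/(\epsilon n)$.  (Your second--order pieces $\sum_t(\boldx_t^\top(\theta_t-\thetanp_t))^2$ are likewise $O(nL^2\,\text{noise}^2/\lambda^2)$, which does not match the target $O(\text{noise}^2/\lambda)$ either.)  In short, the step ``this telescoping turns the net contribution of $(\mathrm{II})$ into a quantity that is second order'' is exactly the missing idea.

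The paper does not attempt a pointwise--in--$D$ bound at all; the i.i.d.\ assumption is used in an essential way, and not merely at the end.  Instead of splitting into $(\mathrm{I})+(\mathrm{II})$, it observes that $\theta_{t+1}$ is an \emph{approximate} regularized ERM on the first $t$ samples, with empirical suboptimality $\npJ(\theta_{t+1})-\npJ(\thetash_{t+1})=O(\text{noise}^2/\lambda)$ (this is where the second--order dependence enters, via exactly the argument around your Lemma~\ref{lem:boundDev}).  It then invokes the stability--based generalization result of \cite{SSSS09} (their Theorem~2): approximate regularized ERM on $t$ i.i.d.\ samples has population excess risk bounded by $\tfrac{2}{t}\cdot(\text{empirical suboptimality})+O(G^2/\lambda)+\tfrac{\lambda}{2t}\ltwo{\theta^*}^2$.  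Since $\mathbb{E}_{d_t}[\ell(\theta_t;d_t)]$ equals the population risk of $\theta_t$, averaging these bounds over $t$ and using $\sum_t 1/t=O(\ln n)$ is what produces the $O\big(p\sigma^2\ln^2(n)\ln(n/\beta)\cdot G^2/(\lambda n)\big)$ privacy term and also the $\tfrac{\lambda\ln n}{n}\ltwo{\theta^*}^2$ regularization term.  The route through \cite{SSSS09} is the mechanism that replaces your unproven telescoping; without the stochastic assumption (which you explicitly set aside by fixing $D$), there is no reason to expect the faster rate.
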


\begin{proof}
Consider the following regret function: $R_D(\aftrlls;\theta^*)=\frac{1}{n}\sum\limits_{t=1}^n\left((y_t-\ip{\theta_t}{\boldx_t})^2-(y_t-\ip{\theta}{\boldx_t})^2\right)$. We will bound $\mathbb{E}_D\left[R_D(\aftrlls;\theta^*)\right]$. Following the notation in the proof of Theorem~\ref{thm:regFTRL}, recall the following two functions.
\begin{itemize}
\item $\theta_{t+1}\leftarrow\argmin\limits_{\theta\in\calC}\underbrace{\sum\limits_{i=1}^t\left(\theta^\top \boldx_i\boldx_i^\top\theta-2y_i\ip{\boldx_i}{\theta}\right)+\frac{\lambda}{2}\ltwo{\theta}^2+\ip{\bfb_t}{\theta}+\theta^\top B_t\theta}_{\pJ(\theta)}$, 
where the noise $\bfb_t=\sum\limits_{i=1}^t y_i\boldx_i-\bolds_t$ with $\bolds_t$ being the output of $\gettb(\treeb,t)$, 
and the noise $B_t=W_t-\sum\limits_{i=1}^t\boldx_i\boldx_i^\top$ with $W_t$ being the output of $\gettq(\treeq,t)$. 
By standard bound on Gaussian random variables, w.p. at least $1-\beta$, $\forall t\in[n]$, $\ltwo{\bfb_t}=O\left( L\sigma\sqrt{p\ln(n)\ln(n/\beta)}\right)$ and $\ltwo{B_t}=O\left( L^2\sigma\sqrt{p\ln(n)\ln(n/\beta)}\right)$. We will use this bound to control the error introduced due to privacy.
\item $\thetash_{t+1} \leftarrow 
\argmin\limits_{\theta\in\calC}\underbrace{\sum\limits_{i=1}^t\left(\theta^\top \boldx_i\boldx_i^\top\theta-2y_i\ip{\boldx_i}{\theta}\right)+\frac{\lambda}{2}\ltwo{\theta}^2}_{\npJ(\theta)}$.
\end{itemize}
By an analogous argument to~\eqref{eq:2} in the proof of Theorem~\ref{thm:regFTRL}, we have
\begin{align}
\ltwo{\thetash_{t+1}-\theta_{t+1}}=O\left( \frac{L\sigma+L^2\sigma\cdot\mu}{\lambda}\cdot \sqrt{p\ln(n)\ln(n/\beta)}\right).\label{eq:4a}
\end{align}
Therefore,
\begin{align}
&\npJ(\thetash_{t+1})+\ip{\boldb_t}{\thetash_{t+1}}+\thetash_{t+1}^\top B_t\thetash_{t+1}\geq\underbrace{\npJ(\theta_{t+1})+\ip{\boldb_t}{\theta_{t+1}}+\theta_{t+1}^\top B_t\theta_{t+1}}_{\pJ(\theta_{t+1})}+\frac{\lambda}{2}\ltwo{\thetash_{t+1}-\theta_{t+1}}^2\label{eq:51}\\
\Rightarrow\quad& \npJ(\theta_{t+1})-\npJ(\thetash_{t+1})=O\left( \ltwo{\boldb_t}\cdot\ltwo{\thetash_{t+1}-\theta_{t+1}}+\ltwo{B_t}\cdot\ltwo{\thetash_{t+1}-\theta_{t+1}}\cdot\mu\right)\label{eq:52}\\
\Rightarrow\quad& \npJ(\theta_{t+1})-\npJ(\thetash_{t+1})=O\left((p\ln(n)\ln(n/
\beta)\sigma^2)\cdot\frac{L^2+L^4\mu^2+L^3\mu}{\lambda}\right)\label{eq:53}.
\end{align}
\eqref{eq:51} follows from the strong convexity of $\pJ$ and the fact that that $\theta_{t+1}$ is the minimizer of $\pJ$. \eqref{eq:52} follows from the bounds on $\ltwo{\boldb_t}$, $\ltwo{B_t}$,  and~\eqref{eq:4a}. We now use Theorem 2 from~\cite{SSSS09} to bound $\mathbb{E}_{(\boldx,y)\sim\calD}\left[(y-\ip{\boldx}{\theta_{t+1}})^2+\frac{\lambda}{2}\ltwo{\theta_{t+1}}^2\right]-\mathbb{E}_{(\boldx,y)\sim\calD}\left[(y-\ip{\boldx}{\theta^*})^2+\frac{\lambda}{2}\ltwo{\theta^*}^2\right]$ for any $\theta^*\in\calC$.

Using Theorem 2 from~\cite{SSSS09} and~\eqref{eq:53}, we have that w.p. at least $1-\beta$ over the randomness of the algorithm,
\begin{align}
&\mathbb{E}_{(\boldx,y)\sim\tau}\left[(y-\ip{\boldx}{\theta_{t+1}})^2+\frac{\lambda}{2}\ltwo{\theta_{t+1}}^2\right]-\mathbb{E}_{(\boldx,y)\sim\tau}\left[(y-\ip{\boldx}{\theta^*})^2+\frac{\lambda}{2}\ltwo{\theta^*}^2\right]\nonumber\\&\leq \frac{2}{t}\cdot\mathbb{E}\left[\npJ\left(\theta_{t+1}\right)-\npJ\left(\thetash_{t+1}\right)\right]+O\left(\frac{L^4\mu^2}{\lambda }\right)\nonumber\\
&=O\left((p\ln(n)\ln(n/\beta))\sigma^2)\cdot\frac{L^2+L^4\mu^2+L^3\mu}{\lambda t}+\frac{L^4\mu^2}{\lambda}\right).\label{eq:abc63}
\end{align}
\eqref{eq:abc63} immediately implies the following:
\begin{align}
\mathbb{E}_D\left[R_D(\aftrlls;\theta^*)\right]&=O\left(\left(p\ln^2(n)\ln(n/\beta)\sigma^2\right)\frac{L^2+L^4\mu^2+L^3\mu}{\lambda n}+\frac{L^4\mu^2}{\lambda}+\frac{\lambda\ln(n)}{n}\cdot\ltwo{\theta^*}^2\right).\label{eq:abc23}
\end{align}
We get the regret guarantee in Theorem~\ref{thm:stocReg-ls} by optimizing for $\lambda$.
\end{proof}

\subsection{Formal Statement of Online-to-batch Conversion for Excess Population Risk}
\label{app:pop}

\begin{thm}[Corollary to Theorem~\ref{thm:regFTRL} and~\cite{SSSS09}]
Recall the setting of parameters from Theorem~\ref{thm:regFTRL}, and let $\privT=\frac{1}{n}\sum\limits_{t=1}^n\theta_t$ (where $[\theta_1,\ldots,\theta_n]$ are outputs of Algorithm $\aftrl$ (Algorithm~\ref{Alg:PFTRL}). If the data set $D$ is drawn i.i.d. from the distribution $\tau$, then we have that w.p. at least $1-\beta$ (over the randomness of the algorithm $\aftrl$), 

$$\mathbb{E}_D\left[\poprisk{\privT}\right]=L\mu\cdot O\left(\sqrt\frac{{\ln(1/\beta)}}{n}+\sqrt{\frac{p^{1/2}\ln^{2}(1/\delta)\ln(1/\beta)}{\epsilon n}}\right).$$

Here, $\mu=\max\limits_{\theta\in\calC}\ltwo{\theta}$ is an upper bound on the norm of any model in $\calC$.
\label{thm:poprisk}
\end{thm}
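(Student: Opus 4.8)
The plan is to apply the standard online-to-batch conversion~\cite{cesa2002generalization,SSSS09} on top of the high-probability regret bound of Theorem~\ref{thm:regFTRL}. The one genuinely delicate point is that Theorem~\ref{thm:regFTRL} is high-probability over the \emph{algorithm's} internal Gaussian randomness, whereas the population-risk bound stated here is an expectation over the \emph{draw of $D$}; the proof must keep these two independent sources of randomness carefully separated.

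First I would isolate the good-noise event. Inspecting the proof of Theorem~\ref{thm:regFTRLDet}, the only randomness used there is the concentration event $\mathcal{E}=\{\,\forall t\in[n]:\ \ltwo{\bfb_t}\le L\sigma\sqrt{p\lceil\lg n\rceil\ln(n/\beta)}\,\}$ for the tree-aggregation noise; crucially this event depends only on the noise, not on $D$ or on any comparator, and $\Pr[\mathcal{E}]\ge 1-\beta$. On $\mathcal{E}$ the deterministic regret bound of Theorem~\ref{thm:regFTRLDet} holds for every data sequence and every $\theta^*\in\calC$ simultaneously, and after the optimal choice of $\lambda$ together with the value of $\sigma$ from Theorem~\ref{thm:privFTRL} this reads $R_D(\aftrl;\theta^*)\le O\!\big(L\ltwo{\theta^*}(\tfrac1{\sqrt n}+\sqrt{p^{1/2}\ln^2(1/\delta)\ln(1/\beta)/(\epsilon n)})\big)$. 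I then condition on $\mathcal{E}$ for the rest of the argument.

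Next I would carry out the online-to-batch step. Write $F(\theta)=\mathbb{E}_{d\sim\tau}\ell(\theta;d)$, which is convex because each $\ell(\cdot;d)$ is, and fix $\theta^*\in\argmin_{\theta\in\calC}F(\theta)$, so that $\ltwo{\theta^*}\le\mu$. Jensen's inequality gives $F(\privT)=F(\tfrac1n\sum_t\theta_t)\le\tfrac1n\sum_t F(\theta_t)$. In Algorithm~\ref{Alg:PFTRL} the iterate $\theta_t$ is a deterministic function of $d_1,\dots,d_{t-1}$ and the noise alone; since $d_t\sim\tau$ is fresh and independent of the noise, $\mathbb{E}\!\left[\ell(\theta_t;d_t)\mid d_1,\dots,d_{t-1},\text{noise}\right]=F(\theta_t)$, and taking the expectation over $D$ (conditioning on $\mathcal{E}$ changes nothing, as $\mathcal{E}$ is an event on the noise only) yields $\mathbb{E}_D[F(\theta_t)\mid\mathcal{E}]=\mathbb{E}_D[\ell(\theta_t;d_t)\mid\mathcal{E}]$. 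Since $\theta^*$ is non-random, $\mathbb{E}_D[\ell(\theta^*;d_t)]=F(\theta^*)=\min_{\theta\in\calC}F(\theta)$. Chaining these gives $\mathbb{E}_D[\poprisk{\privT}\mid\mathcal{E}]=\mathbb{E}_D[F(\privT)-F(\theta^*)\mid\mathcal{E}]\le\mathbb{E}_D[R_D(\aftrl;\theta^*)\mid\mathcal{E}]$.

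Finally I would substitute the regret estimate from the first step (valid deterministically on $\mathcal{E}$, in particular for the comparator $\theta^*$) and use $\ltwo{\theta^*}\le\mu$ to conclude $\mathbb{E}_D[\poprisk{\privT}\mid\mathcal{E}]\le L\mu\cdot O\!\big(\sqrt{\ln(1/\beta)/n}+\sqrt{p^{1/2}\ln^2(1/\delta)\ln(1/\beta)/(\epsilon n)}\big)$, which, since $\Pr[\mathcal{E}]\ge 1-\beta$, is the claimed high-probability statement. I expect the main obstacle to be bookkeeping rather than mathematical depth: one must be careful that the high-probability event of Theorem~\ref{thm:regFTRL} is comparator-free and data-free, so that conditioning on it leaves the distribution of $D$ intact and the tower-property identity $\mathbb{E}[\ell(\theta_t;d_t)\mid\text{history}]=F(\theta_t)$ remains valid for the adaptively generated iterates.
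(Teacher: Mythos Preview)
Your proposal is correct and follows exactly the approach the paper indicates: the paper does not spell out a proof for Theorem~\ref{thm:poprisk} but simply states it as a corollary of Theorem~\ref{thm:regFTRL} together with the standard online-to-batch conversion of~\cite{cesa2002generalization,SSSS09}, and your argument is precisely that conversion carried out carefully. In particular, your observation that the high-probability event $\mathcal{E}$ from the proof of Theorem~\ref{thm:regFTRLDet} depends only on the tree-aggregation noise (and is therefore independent of $D$ and comparator-free) is the right way to reconcile the ``high probability over the algorithm, expectation over $D$'' structure of the claim; the paper leaves this bookkeeping implicit.
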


\section{Multi-pass DP-FTRL: Handling multiple participations}
\label{app:multipass}
In this section, we provide details of the \Restart, \NoRestart, and \SomeRestart algorithms introduced in \cref{sec:practical_variants} for handling data where each example (or user in the case of user-level DP) can be considered in multiple training steps. The privacy accounting code is open sourced \footnote{\url{https://github.com/tensorflow/privacy/blob/master/tensorflow_privacy/privacy/analysis/tree_aggregation_accountant.py} for \Restart and \NoRestart dynamic programming. \url{https://github.com/google-research/DP-FTRL/privacy.py} for \NoRestart with given data order and tree completion trick. }.

\subsection{DP-FTRL with Tree-Restarts (\Restart)}
\label{sec:DP-FTRL-TR}

In this approach, we restart tree aggregation at every epoch of training, so each $d_i$ contributes at most once to each tree. Since this amounts to adaptive composition of Algorithm~\ref{Alg:PFTRL-ls} for $E$ times,  the privacy guarantee for this method can be obtained from Theorem~\ref{thm:privFTRL} and the adaptive sequential composition property of RDP \cite{mironov2017renyi}.
    
    \begin{thm}[Privacy guarantee for \Restart] If $\ltwo{\nabla_\theta \ell(\theta;d)}\leq L$ for all $d\in\calD$ and $\theta\in\calC$, then 
DP-FTRL (Algorithm~\ref{Alg:PFTRL}) with Tree Restart (\Restart) for $E$ epochs satisfies $\left(\alpha,\frac{\alpha E L^2\lceil\lg(n+1)\rceil}{2\sigma^2}\right)$-RDP. Correspondingly, by setting $\sigma=\frac{\sqrt{E L^2\lceil\lg(n+1)\rceil\ln(1/\delta)}}{\epsilon}$ one can satisfy $(\epsilon,\delta)$-differential privacy guarantee, as long as $\epsilon\leq 2\ln(1/\delta)$.
\label{thm:privFTRL_restart}
\end{thm}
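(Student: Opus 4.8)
The plan is to reduce \Restart to an adaptive Rényi-DP composition of $E$ copies of Algorithm~\ref{Alg:PFTRL}, invoke Theorem~\ref{thm:privFTRL} for the per-epoch cost, and then apply the standard RDP-to-$(\epsilon,\delta)$ conversion. First I would record the key structural observation: because the aggregation tree is re-initialized at the start of every epoch, a single record $d_i$ is touched exactly once per pass, contributing only to the leaf indexed by $i$ in that epoch's tree and hence to the at most $\lceil\lg(n+1)\rceil$ nodes on that leaf's root-path; across the $E$ fresh trees it therefore influences at most $E\lceil\lg(n+1)\rceil$ nodes. Since every access the algorithm makes to $D$ factors through these noisy tree nodes (the outputs $\theta_t$ are a deterministic post-processing of them), it suffices to bound the RDP of releasing all $E$ trees, equivalently the RDP of the adaptive composition of $E$ runs of Algorithm~\ref{Alg:PFTRL} (adaptive because the model at the end of epoch $j$ seeds epoch $j+1$).

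For the per-epoch cost I would invoke Theorem~\ref{thm:privFTRL} directly: one run of Algorithm~\ref{Alg:PFTRL} is $\left(\alpha,\frac{\alpha L^2\lceil\lg(n+1)\rceil}{2\sigma^2}\right)$-RDP (under the neighboring relation of Definition~\ref{def:diiffP}, which replaces a record by $\nul$ and hence a clipped gradient of norm $\le L$ by the zero gradient). The internal mechanics I would only sketch, since they are exactly those of the proof of Theorem~\ref{thm:privFTRL}: each tree node is a summation query of $\ell_2$-sensitivity $L$; the Gaussian mechanism on such a query is $\left(\alpha,\frac{\alpha L^2}{2\sigma^2}\right)$-RDP~\cite{mironov2017renyi}; and RDP composition over the $\le\lceil\lg(n+1)\rceil$ affected nodes gives the per-epoch bound. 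The one point requiring care — and already handled inside the proof of Theorem~\ref{thm:privFTRL} (and, at the level of tree aggregation, in~\cite{thakurta2013nearly}) — is adaptivity: the gradients summed at a node depend on the current model, which is itself a deterministic function of previously released noisy nodes, so conditioned on any fixed prefix of releases the two neighboring runs produce identical models and hence node-sums differing by at most $L$; this is precisely the hypothesis needed to apply \emph{adaptive} RDP composition. Composing the $E$ epochs via adaptive sequential composition of RDP~\citep[Proposition~1]{mironov2017renyi} then yields that \Restart is $\left(\alpha,\frac{\alpha E L^2\lceil\lg(n+1)\rceil}{2\sigma^2}\right)$-RDP, the first claim.

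For the second claim I would use the conversion that $(\alpha,\epsilon_R)$-RDP implies $\left(\epsilon_R+\frac{\ln(1/\delta)}{\alpha-1},\delta\right)$-DP~\cite{mironov2017renyi}: substituting $\sigma^2 = \frac{E L^2\lceil\lg(n+1)\rceil\ln(1/\delta)}{\epsilon^2}$ makes the RDP parameter $\frac{\alpha\epsilon^2}{2\ln(1/\delta)}$, and choosing $\alpha$ of order $\frac{\ln(1/\delta)}{\epsilon}$ (so $\alpha$ stays safely above $1$, which is where the condition $\epsilon\le 2\ln(1/\delta)$ is used) balances the two terms and gives $(\epsilon,\delta)$-DP — using the tighter conversion of~\cite{canonne2020discrete} for the exact constant, though the basic conversion already suffices up to a small constant factor. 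I do not expect a genuine obstacle here: \Restart is deliberately the simplest of the multi-pass variants, and the whole content is the additivity of a record's influence across epochs ($E\lceil\lg(n+1)\rceil$ nodes) together with adaptive RDP composition. The only mildly delicate step is the adaptive-sensitivity bookkeeping in the per-epoch bound — making sure the per-node $\ell_2$-sensitivity argument survives the fact that a record change shifts the whole model trajectory, and hence all later gradients — but this is exactly the argument already established for the $E=1$ case in Theorem~\ref{thm:privFTRL}, so extending it to $E$ epochs is immediate.
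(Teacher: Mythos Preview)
Your proposal is correct and follows essentially the same approach as the paper: the paper's own justification is a single sentence observing that \Restart is an adaptive composition of $E$ runs of Algorithm~\ref{Alg:PFTRL}, so the guarantee follows from Theorem~\ref{thm:privFTRL} together with adaptive sequential RDP composition~\citep[Proposition~1]{mironov2017renyi}. You supply more detail (the node-count bookkeeping, the adaptive-sensitivity remark, and the explicit RDP-to-$(\epsilon,\delta)$ conversion), but the route is identical.
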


When the tree-restart strategy is used, we can use a tree-completion trick to improve performance. The general idea is to add ``virtual steps'' to complete the binary tree, such that the noise added to the step before restarting is smaller. The details can be found at Appendix~\ref{sec:tree completion}.

\subsection{DP-FTRL without Tree Restarts (\NoRestart)}
\label{sec:DP-FTRL-NTR}

In this case, we build a single binary tree over all the iterations of training. In this section we provide the privacy accounting for this \NoRestart approach, which perhaps surprisingly becomes much more involved compared to the tree restart version. We make the following three assumptions: i) A singe training example  contributes only~\emph{once} to a single gradient computation, ii) an example can contribute at most $\epochs$ number of times during the training process, and iii) any two successive appearance of a single training example is ensured to have a \emph{minimum separation} of $\participation$ iterations of $\aftrl$ (Algorithm~\ref{Alg:PFTRL}). For the clarity of notation, here we will denote the number of iterations of Algorithm $\aftrl$ with $T$ (instead of $n$), and each $\nabla_t$ for $t\in[T]$ corresponds to gradient computation at time step $t$. Additionally, we will refer to be binary tree used in Algorithm~\ref{Alg:PFTRL} (Algorithm $\aftrl$), with the leaf nodes being the $\nabla_t$'s, as $\calT$.

In Algorithm~\ref{Alg:accnr1}, we provide a privacy accounting scheme for the above instantiation of Algorithm $\aftrl$. Later, we provide a tighter privacy accounting via a dynamic programming approach, albeit at a higher computation cost. In all the privacy analysis in this section, we essentially use the standard machinery of Gaussian mechanism with RDP~\cite[Proposition 7]{mironov2017renyi}, except we use a specific sensitivity analysis for the tree-aggregation algorithm used in DP-FTRL.

\begin{algorithm}[ht]
\caption{Privacy accounting for \NoRestart}
\begin{algorithmic}[1]
\REQUIRE Binary tree for the DP-FTRL algorithm: $\tree$, minimum separation: $\participation$, maximum contributions: $E$.
\STATE $\rho\leftarrow 0$
\FOR{each level $\level\in\tree$}

    \STATE $k_\gamma\leftarrow$ \# of nodes in level $\gamma$,  $\mu_\gamma\leftarrow\left\lceil \frac{2^{{\sf height}(\gamma)}}{\xi+1}\right\rceil$, and $k^*_\gamma\leftarrow\min\left\{k,\left\lfloor\frac{E}{\mu_\gamma}\right\rfloor\right\}$.
    \STATE $\zeta_\gamma\leftarrow k^*_\gamma\cdot\mu_\gamma^2+\mathbb{1}_{k^*_\gamma < k}\left(E-k^*_\gamma\cdot\mu_\gamma\right)^2$
    \STATE $\rho\leftarrow\rho+ \zeta_\gamma$.

\ENDFOR
\STATE {\bf return} $\rho$.
\end{algorithmic}
\label{Alg:accnr1}
\end{algorithm}

\begin{thm}
For the \NoRestart instantiation of Algorithm~\ref{Alg:PFTRL} (Algorithm $\aftrl$), the privacy accounting scheme in Algorithm~\ref{Alg:accnr1} ensures $\left(\alpha,\frac{\alpha}{2\sigma^2}\cdot\rho\right)$-Renyi differential privacy (RDP), where $\sigma$ is the noise scale in the tree aggregation scheme of Section~\ref{app:tree}.
\label{thm:acc}
\end{thm}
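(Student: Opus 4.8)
The plan is to reduce the statement to a purely combinatorial inequality on the tree $\tree$ and then feed it into the standard Gaussian-mechanism / adaptive-RDP-composition argument for tree aggregation, following the machinery referenced in the proof of Theorem~\ref{thm:privFTRL} and in~\citet{thakurta2013nearly}; the only non-routine ingredient is a refined sensitivity analysis of the individual tree nodes under the minimum-separation and bounded-participation assumptions. Concretely: fix neighbours $D,D'$, where $D'$ replaces one record $d_{i^*}$ by $\nul$, and let $S\subseteq[T]$ be the (fixed) set of iterations at which $d_{i^*}$ is used, so $|S|\le E$ and successive elements of $S$ are at least $\xi+1$ apart. Every model $\theta_t$ output by $\aftrl$, and everything $\gett$ (or its variance-reduced variant) ever returns, is a deterministic post-processing of the sequence of noisy node values $\boldy_v := \bolds_v+\boldz_v$ released by the tree, where $\bolds_v$ is the sum of the clipped gradients in the leaves under $v$ and $\boldz_v\sim\calN(0,L^2\sigma^2\,\mathbb{I}_{p\times p})$; hence it suffices to bound the RDP of releasing the $\boldy_v$ in the (data-independent) order in which streaming tree aggregation finalizes the nodes. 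By causality of the scheme, $\bolds_v$ is a function of $D$ and of the node values finalized before $v$, so this release is exactly an $|\tree|$-fold adaptive composition of Gaussian mechanisms. Conditioned on any realized history of earlier outputs, the models $\theta_t$ are identical under $D$ and $D'$, so the two conditional means of $\boldy_v$ differ by $\bolds_v-\bolds_v'=\sum_{t\in S\cap\mathrm{leaves}(v)}\clip{\nabla_\theta\ell(\theta_t;d_{i^*})}{L}$, of norm at most $c_v L$ where $c_v:=|S\cap\mathrm{leaves}(v)|$. Thus node $v$'s release is $\left(\alpha,\tfrac{\alpha c_v^2}{2\sigma^2}\right)$-RDP for every history, and adaptive RDP composition~\citep[Proposition 1]{mironov2017renyi} gives that the whole mechanism is $\left(\alpha,\tfrac{\alpha}{2\sigma^2}\sum_v c_v^2\right)$-RDP, so it remains to show $\sum_v c_v^2\le\rho$.

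For the combinatorial bound, group the nodes by level $\gamma$: a level-$\gamma$ node covers a dyadic block of $2^{{\sf height}(\gamma)}$ consecutive leaves, there are $k_\gamma$ of them, and their blocks partition the leaves. Two observations: (i) $\sum_{v\in\gamma}c_v=|S|\le E$, since each element of $S$ lies in exactly one block; and (ii) $c_v\le\mu_\gamma:=\lceil 2^{{\sf height}(\gamma)}/(\xi+1)\rceil$, since at most $\lceil 2^{{\sf height}(\gamma)}/(\xi+1)\rceil$ integers in a window of length $2^{{\sf height}(\gamma)}$ can be pairwise at least $\xi+1$ apart. Hence $\sum_{v\in\gamma}c_v^2$ is at most the maximum of $\sum_j x_j^2$ over $x_j\in\{0,\dots,\mu_\gamma\}$ with at most $k_\gamma$ nonzero entries and $\sum_j x_j\le E$; this is a convex maximization over a polytope, so the optimum is at an extreme point, and a short case analysis shows the maximizing extreme point fills $\mu_\gamma$ into $k^*_\gamma=\min\{k_\gamma,\lfloor E/\mu_\gamma\rfloor\}$ coordinates and, if a coordinate remains (i.e.\ $k^*_\gamma<k_\gamma$), puts the leftover $E-k^*_\gamma\mu_\gamma<\mu_\gamma$ into one more, for total value $\zeta_\gamma=k^*_\gamma\mu_\gamma^2+\mathbb{1}_{k^*_\gamma<k_\gamma}(E-k^*_\gamma\mu_\gamma)^2$, precisely the per-level quantity accumulated by Algorithm~\ref{Alg:accnr1}. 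Summing over levels, $\sum_v c_v^2=\sum_\gamma\sum_{v\in\gamma}c_v^2\le\sum_\gamma\zeta_\gamma=\rho$, and together with the reduction above this establishes the claimed $\left(\alpha,\tfrac{\alpha}{2\sigma^2}\rho\right)$-RDP.

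The delicate part is the reduction rather than the optimization: one must argue that the noisy tree may legitimately be treated as an adaptive composition even though the gradients are adaptively (indeed worst-case) chosen, and in particular that conditioning on the transcript of previously released noisy node values makes the $\theta_t$'s, hence the gradient \emph{directions}, coincide under $D$ and $D'$, so that the only residual discrepancy at node $v$ is the at most $c_v$ clipped gradients coming from the iterations in $S$ below $v$; this is where the ICML version went wrong by ignoring that a record can hit the same non-leaf node more than once. Everything afterwards—the per-node Gaussian RDP, the adaptive composition bookkeeping, and the small convex optimization—is routine, the only care being the boundary case $k^*_\gamma=k_\gamma$, in which there is no spare coordinate and the indicator correctly omits the leftover term.
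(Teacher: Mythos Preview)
Your proposal is correct and follows essentially the same approach as the paper: both reduce to bounding $\sum_v c_v^2$ via the level-wise quadratic program with constraints $\sum_i c_i\le E$ and $c_i\le\mu_\gamma$, obtaining exactly $\zeta_\gamma$ per level and then $\rho$ overall. The only cosmetic difference is that the paper treats the release of all nodes at a given level as a \emph{single} Gaussian mechanism with squared $\ell_2$-sensitivity $L^2\sum_{v\in\gamma}c_v^2$ and then composes across levels, whereas you compose node-by-node; since the per-node noises are independent and identically scaled, the two decompositions yield the identical RDP bound, and your more explicit handling of the adaptive conditioning (ensuring the $\theta_t$'s coincide under $D$ and $D'$ given the transcript) is a welcome clarification of a step the paper leaves somewhat implicit.
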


\begin{proof}

The proof proceeds by bounding the RDP cost for the release of all the noised node values at a given level $\level \in \tree$. By slight abuse of notation, we treat $\level$ as the set of nodes at that level, $\level = \left\{z_1,\ldots,z_k\right\}$. For any node $z \in \level$, let $\bfS_{z}\leftarrow \sum\limits_{i\in\leaf_z}\nabla_i$, where $\leaf_z$ is the set of leaf nodes in the subtree rooted at $z$ and $\nabla_i$ is the gradient of a single training example. We will analyze the RDP privacy cost of the query $\bfS_\level = \langle\bfS_{z_1}, \dots, \bfS_{z_k}\rangle$ with the Gaussian mechanism. The key is to bound the $\ell_2$ sensitivity of $\bfS_\level$ and compare that to the noise added.

Fix an example $x$ in the training data set and let $ c = \langle c_1,\ldots,c_k\rangle$ be the number of contributions of $x$ to each of the nodes $z_1, \dots, z_k$. We observe immediately that the $\ell_2$ sensitivity of $\bfS_\level$ is simply $\norm{L c}_2$, where $L$ is the $\ell_2$-Lipschitz constant of the individual loss functions, so it is sufficient to bound $\norm{c}^2_2$. We note $c$ is bound by two constraints:
\begin{itemize}
\item  By definition of $E$, $\sum\limits_{i=1}^k c(z_i) \le E$.
\item For any node $z \in \tree$, the number of leaves in the subtree rooted at $z$ is at most $2^{\sf height(z)}$. Since, each example is allowed to participate every $(\participation+1)$ steps, the number of contributions $c_i$ to any $z_i \in \gamma$ is upper bounded by  $\mu_\gamma = \left\lceil\frac{2^{\sf height(\gamma)}}{\participation+1}\right\rceil$.
\end{itemize}
Applying these two constraints, the following quadratic program can be used to bound  $\norm{c}^2_2$:
\begin{align}
    &\max\limits_{c_1,\ldots,c_k}\sum_{i=1}^k c_i^2,\nonumber\\
    &\text{s.t.} \sum\limits_{i=1}^k c_i \leq \epochs,\nonumber\\
    &\forall i\in[k], 0\leq c_i\leq \mu_\gamma
    \label{eq:ld1}
\end{align}

By KKT conditions, the above is maximized as follows: Letting $k^*=\min\left\{k,\left\lfloor\frac{E}{\mu_\gamma}\right\rfloor\right\}$, set $c_{1}=\mu_\gamma$, $c_{2}=\mu_\gamma,\ldots,c_{k^*}=\mu_\gamma$, set $c_{k^*+1}=E-k^*\mu_\gamma$, and set $c_i=$ for all $i>(k^*+1)$. 

Thus, $\norm{c}^2_2 \le \zeta$, and the $\ell_2$ sensitivity of $\bfS_\gamma$ is bounded by $L \sqrt{\zeta}$. Recall from \cref{app:tree} we add noise $\calN(0, L^2 \sigma^2)$ to each node, and hence to $\bfS_\gamma$. From these to facts and the RDP properties of Gaussian mechanism~\cite[Proposition 7]{mironov2017renyi}, we have that for a given order $\alpha$,
\[\frac{\alpha L^2 \zeta}{2 L^2 \sigma^2} = \frac{\alpha\zeta}{2\sigma^2}\] 
is the RDP cost of the complete level $\gamma$.

To complete the proof, it suffices to apply adaptive RDP composition across all the levels of the tree $\tree$.
\end{proof}

\mypar{Dynamic programming based improvement to Algorithm~\ref{Alg:accnr1}} Consider the same binary tree $\tr$ described above. For any data set $D$, for any fixed individual $x$, and a node $z\in\tr$, let $c(z)$ be the total number of participation of $x$ in the sub-tree rooted at $z$. Extending the level-wise argument of the proof of \cref{thm:acc} to the whole tree (as a single application of the Gaussian mechanism), it is not hard to see that DP-FTRL satisfies $\left(\alpha,\frac{\alpha}{2\sigma^2} \cdot \sum_{z\in\tr} c(z)^2\right)$-RDP guarantee. We can upper-bound this by computing $\zeta = \max_c \sum\limits_{z \in \tree} c(z)^2$ subject to $c$ being a realizable assignment under the participation constraints mentioned above. We provide the following program to compute an upper bound on $\zeta$.

\begin{align}
    \zeta\left(\contrib,\start,\en,\size\right)&=\contrib^2\cdot\mathbb{I}\left[\size\ \text{is a power of}\ 2\right]+\nonumber\\
    &\max\limits_{\substack{i\in\{0,\ldots,\contrib\}\\ j\in\{0,\ldots,\xi\}}}\left\{\zeta\left(\contrib-i,\start,j,k\right)+\zeta\left(i,j,\en,\size-k\right)\right\}\nonumber\\
    &\text{where}\ k\ \text{largest power of}\ 2<\size.
    \label{eq:dpsl}
\end{align}
The base cases is:

\begin{align}
    &\text{\bf if}\ \start+\contrib\cdot(\xi+1)>\size+\en,\ {\bf then}\  \zeta(\cdot,\start,\en,\size)=-\infty\nonumber\\
    &\zeta(0,\cdot,\cdot,\cdot) = 0\nonumber\\
    &\zeta(1,\cdot,\cdot,1) = 1
    \label{eq:base}
\end{align}

\begin{thm}
Let $\zeta^*=\max\limits_{w\in\{0,\ldots,\epochs\}}\zeta(w,0,\xi,T)$ in~\eqref{eq:dpsl}. Above privacy accounting for the \NoRestart instantiation of Algorithm~\ref{Alg:PFTRL} (Algorithm $\aftrl$) ensures $\left(\alpha,\frac{\alpha\zeta^*}{2\sigma^2}\right)$-R\'enyi differential privacy.
\label{thm:accdp}
\end{thm}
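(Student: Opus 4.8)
The plan is to treat the release of the whole noised aggregation tree as a single Gaussian mechanism, bound its $\ell_2$ sensitivity by a per-record ``contribution profile'', and then invoke the dynamic program of~\eqref{eq:dpsl}--\eqref{eq:base} to upper bound the worst-case profile.

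\emph{Reduction and sensitivity.} Exactly as in the proof of Theorem~\ref{thm:privFTRL}, every output of $\aftrl$ (the iterates produced by repeated $\gett$ calls) is a deterministic post-processing of the collection of noised node values $\{\bfS_z + \bfg_z\}_{z\in\tr}$, where $\bfS_z=\sum_{i\in\leaf_z}\nabla_i$ and the $\bfg_z\sim\calN(0,L^2\sigma^2\bfI_p)$ are independent across nodes. So by post-processing it suffices to account for the Gaussian mechanism applied to $f(D)=(\bfS_z)_{z\in\tr}$. Fix a record $x$ and the neighbor $D'$ obtained by replacing $x$ with $\nul$ (zero gradient). For each node $z$, $\bfS_z(D)-\bfS_z(D')$ is the sum of the clipped gradients of $x$'s gradient computations whose leaf lies in the subtree rooted at $z$; there are $c(z)$ of these, each of norm $\le L$, so $\ltwo{\bfS_z(D)-\bfS_z(D')}\le L\,c(z)$. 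Because the $\bfg_z$ are independent with a common variance, the squared $\ell_2$-sensitivity of $f$ is $\sum_{z\in\tr}\ltwo{\bfS_z(D)-\bfS_z(D')}^2\le L^2\sum_{z\in\tr}c(z)^2$. By the RDP bound for the Gaussian mechanism~\cite[Proposition 7]{mironov2017renyi} this gives, against this neighbor, $\bigl(\alpha,\frac{\alpha}{2\sigma^2}\sum_{z\in\tr}c(z)^2\bigr)$-RDP; since this holds for every neighboring pair, $\aftrl$ is $\bigl(\alpha,\frac{\alpha\zeta}{2\sigma^2}\bigr)$-RDP, where $\zeta=\max_c\sum_{z\in\tr}c(z)^2$ ranges over all contribution vectors $c=(c(z))_{z\in\tr}$ realizable under the participation constraints: $x$ contributes to at most $\epochs$ leaves and any two successive contributing leaves are $\ge\participation+1$ apart. (This joint maximization over the whole tree is never worse than the level-by-level bound of Theorem~\ref{thm:acc}, since $\sum_{\level}\max_c\sum_{z\in\level}c(z)^2\ge\max_c\sum_{z\in\tr}c(z)^2$.)

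\emph{The dynamic program upper bounds $\zeta$.} We interpret $\zeta(\contrib,\start,\en,\size)$ as an upper bound on $\sum_z c(z)^2$ over the sub-forest spanning a contiguous block of $\size$ leaves, when $x$ makes exactly $\contrib$ contributions in that block, $\start$ and $\en$ recording the slack the separation constraint leaves at the left and right boundaries of the block. A block of $\size$ leaves decomposes canonically into a perfect left subtree on $k$ leaves ($k$ the largest power of two below $\size$, whose root is a genuine tree node spanning a dyadic range), followed by a residual block on $\size-k$ leaves; a participation pattern restricted to the block splits as ``$\contrib-i$ contributions on the left, $i$ on the right'', with the boundary-crossing separation encoded by an auxiliary state $j\in\{0,\dots,\participation\}$; and $\contrib^2$ is added precisely when $\size$ is a power of two, since only then does a single tree node span the whole block. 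Minimum separation is enforced by the infeasibility base case $\start+\contrib(\participation+1)>\size+\en\Rightarrow\zeta=-\infty$, together with $\zeta(0,\cdot,\cdot,\cdot)=0$ and $\zeta(1,\cdot,\cdot,1)=1$. One then proves by induction on $\size$ that $\zeta(\contrib,\start,\en,\size)$ dominates $\sum_z c(z)^2$ for every feasible $c$ on the block: the inductive step is exactly the case split above, using that a feasible pattern on the whole block induces feasible patterns on the two sub-blocks whose boundary data agree through some value of $j$, so the ``$\max$'' in~\eqref{eq:dpsl} ranges over (a superset of) all such patterns. Applying this at the root ($\start=0$, $\en=\participation$, $\size=T$, and $\contrib=w$ ranging over $\{0,\dots,\epochs\}$) gives $\zeta\le\zeta^*$, and combining with the previous paragraph yields the claimed $\bigl(\alpha,\frac{\alpha\zeta^*}{2\sigma^2}\bigr)$-RDP.

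\emph{Main obstacle.} Everything up to the sensitivity bound is a routine variant of Theorems~\ref{thm:privFTRL} and~\ref{thm:acc}. The crux is the correctness of the dynamic program: fixing the precise semantics of the four-parameter state (in particular the bookkeeping roles of $\start$, $\en$, and the split state $j$), and then verifying that the left-perfect/right-residual decomposition together with the boundary state faithfully captures \emph{every} cross-block separation constraint, so the DP never undercounts a feasible contribution profile and hence genuinely upper bounds $\zeta$. A secondary point to get right is the treatment of $T$ not a power of two and the padded leaves of $\tr$ (harmless, since such nodes have $c(z)=0$), and matching the power-of-two indicator in~\eqref{eq:dpsl} to exactly the nodes that DP-FTRL actually instantiates.
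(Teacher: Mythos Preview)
Your proposal is correct and follows essentially the same approach as the paper: reduce to a single Gaussian mechanism on the whole tree, bound the squared $\ell_2$-sensitivity by $L^2\sum_{z\in\tr}c(z)^2$, then argue that the dynamic program~\eqref{eq:dpsl}--\eqref{eq:base} upper bounds the maximum of $\sum_z c(z)^2$ over all feasible contribution patterns. The paper's proof is in fact briefer than yours---it simply spells out the intended semantics of the four parameters (the first $\start$ leaves must be empty; $\en$ extra ``overflow'' slots are appended at the right so the post-contribution gap can spill over; when $\size$ is not a power of two the block is viewed as a forest of complete binary trees from largest to smallest) and asserts that the recursion encodes exactly this logic, with the base cases immediate---whereas you additionally sketch the inductive verification and explicitly flag the boundary-state bookkeeping as the crux, which is a fair assessment.
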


\begin{proof}
We argue that $\zeta$ in~\eqref{eq:dpsl} enumerates all the valid configurations of valid cost functions $c$ described above. 

First, suppose $\size$ is a power of two. Then there is a complete binary tree with $\size$ leaves which we think of as being numbered sequentially. We must place $\contrib$ contributions at leaves of this tree, with the constraint that there are at least $\xi$ empty leaves following each leaf with a contribution (and, of course, each leaf can only have one contribution). However, the empty leaves after the last contribution can ``overflow'' by $\en$, i.e., we can imagine $\en$ extra empty spaces are added at the end of the tree to satisfy this constraint.  In addition the first $\start$ leaves must be empty. In other words, $\start$ spaces are subtracted from the beginning and $\en$ spaces are added to the end of the tree.
The value of the tree is the sum over all nodes (leaves and internal) of the square of the number of contributions in the subtree rooted at that node. So a leaf has value $1$ or $0$ depending on whether or not it has a contribution and an internal node has value $c^2$, where $c$ is the number of contributions at leaves under this node. The function $\zeta\left(\contrib,\start,\en,\size\right)$ computes the maximum total value of the tree over an arbitrary placement of contributions satisfying the constraints. 
Second, if $\size$ is not a power of two, then instead of a single complete binary tree, we have multiple complete binary trees of different sizes, which are arranged from largest to smallest.~\eqref{eq:dpsl} exactly encodes this logic.

The base cases are immediate given the above description of the recursive statement. This completes the proof.
\end{proof}

\subsubsection{Analysis for a Simpler Case: \NoRestart with Given Data Order}
\label{sec:privacy given order}
The previous analysis assumes only a gap between two successive participation of any user and does not constrain the data order in any other way.
In some special cases, especially the centralized setting, the server takes full control of the training process and can decide which samples to use in each training step. We can thus take advantage of such knowledge for a simpler (in terms of the privacy accounting algorithm but not necessarily the computation time) privacy computation.

Suppose we use stochastic gradient with mini-batch of size $1$. Then for each node in the tree, we can obtain a list of samples that affect the node, which can be used to compute the sensitivity of the node with respect to each sample. Summing up the (squared) sensitivity of every node, we can then take maximum among all samples and get the final (squared) sensitivity. The algorithm for the computing the squared sensitivity is described in Algorithm~\ref{Alg:privacy given order}.

\begin{algorithm}[ht]
\caption{Privacy accounting for \NoRestart with given data order}
\begin{algorithmic}[1]
\REQUIRE A given data order $[s_1, \dots, s_m]$ where $s_i \in [n]$ with $n$ being the number of samples.
\STATE $\rho_i \leftarrow 0$ for $i \in [n]$ \COMMENT{$\rho_i$ will record the squared sensitivity with respect to sample $i$}
\STATE Construct the leaf layer of the tree $\calL_1 = [(s_1), \dots,(s_m)]$
\STATE $\rho_{s_i} \leftarrow 1$ for $s_i = s_1, \dots s_m$
\FOR{$\level = 2, \dots, \left\lfloor \log_2(m) \right\rfloor + 1$}
\STATE Construct $\calL_{\level}$ by concatenating every two consecutive nodes in $\calL_{\level -1}$, ignoring the rest if $|\calL_{\level-1}|$ is not even
\FOR{every node in $\calL_{\level}$}
\FOR{$i = 1,\dots, n$}
\STATE $\rho_i \leftarrow \rho_i + c^2$, where $c$ is the count of $i$ in the node
\ENDFOR
\ENDFOR
\ENDFOR
\STATE {\bf return} $\max\limits_{i\in [n]}\rho_i$.
\end{algorithmic}
\label{Alg:privacy given order}
\end{algorithm}

Consider a simple example where we have four samples indexed with $1, \dots, 4$, and use them in the order of $[1, 2, 3, 1, 4]$ during training. Representing each node in the tree with a list of samples it contains, we will have five leaf nodes $[(1), (2), (3), (1), (4)]$, two nodes $[(1, 2), (3, 1)]$ in the middle layer, and $(1, 2, 3, 1)$ as the root node. The root node, for example, have sensitivity $2$ with respect to sample $1$, $1$ with respect to $2$ and $3$, and $0$ with respect to $4$. Summing up over all nodes, the whole tree has sensitivity $\sqrt{8}$ with respect to sample $1$, $\sqrt{3}$ with respect to $2$ and $3$, and $1$ with respect to $4$. We take the maximum and conclude that the tree has sensitivity $\sqrt{8}$.

When mini-batch has size larger than $1$, if the batches are formed by the same set of samples across epochs such that one sample affects the same batch, we can simply consider sensitivity with respect to a batch, i.e., $i$ could be used to represent the $i$-th batch instead of the $i$-th sample. If samples might not be grouped in the same way across epochs, then each leaf node would consist of multiple samples that constitute the corresponding batch, and the same computation follows.

Now we consider the time complexity.
We need to construct a tree with nodes being lists of samples / batches. Suppose we build the tree layer by layer from leaf to root. Merging two consecutive nodes to form their parent takes $O(m_i + m_j)$ complexity where $m_i$ and $m_j$ are their sizes. Therefore, forming a layer based on the previous layer takes $O(m)$, where $m$ is the total number of samples / batches across epochs, depending on whether the batches are always formed in the same way through training. The construction of the whole tree thus takes $O(m \log(m))$ time.
The sensitivity computation takes $O(m_i)$ for a node of size $m_i$, and thus enumerating through the tree takes $O(m \log(m))$ as well.
Therefore, the total time complexity is $O(m \log(m))$.
The running time is higher than that of Algorithm~\ref{Alg:accnr1}, but potentially smaller than the dynamic programming version of it.

In our centralized learning experiments,
we shuffle the dataset right after each restarting, then keep the same batches and go through them in the same order across epochs. We use the above privacy computation.

\subsection{Combining \Restart and \NoRestart}
\label{sec:DP-FTRL-SometimesRestart}

In Section~\ref{sec:empEval}, we consider the case where the tree is restarted after every epoch. Each user thus participate only in one leaf of the tree in the privacy analysis. However, restarting may cause instability and hurt model utility. On the other hand, if we keep using the same tree through the training process with the privacy analysis in Appendix~\ref{sec:DP-FTRL-NTR}, each user can affect multiple leaf nodes. The sensitivity is thus high and larger noise is needed under a fixed privacy budget.
Given the trade-off, a natural schedule to consider is to restart the tree every few epochs. For example, if we would like to train for $100$ epochs on CIFAR-10 ($n=50000$) with batch size $500$ and achieve $\epsilon \approx 23.0$, we can either restart every epoch with noise multiplier $\approx 7$, restart every $5$ epochs with noise $\approx 8.5$, restart every $20$ epochs with noise $\approx 12$, or use a single tree (no-restart) with noise $32$.
The privacy computation follows from Appendix~\ref{sec:privacy given order}.

\subsubsection{A Binary Tree Completion Trick and the Privacy Analysis}
\label{sec:tree completion}
Looking at the binary tree of noisy gradients, we can easily see that the amount of noise added to each prefix sum depends on its location in the tree. Specifically, if we consider the original tree-aggregation protocol, to release the prefix sum up to leaf node $i$, the noise will scale with the number of bits that are set in the binary representation of $i$.
A natural trick to consider is thus to complete the tree with ``virtual steps'' such that the noise is the smallest. 
Consider one of the settings in Figure~\ref{fig:acc_privacy} -- running CIFAR-10 ($n=50000$) with batch size $2000$. After one epoch, the tree consists of $25$ nodes. Instead of restarting immediately, we can instead run an additional $7$ ``virtual steps'' so that the tree is complete with $32$ nodes. The virtual steps can be thought of as adding virtual samples of $0$s, i.e., instead of privately releasing the sum of the past $25$ gradients, we privately release the sum of the $25$ gradients and $7$ $0$s. 
This way, the additive noise in the last step can improve by a factor of $3$ in the original tree-aggregation protocol, and by roughly a factor of $4$ ($\approx2.05$ vs. $\approx0.51$) with the trick from~\cite{honaker2015efficient}.
This can be crucial as the future models will be based on the last noise before restarting.

The virtual steps do not come for free as we have added more nodes in the tree, yet the cost of privacy is less than that of actual gradients, because the virtual samples are fixed to $0$ and do not increase the sensitivity. 
For example, if we use the privacy computation in Algorithm~\ref{Alg:privacy given order} where the complete ordering of samples / batches is known and aim to add $A$ virtual steps, we can append $A$ virtual leaf nodes in the end of the leaf layer with, for example, a special symbol $\star$ to indicate it is a virtual sample. The tree construction steps is exactly the same as before. The only difference is in the sensitivity computation, where we simply ignore the virtual sample $\star$.
Taking the example in Appendix~\ref{sec:privacy given order} where the training uses samples $[1,2,3,1,4]$, we might add $3$ virtual steps to form a tree of $8$ leaf nodes $[(1),(2),(3),(1),(4),(\star),(\star),(\star)]$, $4$ nodes $[(1,2),(3,1),(4,\star),(\star,\star)]$ in the 2nd layer, $[(1,2,3,1),(4,\star,\star,\star)]$ in the 3rd layer and $(1,2,3,1,4,\star,\star,\star)$ as the root node.
This way, sample $1$ would have sensitivity $\sqrt{12}$, sample $2$, $3$ and $4$ would have sensitivity $2$.

There is a trade-off between the scale of the last noise before restarting and the additional privacy cost. For example, we can expect that if the size of the tree is far away from the next power of two, then the additional privacy cost might overwhelm the gain in the noise scale; if the tree is almost complete, then we might expect the trick to help.

\section{Omitted Details for Experiment Setup (Section~\ref{sec:exp_setup})}
\label{app:expt_setup}

\subsection{Additional Details on Model Architectures}
\label{app:mod_archs}
Table~\ref{table:mnist_nn} shows the model architecture for MNIST and EMNIST, Table~\ref{table:cifar10_nn} shows that for CIFAR-10, and Table~\ref{table:stackoverflow_nn}
 shows the neural networks adopted from \citep{reddi2020adaptive}.
 
\begin{table*}[!htbp]
\caption{Model architectures for all experiments.}
\begin{subtable}[t]{0.5\textwidth}
\caption{Model architecture for MNIST and EMNIST.}
\label{table:mnist_nn}
\centering
\begin{tabular}{ c c }
\toprule
Layer & Parameters \\
\midrule
Convolution & 16 filters of $8 \times 8$, strides 2 \\
Convolution & 32 filters of $4 \times 4$, strides 2 \\
Fully connected & $32$ units \\
Softmax & - \\
\bottomrule
\end{tabular}
\end{subtable}
~
\begin{subtable}[t]{0.45\textwidth}
\caption{Model architecture for CIFAR-10.}
\label{table:cifar10_nn}
\centering
\begin{tabular}{ c c }
\toprule
Layer & Parameters \\
\midrule
Convolution  $\times 2$ & 32 filters of $3 \times 3$, strides 1\\
Max-Pooling & $2 \times 2$, stride 2 \\
Convolution  $\times 2$ & 64 filters of $3 \times 3$, strides 1 \\
Max-Pooling & $2 \times 2$, stride 2 \\
Convolution  $\times 2$ & 128 filters of $3 \times 3$, strides 1 \\
Max-Pooling & $2 \times 2$, stride 2 \\
Fully connected & $128$ units \\
Softmax & - \\
\bottomrule
\end{tabular}
\end{subtable}
\centering
\begin{subtable}[t]{0.5\textwidth}
\caption{Model architecture for StackOverflow. \citep{reddi2020adaptive}}
\label{table:stackoverflow_nn}
\centering
\begin{tabular}{ c c c }
\toprule
Layer & Output Shape & Parameters \\
\midrule
Input & 20 & 0\\
Embedding & (20, 96) & 960384 \\
LSTM & (20,670) & 2055560 \\
Dense & (20, 96) & 64416 \\
Dense & (20, 10004) & 970388 \\
Softmax & - & -\\
\bottomrule
\end{tabular}
\end{subtable}
\end{table*}

\subsection{Comparison of Optimizers with their Momentum Variants}
\label{app:mom_comp}

\mypar{Centralized Learning}
Figures~\ref{fig:image momentum sgd} and \ref{fig:image momentum ftrl} show a comparison between the original and the momentum versions of DP-SGD (denoted by ``DP-SGD'' and ``DP-SGDM'') and DP-FTRL (denoted as ``DP-FTRL'' and ``DP-FTRLM''), respectively.
For both DP-SGD and DP-FTRL, we consider both small and large number of epochs (presented in the top and bottom row respectively) on the three centralized example-level DP image classification tasks. 
The small-epoch setting follows from that in Section~\ref{sec:empEval}.
The large-epoch setting follows from that in Appendix~\ref{sec:interleaving}. The number of epochs is $20$, $100$ and $50$ for MNIST, CIFAR-10 and EMNIST under the smaller batch size and is 4 times that for larger the batch size. For DP-FTRL(M), we use the tree-completion trick~\ref{sec:tree completion}. On CIFAR-10, we restart every $5$ epochs, and on EMNIST, we restart every epoch -- following the setting that achieves the highest accuracy in Figure~\ref{fig:different restart} in Appendix~\ref{sec:interleaving}.

In Figure~\ref{fig:image momentum sgd}, for CIFAR-10, DP-SGDM outperforms DP-SGD for smaller number of epochs, and DP-SGD is better for larger number of epochs. For the other settings, the two variants are similar.
In Figure~\ref{fig:image momentum ftrl}, we can see that the accuracy of DP-FTRLM is always at least that of DP-FTRL (sometimes even more).

Therefore, we use DP-SGDM in Section~\ref{sec:empEval} (small-epoch setting) and DP-SGD in Appendix~\ref{sec:interleaving} (large-epoch setting). We use DP-FTRLM in both settings.

\begin{figure}[!htbp]
\centering
\captionsetup[subfigure]{justification=centering}
\begin{subfigure}[b]{0.3\textwidth}
\centering
\includegraphics[width=\textwidth]{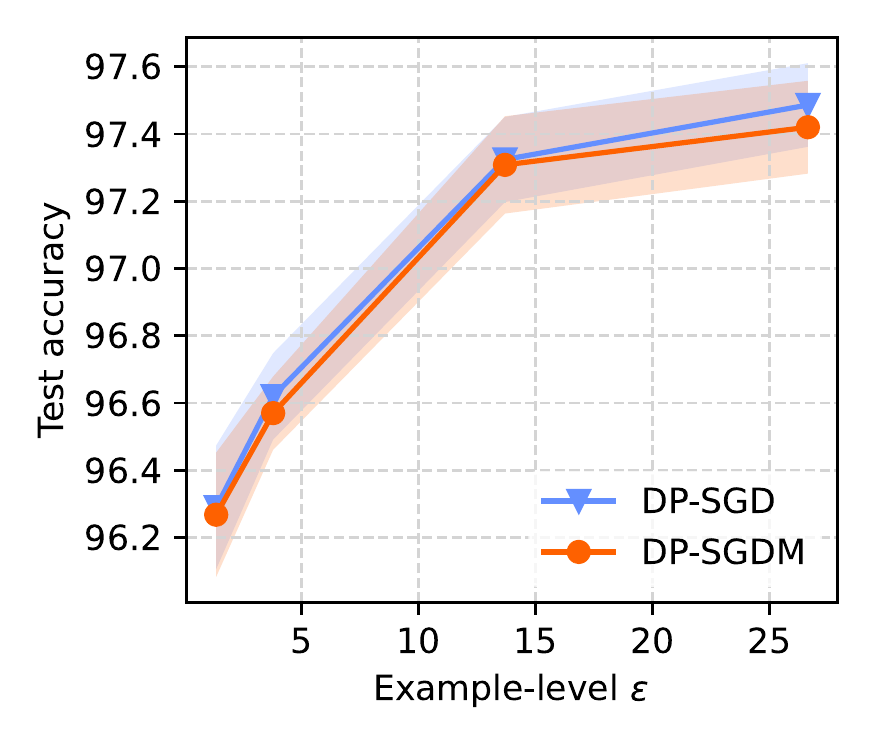}
\caption{MNIST. $5$ epochs.}
\end{subfigure}
\begin{subfigure}[b]{0.3\textwidth}
\centering
\includegraphics[width=\textwidth]{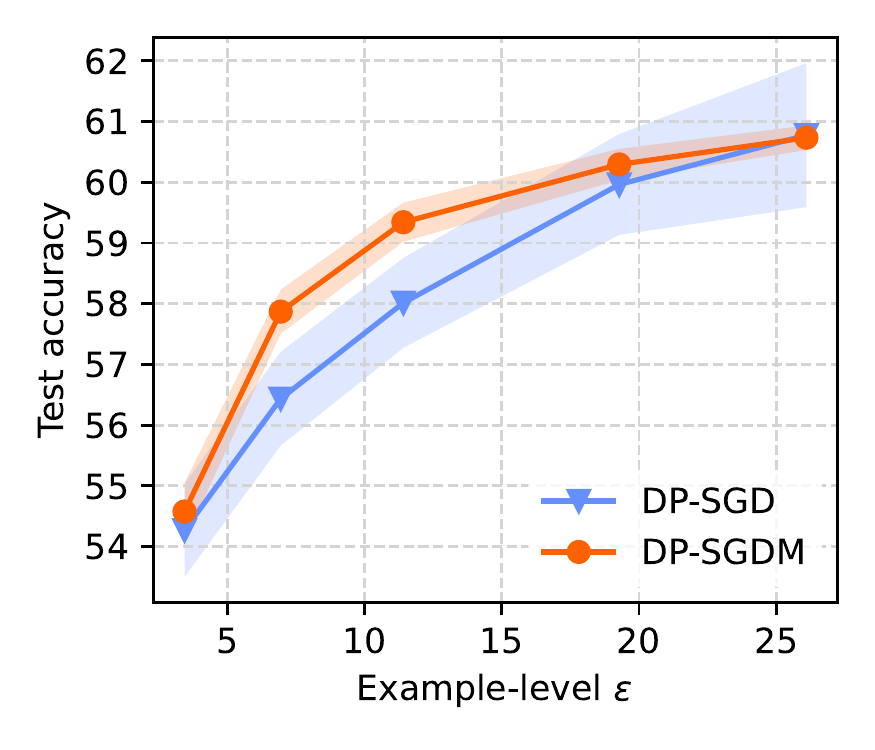}
\caption{CIFAR-10. $5$ epochs.}
\end{subfigure}
\begin{subfigure}[b]{0.3\textwidth}
\centering
\includegraphics[width=\textwidth]{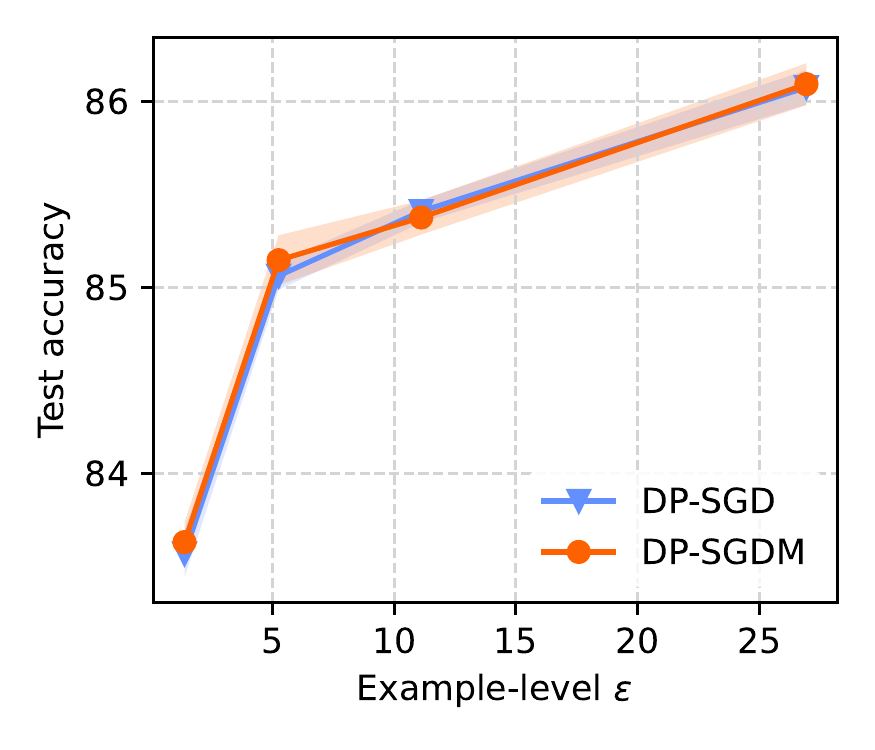}
\caption{EMNIST. $5$ epochs.}
\end{subfigure}
\begin{subfigure}[b]{0.3\textwidth}
\centering
\includegraphics[width=\textwidth]{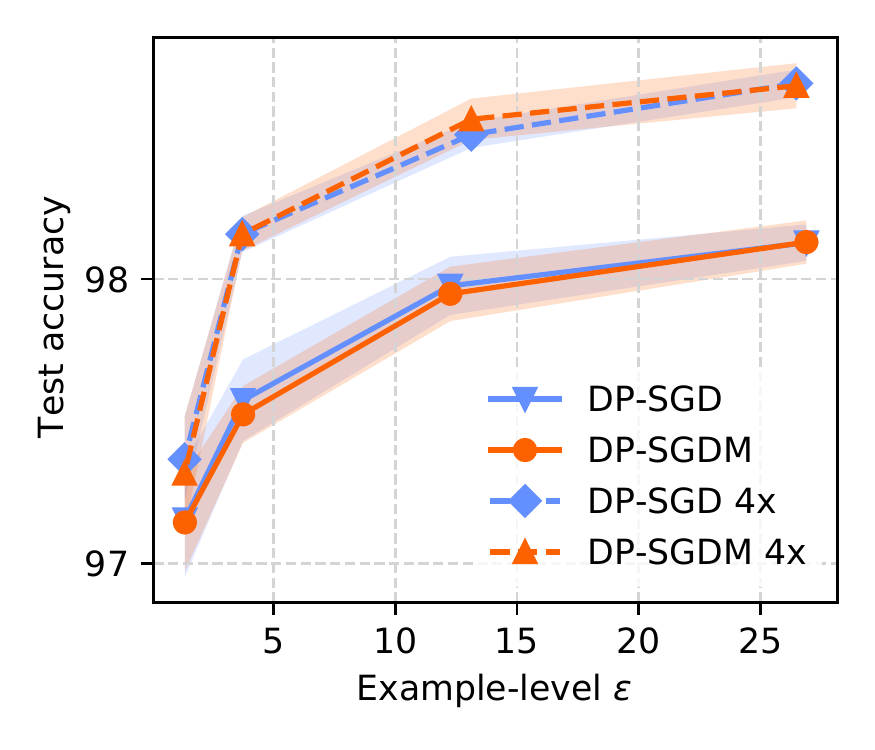}
\caption{MNIST, $20$ / $80$ epochs for smaller / larger batch.}
\end{subfigure}
\begin{subfigure}[b]{0.3\textwidth}
\centering
\includegraphics[width=\textwidth]{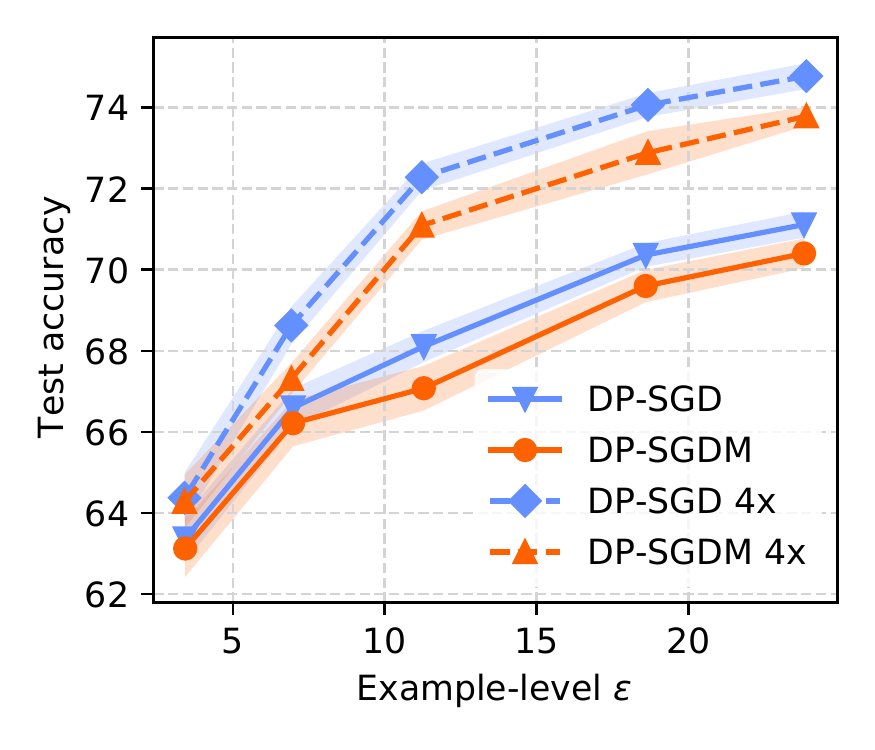}
\caption{CIFAR-10, $100$ / $400$ epochs for smaller / larger batch.}
\end{subfigure}
\begin{subfigure}[b]{0.3\textwidth}
\centering
\includegraphics[width=\textwidth]{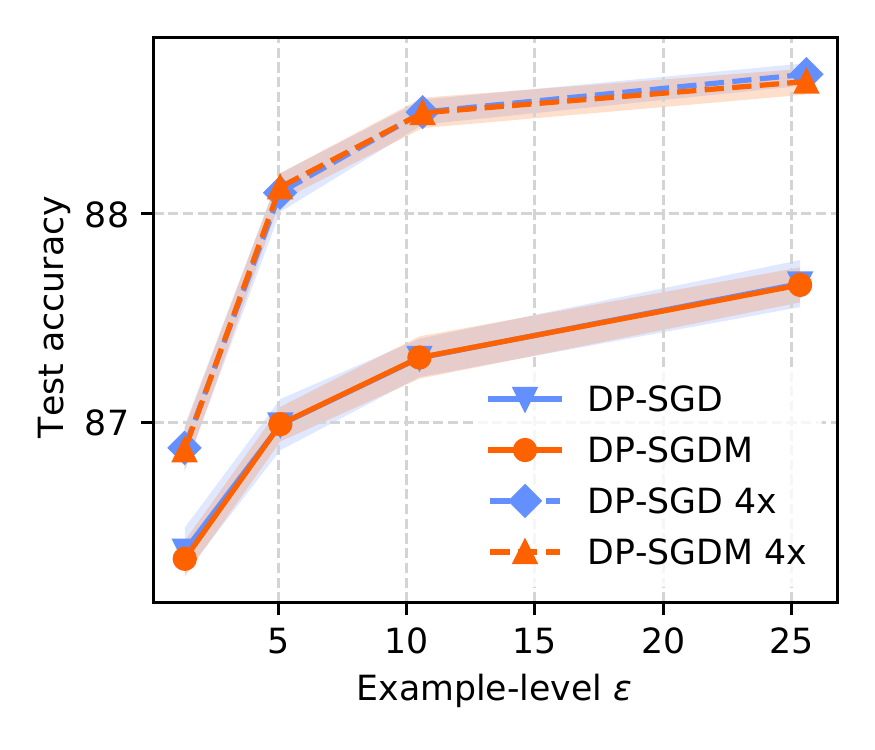}
\caption{EMNIST, $50$ / $200$ epochs for smaller / larger batch.}
\end{subfigure}
\caption{
Effect of momentum in DP-SGD. Final test accuracy vs.\ privacy (example-level $\epsilon$) for various noise multipliers. 
For smaller number of epochs (top row), DP-SGDM is better for CIFAR-10; for larger number of epochs (bottom row), DP-SGD is better for CIFAR-10. In the other cases, the two variants are similar.}
\label{fig:image momentum sgd}
\end{figure}

\begin{figure}[!htbp]
\centering
\captionsetup[subfigure]{justification=centering}
\begin{subfigure}[b]{0.3\textwidth}
\centering
\includegraphics[width=\textwidth]{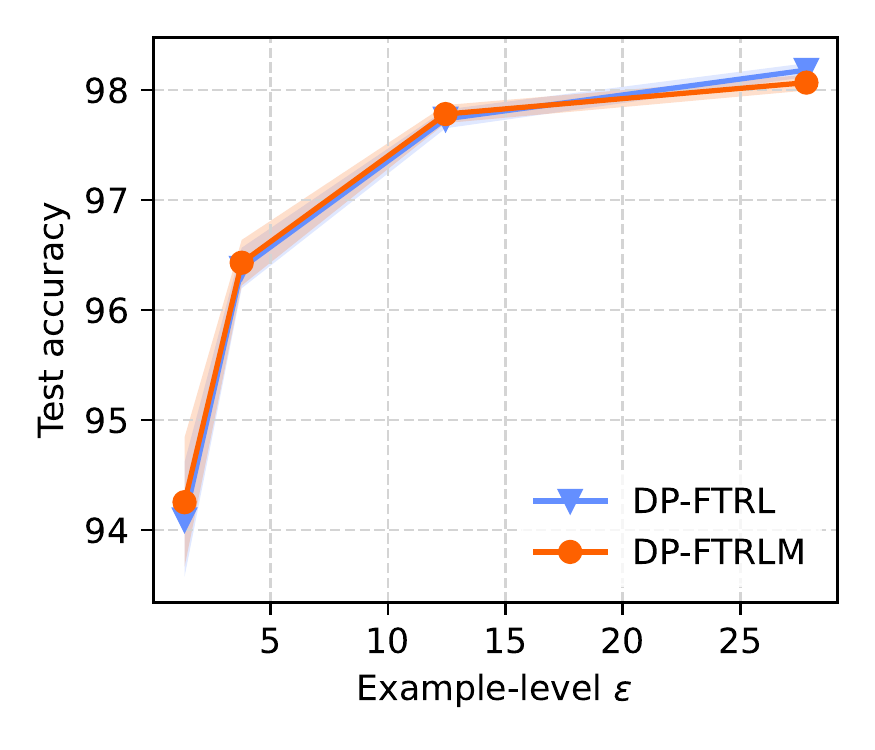}
\caption{MNIST. $5$ epochs.}
\end{subfigure}
\begin{subfigure}[b]{0.3\textwidth}
\centering
\includegraphics[width=\textwidth]{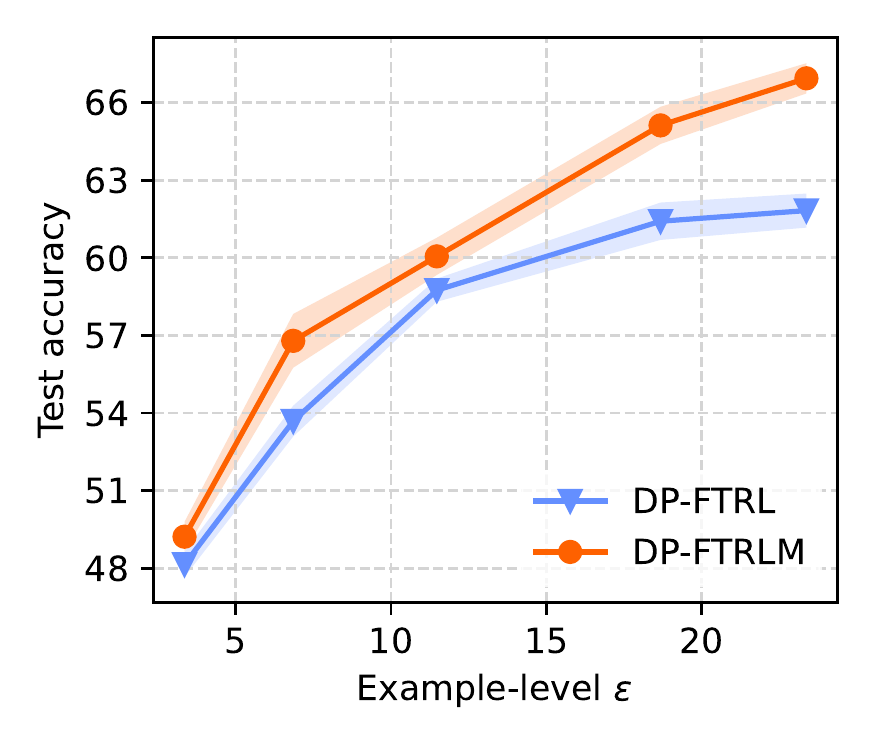}
\caption{CIFAR-10. $5$ epochs.}
\end{subfigure}
\begin{subfigure}[b]{0.3\textwidth}
\centering
\includegraphics[width=\textwidth]{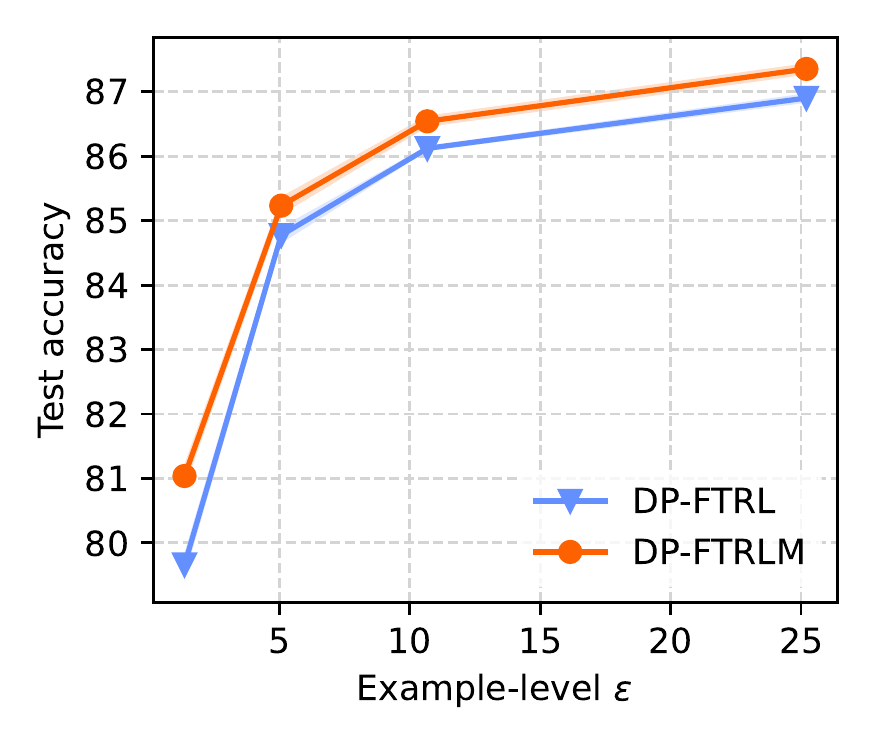}
\caption{EMNIST. $5$ epochs.}
\end{subfigure}
\begin{subfigure}[c]{0.3\textwidth}
\centering
\includegraphics[width=\textwidth]{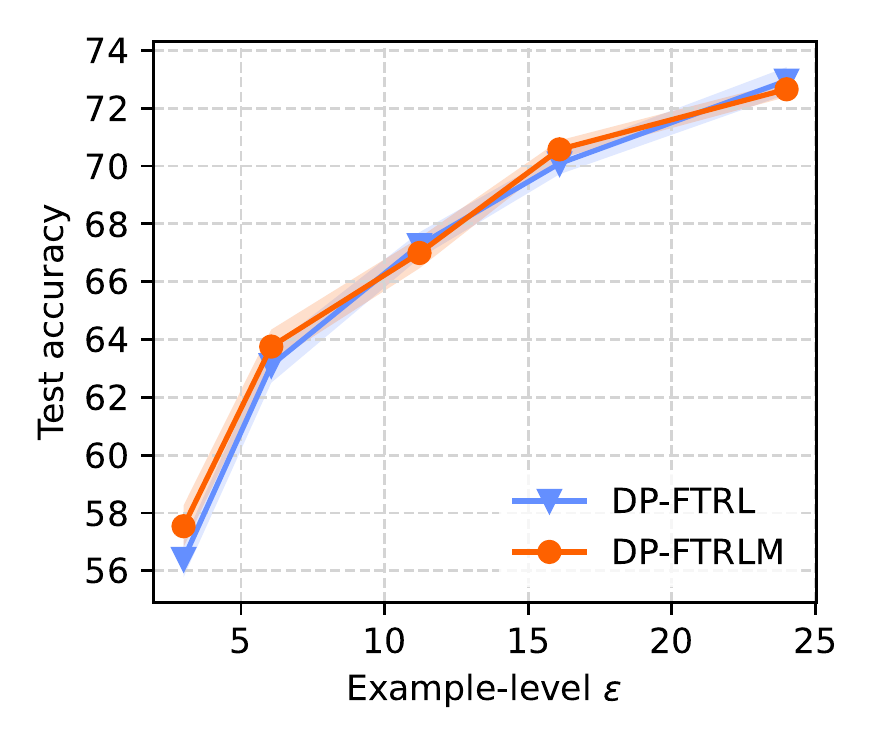}
\caption{CIFAR-10. $100$ epochs.}
\end{subfigure}
\begin{subfigure}[c]{0.3\textwidth}
\centering
\includegraphics[width=\textwidth]{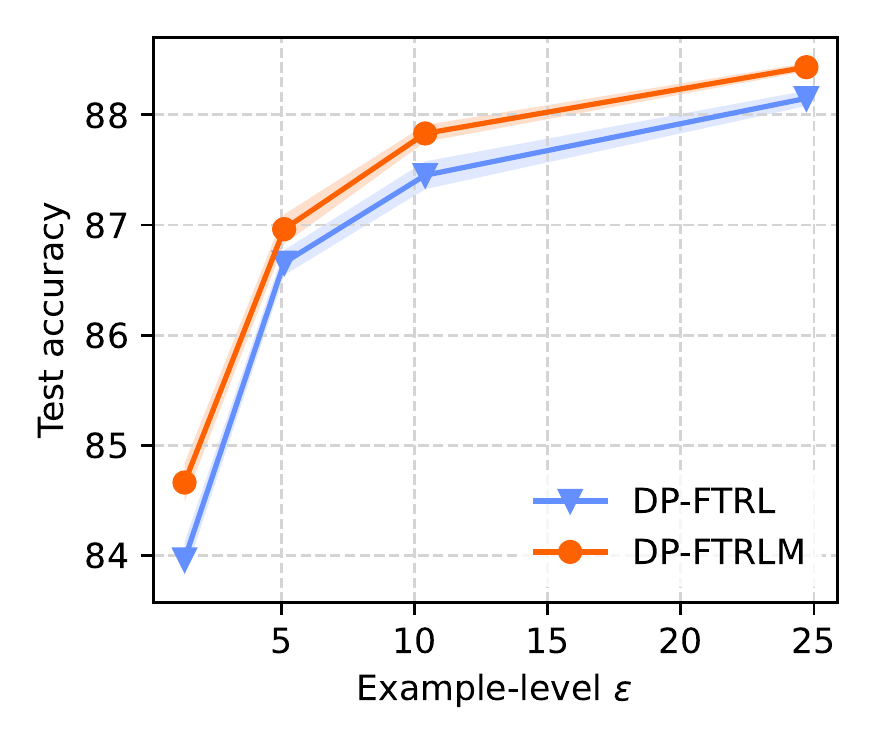}
\caption{EMNIST. $50$ epochs.}
\end{subfigure}
\caption{Effect of momentum on DP-FTRL. Final test accuracy vs.\ privacy (example-level $\epsilon$) for various noise multipliers. DP-FTRLM outperforms DP-FTRL.}
\label{fig:image momentum ftrl}
\end{figure}

\mypar{Federated Learning}
The experiments in \cref{tab:fl-momentum} and \cref{fig:fl-momentum} show the advantages of the momentum variant for the federated StackOverflow task in practice. We compare DP-SGD and its momentum variant DP-SGDM, DP-FTRL and its momentum variant DP-FTRLM under two different privacy epsilons. Privacy epsilon is infinite when noise multiplier is zero; privacy epsilon is 8.53 when noise multiplier is 0.4 for DP-SGD and DP-SGDM; privacy epsilon is 8.5 when noise multiplier is 2.33 for DP-FTRL and DP-FTRLM. We tune and select the hyperparameter with the best validation accuracy \footnote{The accuracy for StackOverflow next word prediction task excludes the end of sequence symbol and the out of vocabulary symbol following \citep{reddi2020adaptive}. The hyperparameters tuning range are described in \cref{app:hyp_tun_privTarget}.}. We then run the experiment with the specific set of hyperparameters for five times to estimate mean and standard deviation of the accuracy. 

\begin{table}[!htbp]
    \centering
    \begin{tabular}{|c|c|c|c|S|S|S|}
    \hline
     \multirow{2}{*}{Server Optimizer} & \multirow{2}{*}{Epsilon} & \multicolumn{2}{c|}{Accuracy} & \multicolumn{3}{c|}{Hyperparameters} \\ 
     \cline{3-7} & & Validation  & Test & \text{ServerLR} & \text{ClientLR} & \text{Clip}  \\ 
    \hline
    DP-SGD & \multirow{4}{*}{$\infty$} & 19.62 $\pm$ .12 & 20.99 $\pm$ .11 & 3 & .5 & 1\\
    DP-SGDM & & 23.87 $\pm$ .22 & 24.89 $\pm$ .27 & 3 & .5 & 1\\
    DP-FTRL & & 19.95 $\pm$ .05 & 21.12 $\pm$ .14 & 3 & .5 & 1\\
    DP-FTRLM & & 23.89 $\pm$ .03 & 25.15 $\pm$ .07 & 3 & .5 & 1\\
    \hline
    DP-SGD & \multirow{2}{*}{7.51} & 16.83 $\pm$ .05 & 18.25 $\pm$ .05 & 3 & .5 & .3 \\
    DP-SGDM & & 16.92 $\pm$ .03 & 18.27 $\pm$ .04 & .1 & .5 & 1\\
    DP-FTRL & \multirow{2}{*}{7.83} & 15.04 $\pm$ .16 & 15.46 $\pm$ .39 & 3 & .5 & .3\\
    DP-FTRLM & & 17.78 $\pm$ .08 & 18.86 $\pm$ .15 & 1 & .5 & .3\\
    \hline
    \end{tabular}
    \caption{Validation and test accuracy for the StackOverflow next word prediction task. Each experiment is run five times to calculate the mean and standard deviation.  Vanilla tree aggregation \citep{Dwork-continual} is used in DP-FTRLM. The momentum variant DP-FTRLM performs better than DP-FTRL.} 
    \label{tab:fl-momentum}
\end{table}

\begin{figure}[!htbp]
\centering
\begin{subfigure}[b]{0.45\textwidth}
\centering
\includegraphics[width=\textwidth]{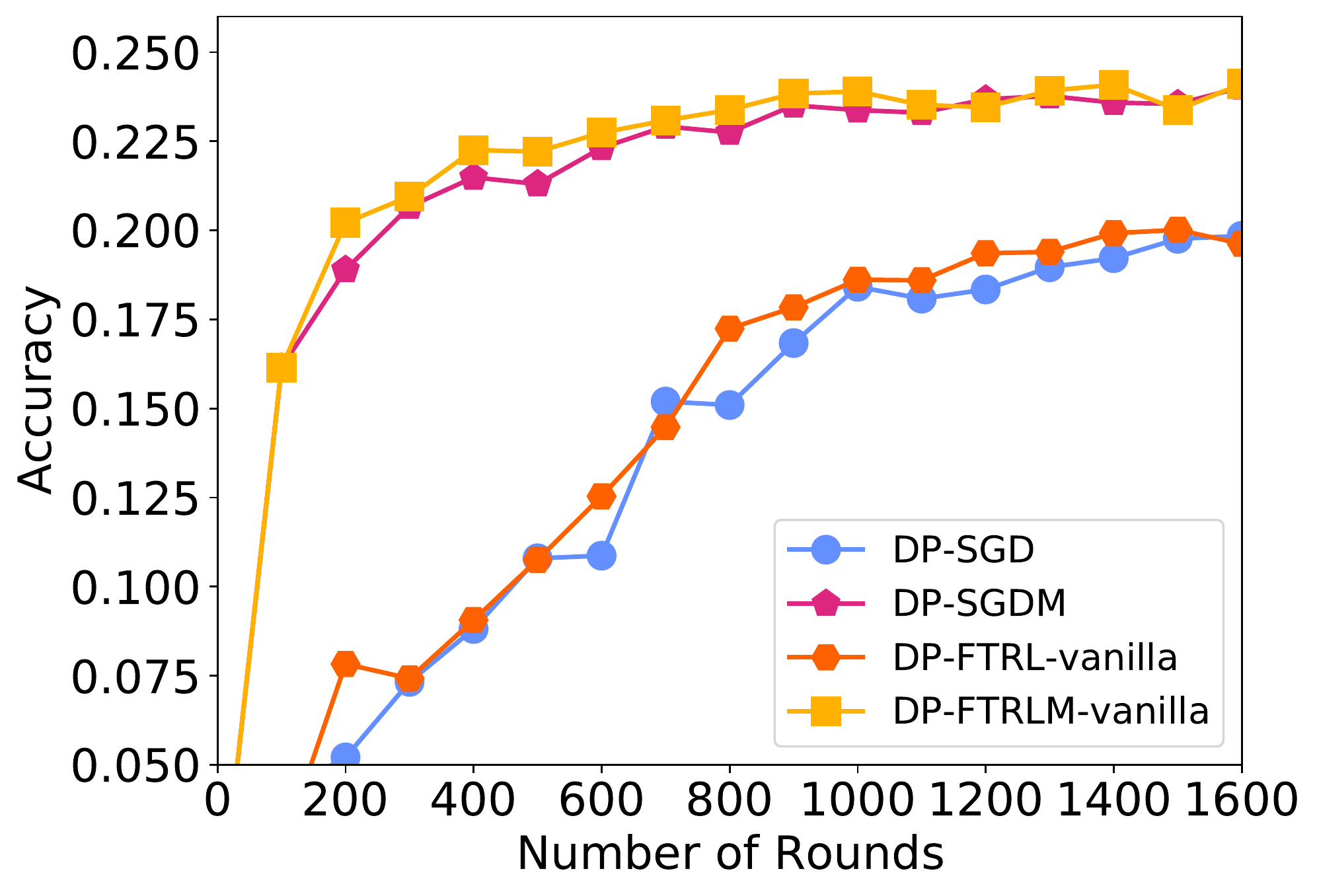}
\caption{Privacy epsilon $\infty$}
\end{subfigure}
\begin{subfigure}[b]{0.45\textwidth}
\centering
\includegraphics[width=\textwidth]{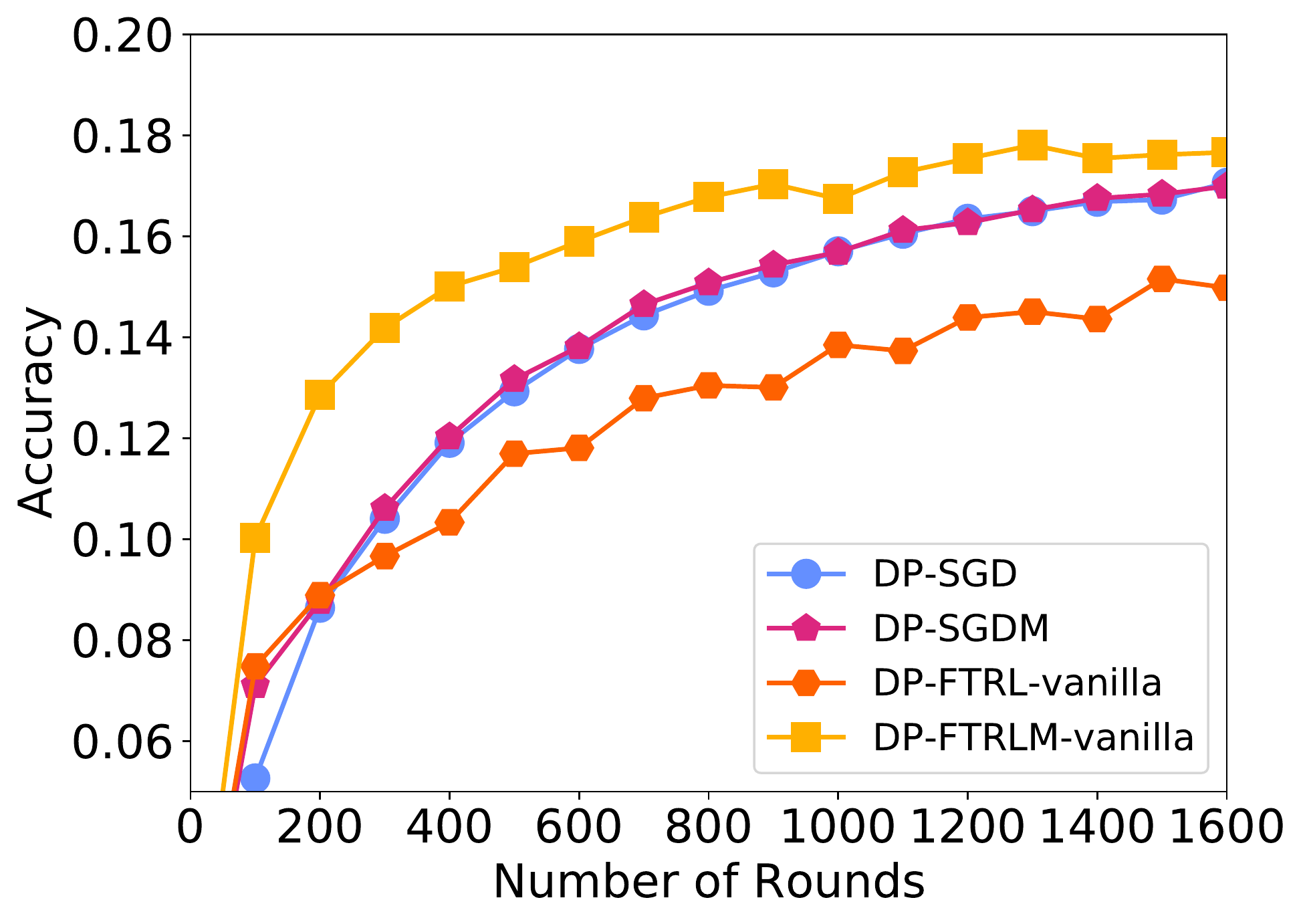}
\caption{Privacy epsilon  $\sim 8.5$}
\end{subfigure}
\caption{Training curves show validation accuracy of StackOverflow. The curve of the best validation accuracy out of the five runs is presented. Vanilla tree aggregation \citep{Dwork-continual} is used in DP-FTRL-vanilla and DP-FTRLM-vanilla. The momentum variant converges faster and performs better. 
}
\label{fig:fl-momentum}
\end{figure}

The momentum variant helps in two ways for StackOverflow: momentum significantly improve the performance of both SGD and FTRL when the noise is relatively small; moreover, momentum stabilizes DP-FTRL when the noise is relatively large. Note that the tree aggregation method in DP-FTRL use different privacy calculation method compared to DP-SGD. A relatively large noise multiplier has to be used to achieve the same privacy $\epsilon$ guarantee. While tree aggregation in DP-FTRL exploits the $O(\log n)$ accumulated noise, it also introduces unstable jump for the noise added in each round, which could be mitigated by the momentum $\gamma$ introduced in DP-FTRLM. In the experiments of StackOverflow, we will always use the momentum variant unless otherwise specified. 

\subsection{Efficient Tree Aggregation}
\mypar{Centralized learning}
Figure~\ref{fig:image efficient} shows a comparison between the efficient (``FTRLM'')~\cite{honaker2015efficient} and the original version (``FTRLM-vanilla'')~\citep{Dwork-continual} of FTRLM for the three centralized example-level DP image classification tasks. We can see clearly that the efficient version always outperforms the vanilla version.
The settings follows from that in Appendix~\ref{app:mom_comp}.

\begin{figure*}[h]
\centering
\begin{subfigure}[b]{0.33\textwidth}
\centering
\includegraphics[width=\textwidth]{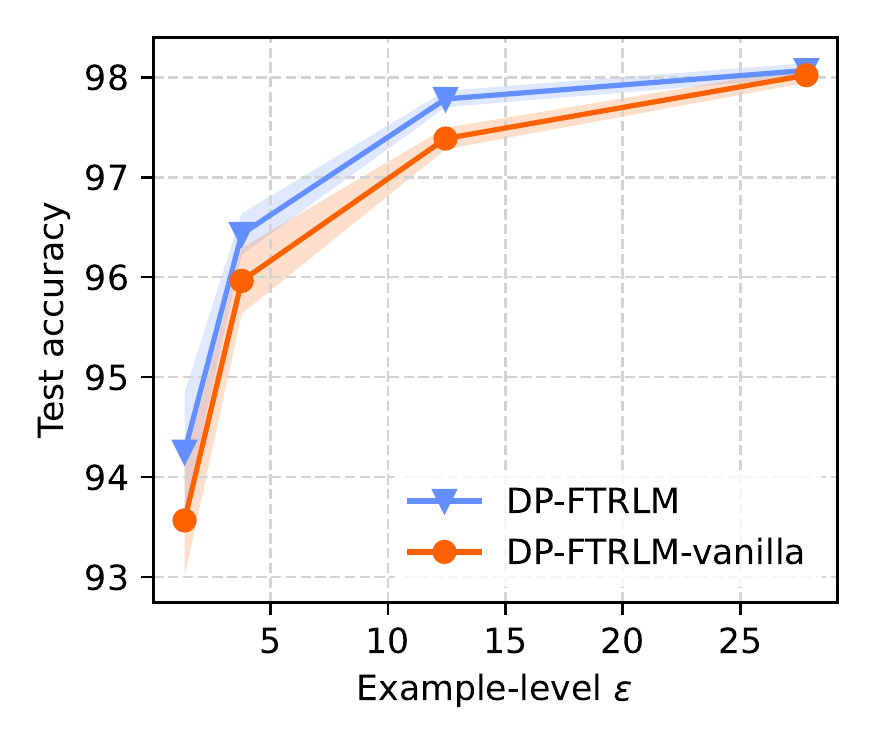}
\end{subfigure}
\begin{subfigure}[b]{0.33\textwidth}
\centering
\includegraphics[width=\textwidth]{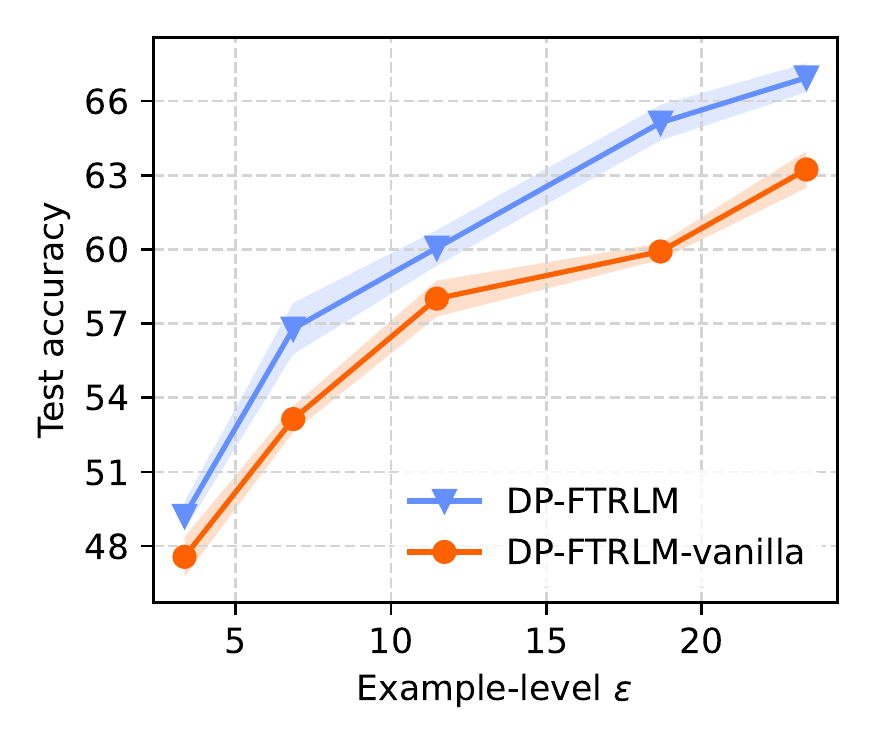}
\end{subfigure}
\begin{subfigure}[b]{0.33\textwidth}
\centering
\includegraphics[width=\textwidth]{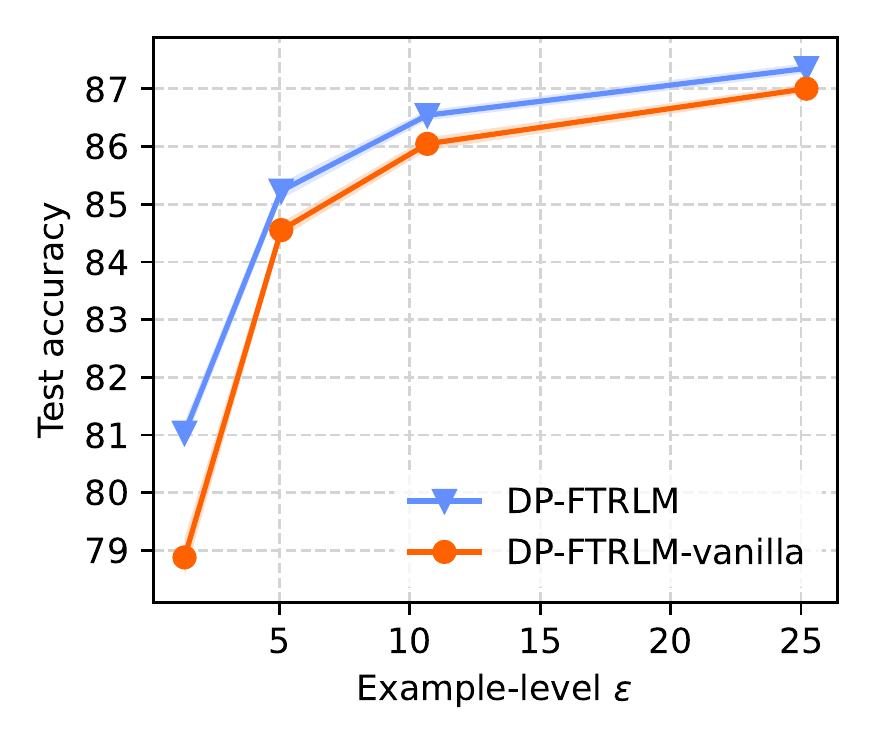}
\end{subfigure}
\caption{Comparison of two variants of DP-FTRL with efficient tree aggregation (``DP-FTRLM'') and vanilla tree aggregation (``DP-FTRLM-vanilla'').
}
\label{fig:image efficient}
\end{figure*}

\mypar{Federated learning}
\cref{fig:fl_results_app} shows the advantage of the efficient tree aggregation algorithm in the StackOverflow simulation for the federated learning setting. In \cref{fig:fl-utility-noise_efficient}, to meet the targeted StackOverflow test accuracies (23\%, 24.5\%), the noise multipliers for DP-FTRLM can increase from (0.268, 0.067) to (0.387, 0.149) after implementing the efficient tree aggregation \citep{honaker2015efficient}. The noise multipliers are used to generate \cref{fig:fl-utility-population-real-efficient}. 

\begin{figure*}[!htbp]
\centering
\begin{subfigure}[b]{0.33\textwidth}
\centering
\includegraphics[width=\textwidth]{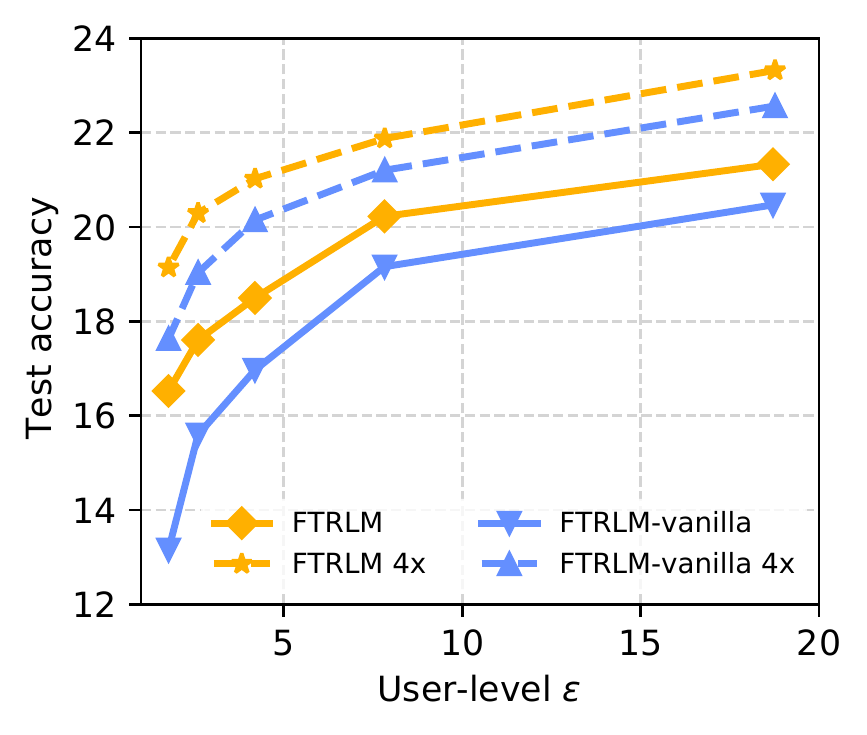}
\caption{}
\label{fig:acc_privacy_stackoverflow_efficient}
\end{subfigure}
\begin{subfigure}[b]{0.33\textwidth}
\centering
\includegraphics[width=\textwidth]{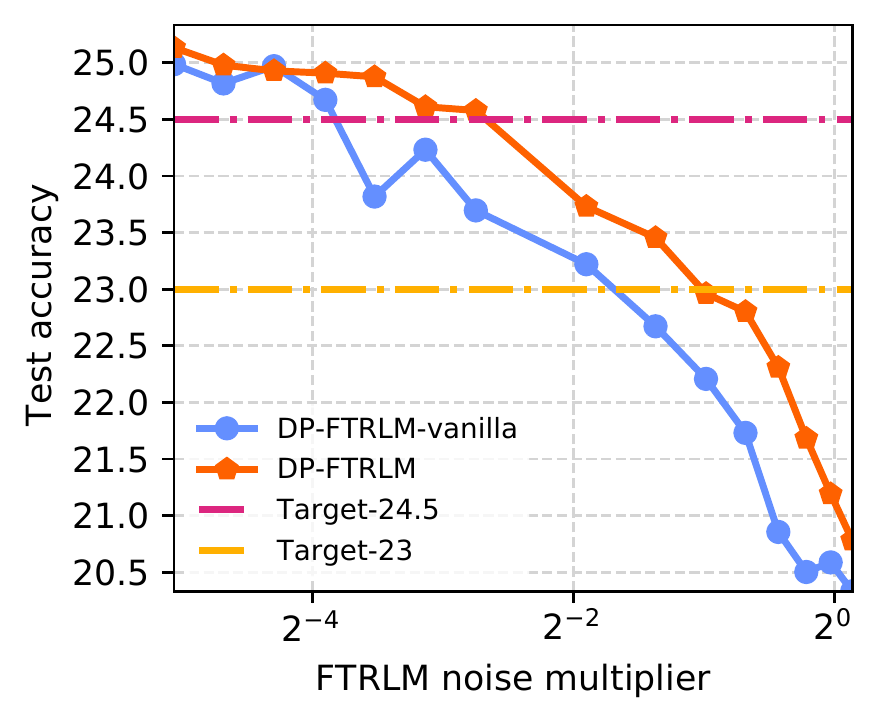}
\caption{}
\label{fig:fl-utility-noise_efficient}
\end{subfigure}
\begin{subfigure}[b]{0.33\textwidth}
\centering
\includegraphics[width=\textwidth]{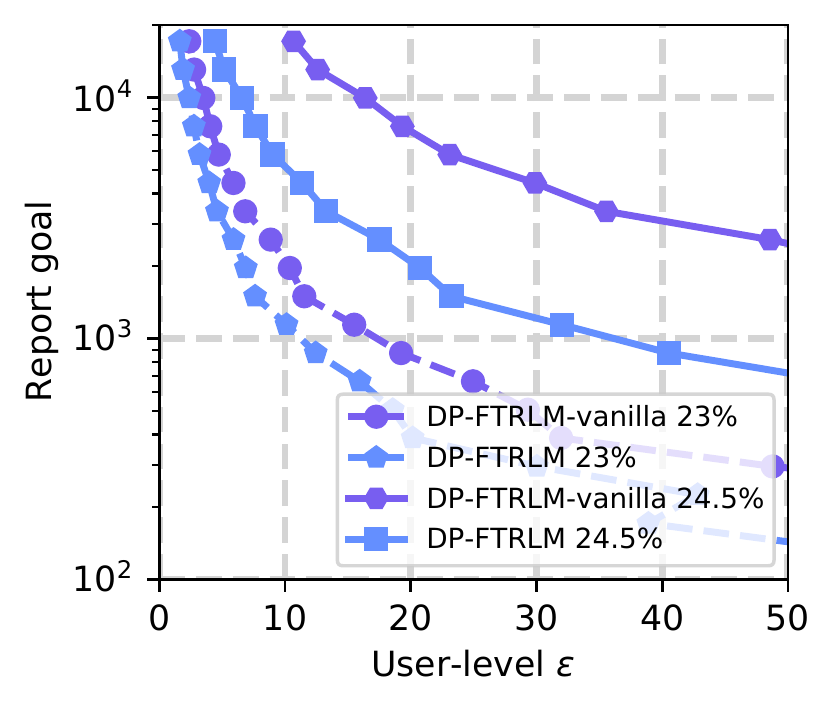}
\caption{}
\label{fig:fl-utility-population-real-efficient}
\end{subfigure}
\caption{Comparison of two variants of DP-FTRL with efficient tree aggregation \citep{honaker2015efficient} and vanilla tree aggregation \citep{Dwork-continual} on StackOverflow for (a) test accuracy under different privacy epsilon;
(b) test accuracy with various noise multipliers;
(c) relationship between user-level privacy $\epsilon$ (when $\delta\approx\nicefrac{1}{\text{population}}$) and computation cost (report goal) for two fixed accuracy targets (see legend).
}
\label{fig:fl_results_app}
\end{figure*}

\subsection{Effect of Tree Completion Trick}
In the centralized learning experiments in Section~\ref{sec:privTarget}, we also make use of the tree completion trick described in Appendix~\ref{sec:tree completion}. Figure~\ref{fig:tree completion trick} plot the comparison between the DP-FTRL result presented in Figure~\ref{fig:acc_privacy} and those without the tree completion trick. We can see that the trick always helps in these settings.

\begin{figure}
\centering
\begin{subfigure}[b]{0.8\textwidth}
\includegraphics[width=\textwidth]{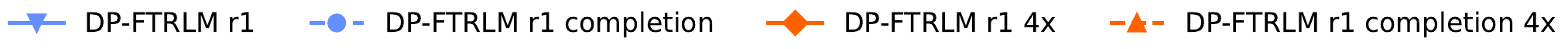}
\end{subfigure}
\begin{subfigure}[b]{0.32\textwidth}
\includegraphics[width=\textwidth]{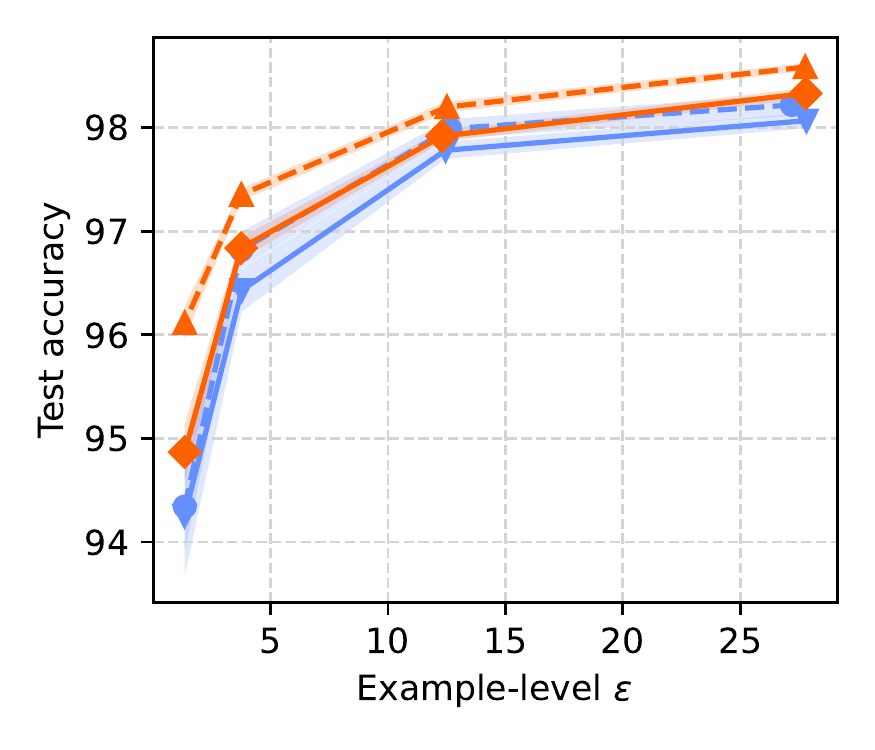}
\caption{MNIST}
\end{subfigure}
\begin{subfigure}[b]{0.32\textwidth}
\includegraphics[width=\textwidth]{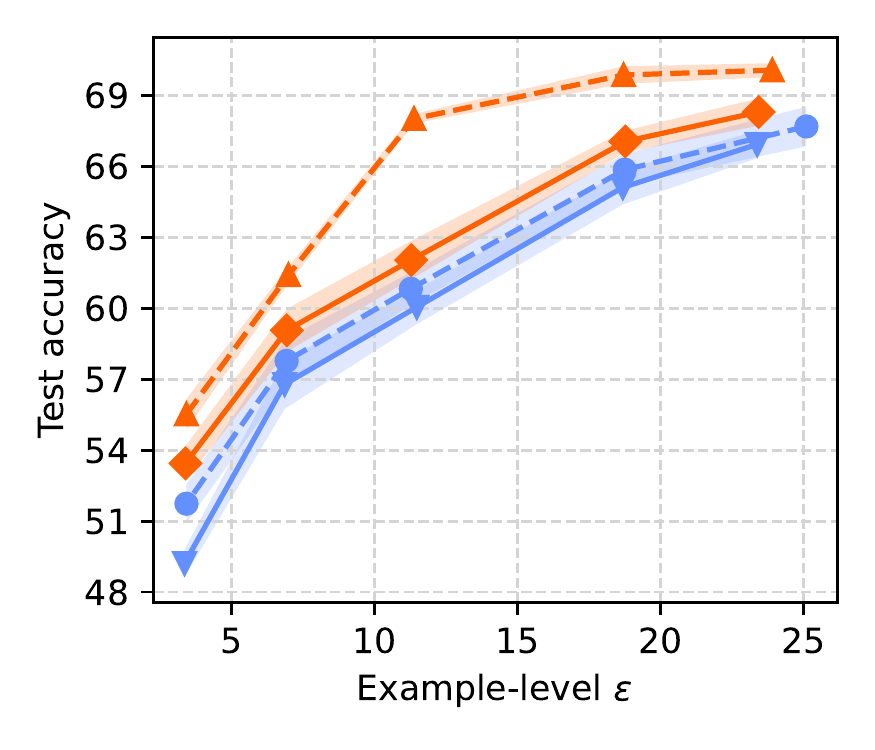}
\caption{CIFAR-10}
\end{subfigure}
\begin{subfigure}[b]{0.32\textwidth}
\includegraphics[width=\textwidth]{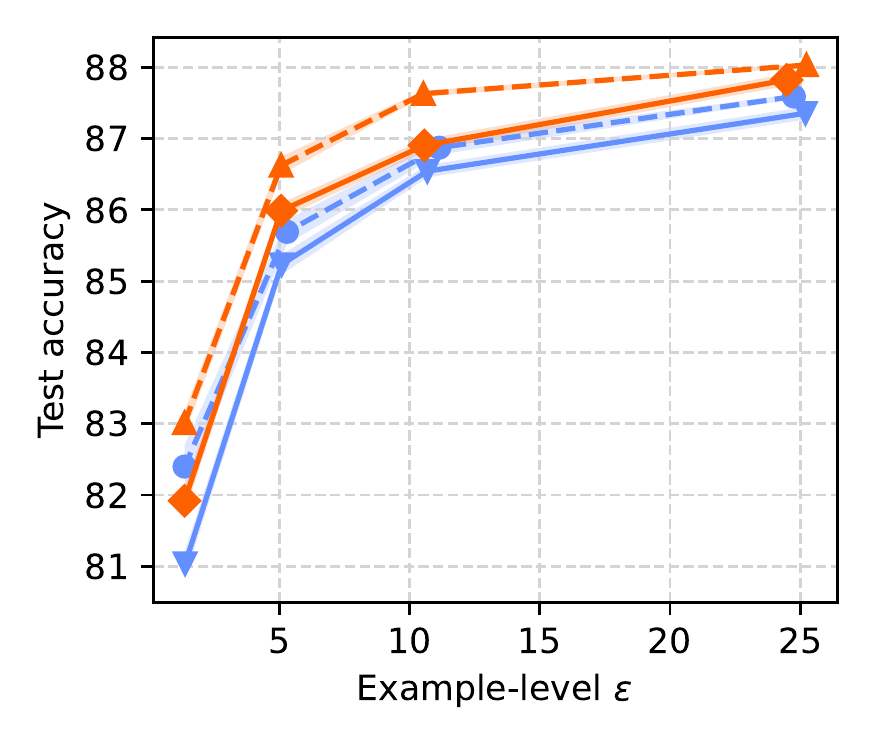}
\caption{EMNIST}
\end{subfigure}
\caption{The effect of the tree completion trick. Settings are the same as those in Figure~\ref{fig:acc_privacy}.}
\label{fig:tree completion trick}
\end{figure}
\section{Omitted Details for Experiments in Section~\ref{sec:privTarget}}
\label{app:privTarget}

\subsection{Details of Hyperparameter Tuning}
\label{app:hyp_tun_privTarget}

\paragraph{Image classification experiments}
For the three image classification experiments, we tune the learning rate ($1/\lambda$ for FTRL) over a grid of the form $\cup_{i\in\{-3,-2,\dots, 3\}} \{10^{i}, 2\times 10^{i}, 5\times 10^{i}\}$, selecting the value that achieves the highest test accuracy averaged over the last 5 epochs while ensuring this chosen value is not an endpoint of the grid.
We use a clipping norm $1.0$ for all the image classification experiments following previous work~\cite{papernot2020making}.

The parameter search for non-private baseline is the same as that for the DP algorithms.  
We use regular SGD (with and without momentum) for the image classification tasks.

\paragraph{StackOverflow experiments}

The StackOverflow benchmark dataset of the next word prediction task has 342,477 users (clients) with training 135,818,730 examples. A validation set of 10,000 examples, and a test set of 16,576,035 examples are constructed following \cite{reddi2020adaptive}. The one layer LSTM described in \citep{reddi2020adaptive} is used. We compare with DP-FedAvg where DP-SGD is used on server.

There are many hyperparameters in federated learning. We fix the number of total rounds to be 1,600 for StackOverflow, and sample 100 clients per round for DP-SGD, and take 100 clients from the shuffled clients for DP-FTRL to make sure the clients are disjoint across rounds. Note that DP-FTRL would run less than one epoch for StackOverflow. On each client, the number of local epochs is fixed to be one and the batch size is sixteen, and we constrained the maximum number of samples on each client to be 256. The momentum for both DP-SGDM and DP-FTRLM is fixed to 0.9.

In most of the experiments,  we will tune server learning rate, client learning rate and clip norm for a certain noise multiplier. 
We tune a relative large grid (client learning rate in $\{0.1, 0.2, 0.5, 1, 2\}$, server learning rate in $\{0.03, 0.1, 0.3, 1, 3\}$, clip norm in $\{0.1, 0.3, 1, 3, 10\}$) when the noise multiplier is zero. And we have several observation: the best accuracy of clip norm 0.3 and 1.0 are slightly better than larger clip norms, which suggests that clip norm could generally help for this language task; increasing server learning rate could complement decreasing clip norm when clip norm is effective; the largest client learning rate that does not diverge often leads to good final accuracy.
As adding noise increases the variance of gradients, we often have to decrease learning rate in practice. Based on this heuristic and the observation from tuning when noise multiplier is zero, we choose client learning rate from $\{0.1, 0.2, 0.5\}$, server learning rate from $\{0.1, 0.3, 1, 3\}$ and clip norm from $\{0.3, 1, 3\}$ unless otherwise specified. 
We use DP-SGD with zero noise for StackOverflow, as gradient clipping can improves accuracy for language tasks.

\subsection{Centralized Training with Large Number of Epochs by Interleaving Restarting and Non-restarting}
\label{sec:interleaving}

Appendix~\ref{sec:DP-FTRL-SometimesRestart} describes the idea of interleaving between restarting and non-restarting. Here, we examine how such schedules might affect the model utility. Additionally, we consider the effect of the tree completion trick (Section~\ref{sec:tree completion}).
We consider CIFAR-10 and EMNIST, which are hard datasets that might require a large number of epochs to learn. 
For CIFAR-10, we fix the batch size to be $500$, number of epochs to be $100$ (thus $10000$ steps in total);
for EMNIST, we fix the batch size to be $500$ and number of epochs to be $50$ (thus $69750$ steps in total). 
On each dataset, similar as in Section~\ref{sec:empEval}, we compare DP-SGD with or without amplification, and DP-FTRL(M) with different restarting schedules.

\begin{figure}
\centering
\begin{subfigure}[b]{0.45\textwidth}
\includegraphics[width=\textwidth]{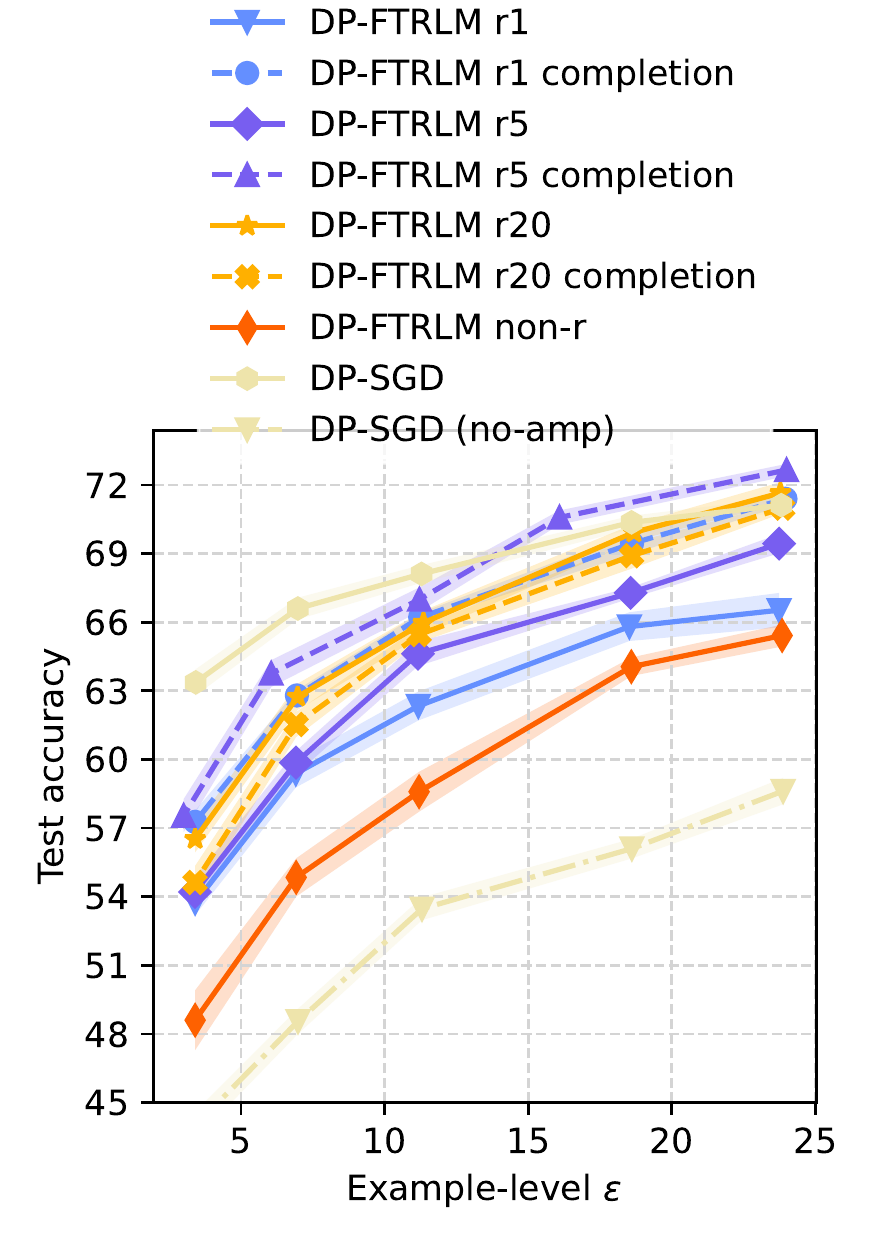}
\caption{CIFAR-10, $100$ epochs.}
\label{fig:cifar10 different restart}
\end{subfigure}
\begin{subfigure}[b]{0.45\textwidth}
\includegraphics[width=\textwidth]{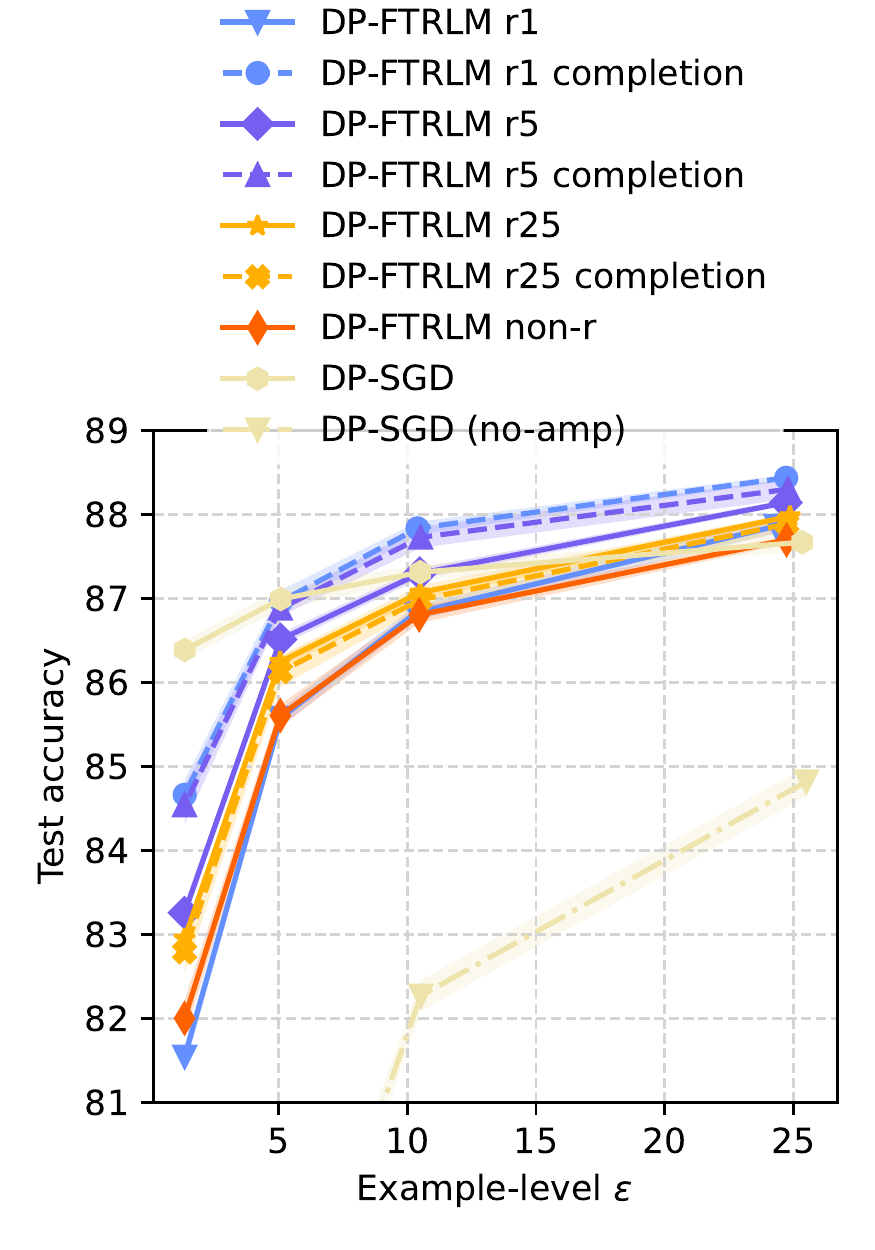}
\caption{EMNIST, $50$ epochs.}
\label{fig:emnist different restart}
\end{subfigure}
\caption{Interleaving between restarting and non-restarting. {\bf r1}, {\bf r5}, {\bf r20}, and {\bf r25} corresponds to restarting every one, five, twenty, and twenty five epochs respectively. {\bf non-r} corresponds to the version of DP-FTRL with no restarting. {\bf completion} refers to the ``completion trick''  to the closest power of two(from Section~\ref{sec:tree completion}).}
\label{fig:different restart}
\end{figure}

In Figure~\ref{fig:cifar10 different restart}, we plot the results for DP-FTRL(M) with non-restarting and restarting every $1$, $5$, $20$ epochs (solid lines), and comparing them with DP-SGD with and without amplification. Additionally, we use the tree completion trick for restarting every $5$ and $20$ epochs (dashed lines).
We can see the following.
\begin{itemize}
\item Neither non-restarting nor restarting every epoch yields accuracy that are comparable to DP-SGD. On the other hand, without the tree completion trick, restarting every $20$ epochs gives much better accuracy, which means that interleaving between restart and non-restart is crucial.
\item The tree completion trick helps for restarting every $5$ epochs and hurts for restarting every $20$ epochs, demonstrating the trade-off we mentioned before.
\item Overall, without the tree completion trick, restarting every $5$ epochs with the tree completion trick gives the best accuracy, and we can see a ``cross-over'' between it and DP-SGD similar as that in Figure~\ref{fig:acc_privacy}, yet at a larger $\epsilon\approx 18$.
With the completion trick, restarting every $20$ epochs gives the best accuracy. A ``cross-over'' happens at $\epsilon \approx 14$.

\item Similar as in Figure~\ref{fig:acc_privacy}, we can see that DP-FTRL is always better than DP-SGD without amplification.
\end{itemize}

In Figure~\ref{fig:emnist different restart}, we plot the results on EMNIST for restarting every $1$, $5$, $25$ epochs and non-restarting. We can observe similar trend as in the CIFAR-10 experiments.
Namely, the tree completion tricks helps in some cases, and the best accuracy is achieved by restarting every $5$ epochs with the tree completion trick, which outperforms DP-SGD with amplification starting from $\eps \approx 5$.

\subsection{Omitted Details for StackOverflow Experiments}
\label{app:expt_so_comp_plot}
\begin{figure}[ht]
\centering
\begin{subfigure}[t]{0.45\textwidth}
\centering
\includegraphics[width=\textwidth]{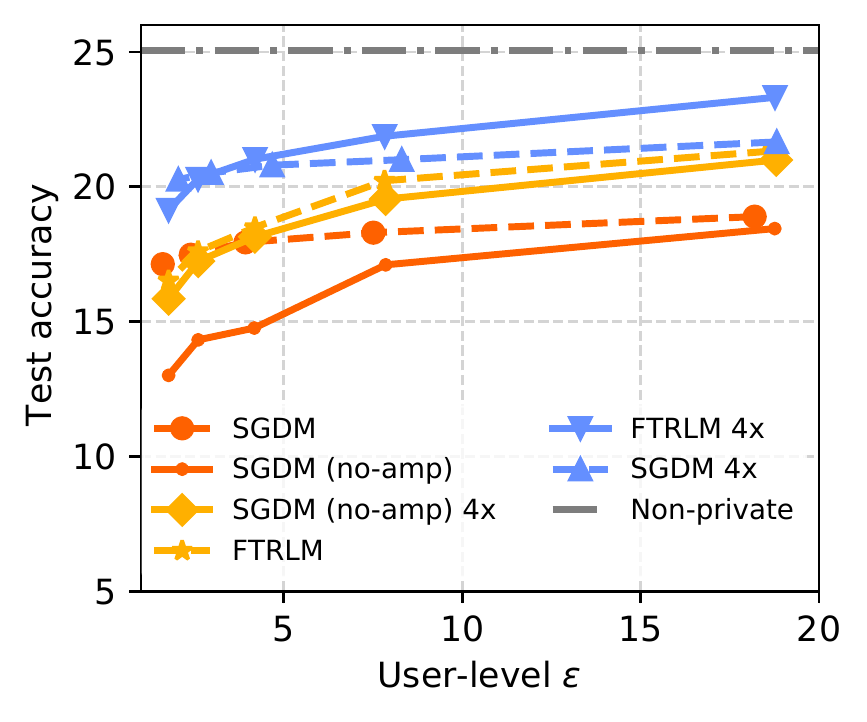}
\label{fig:acc_privacy_stackoverflow_sgd_amp}
\end{subfigure}
\begin{subfigure}[t]{0.45\textwidth}
\centering
\includegraphics[width=\textwidth]{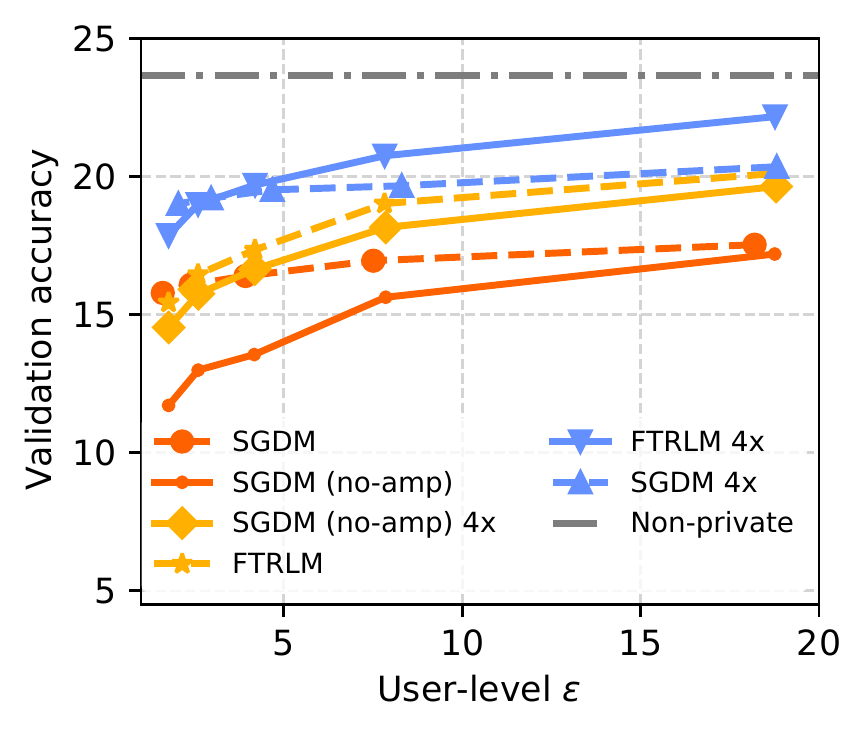}
\end{subfigure}
\caption{Test and Validation accuracy for the StackOverflow next word prediction task under different privacy epsilon by varying noise multipliers.}
\label{fig:fl-privacy-val}
\end{figure}

\begin{table}[ht]
    \centering
    \begin{tabular}{|c|c|c|c|S|S|S|S|}
    \hline
     \multirow{2}{*}{Server Optimizer} & \multirow{2}{*}{Epsilon} & \multicolumn{2}{c|}{Accuracy} & \multicolumn{4}{c|}{Hyperparameters} \\ 
     \cline{3-8} & & Validation  & Test & Noise & \text{ServerLR} & \text{ClientLR} & Clip  \\ 
    \hline
    DP-SGDM & 18.20 & 17.52 & 18.89 & .3 & 1 & .5 & .3 \\
    DP-FTRLM & 18.71 & 20.10 & 21.33 & 1.13 & .3 & .5 & 1\\
    \hline
    DP-SGDM & 7.51 & 16.94 & 18.30 & .4 & .1 & .5 & 1 \\
    DP-FTRLM & 7.83 & 19.01 & 20.22 & 2.33 & 1 & .5 & .3\\
    \hline 
     DP-SGDM & 3.93 & 16.39 & 17.94 & .5 & .3 & .5 & .3 \\
    DP-FTRLM & 4.19 & 17.34 & 18.49 & 4.03 & .1 & .5 & 1\\
    \hline 
     DP-SGDM & 2.40 & 16.08 & 17.48 & .6 & .3 & .5 & .3 \\
    DP-FTRLM & 2.60 & 16.45 & 17.60 & 6.21 & .1 & .5 & 1 \\
    \hline 
    DP-SGDM & 1.61 & 15.78 & 17.13 & .7 & .3 & .5 & .3 \\
    DP-FTRLM & 1.77 & 15.43 & 16.52 & 8.83 & .3 & .5 & .3 \\
    \hline 
    \end{tabular}
    \caption{Validation and test accuracy for the StackOverflow next word prediction task under different privacy epsilon.} 
    \label{tab:fl-privacy}
\end{table}

We compare the accuracy of the momentum variant of DP-FTRL with the momentum variant of DP-SGD as baseline under different privacy epsilon. We tune hyperparameters as described in \cref{app:hyp_tun_privTarget} and select the hyperparameters achieve the best validation accuracy for StackOverflow (see \cref{tab:fl-privacy} and \cref{fig:fl-privacy-val}). DP-FTRLM performs better than DP-SGDM when the epsilon is relatively large, but performs worse when the epsilon is small ($\epsilon < 2.60$ in \cref{tab:fl-privacy}). 
More noise are added to DP-FTRLM to achieve the same privacy epsilon as DP-SGDM. However, DP-FTRLM can result in utility (accuracy) not (much) worse than DP-SGDM without relying on amplification by sampling, which makes it appealing for practical federated learning setting where population and sampling is difficult to estimate \citep{balle2020privacy}. Note that the noise added for both DP-FTRLM and DP-SGDM are considered large for federated learning tasks. The effective noise could be significantly reduced by sampling more clients each round in practice \cite{mcmahan2017learning}, and more discussion on this front is in \cref{app:utilTarget}.
\section{Omitted Details for Experiments in Section~\ref{sec:utilTarget}}
\label{app:utilTarget}

\subsection{Details of Hyperparameter Tuning}
\label{app:hyp_tun_utilTarget}

In \cref{app:expt_so_comp_plot}, a significant amount of noise has to be added in both DP-FTRLM and DP-SGDM to achieve nontrivial privacy epsilons, which leads to undesired accuracy degradation. For example, the test accuracy of DP-FTRLM on StackOverflow dataset decreases from $25.15\%$ when $\epsilon=\infty$ to $20.22\%$ when $\epsilon=8.5$ when the number of clients per round is fixed at 100. In practical federated learning tasks, the total population is very large and many more clients could be sampled every round.  In this section, taking StackOverflow as an example, we study the minimum number of sampled clients per round (report goal in \citep{bonawitz2019towards}) to achieve a target accuracy under certain privacy budget.

\begin{figure}[ht]
\centering
\begin{subfigure}[b]{0.45\textwidth}
\centering
\includegraphics[width=\textwidth]{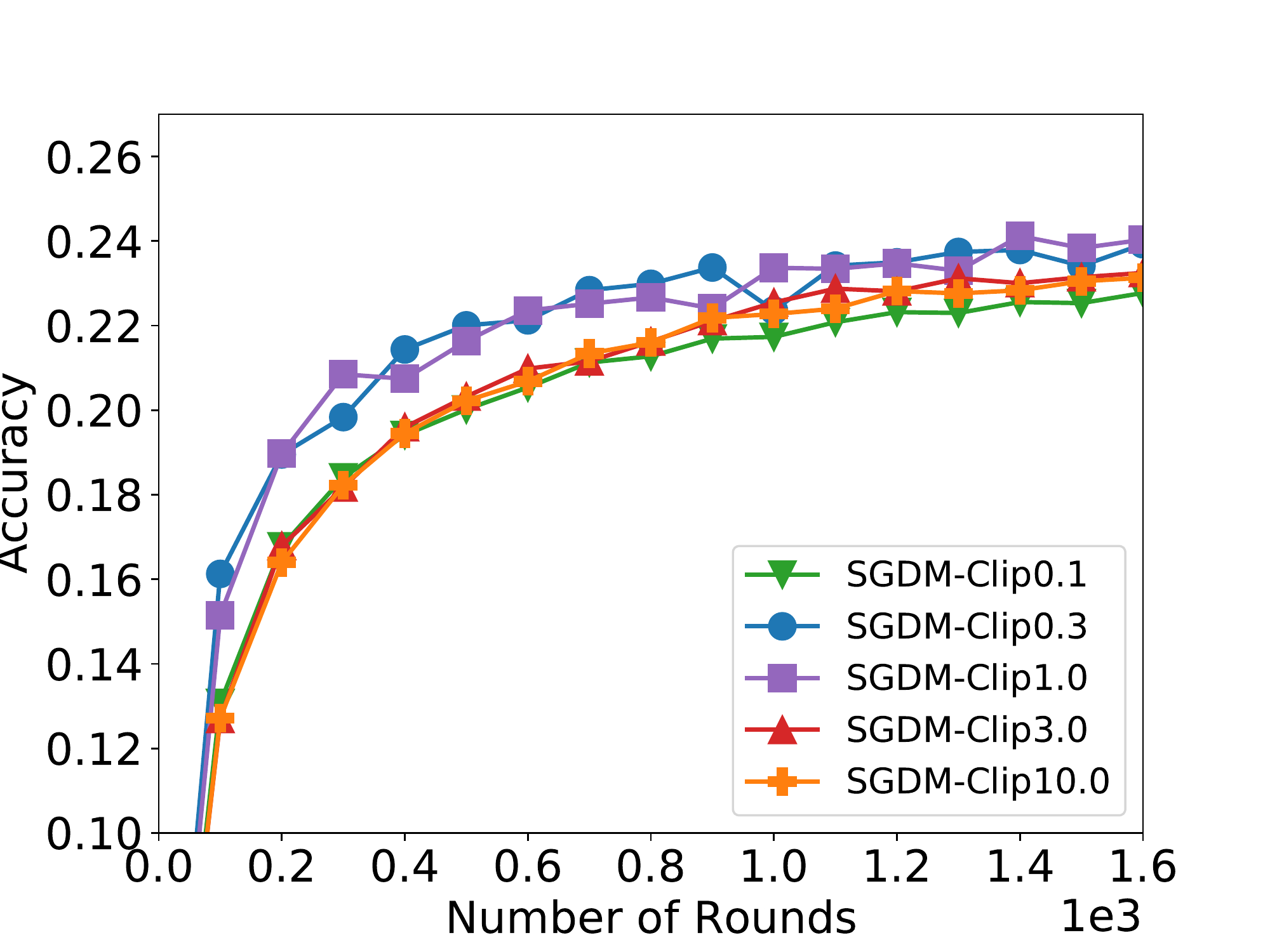}
\caption{SGDM with zero noise}
\end{subfigure}
\begin{subfigure}[b]{0.45\textwidth}
\centering
\includegraphics[width=\textwidth]{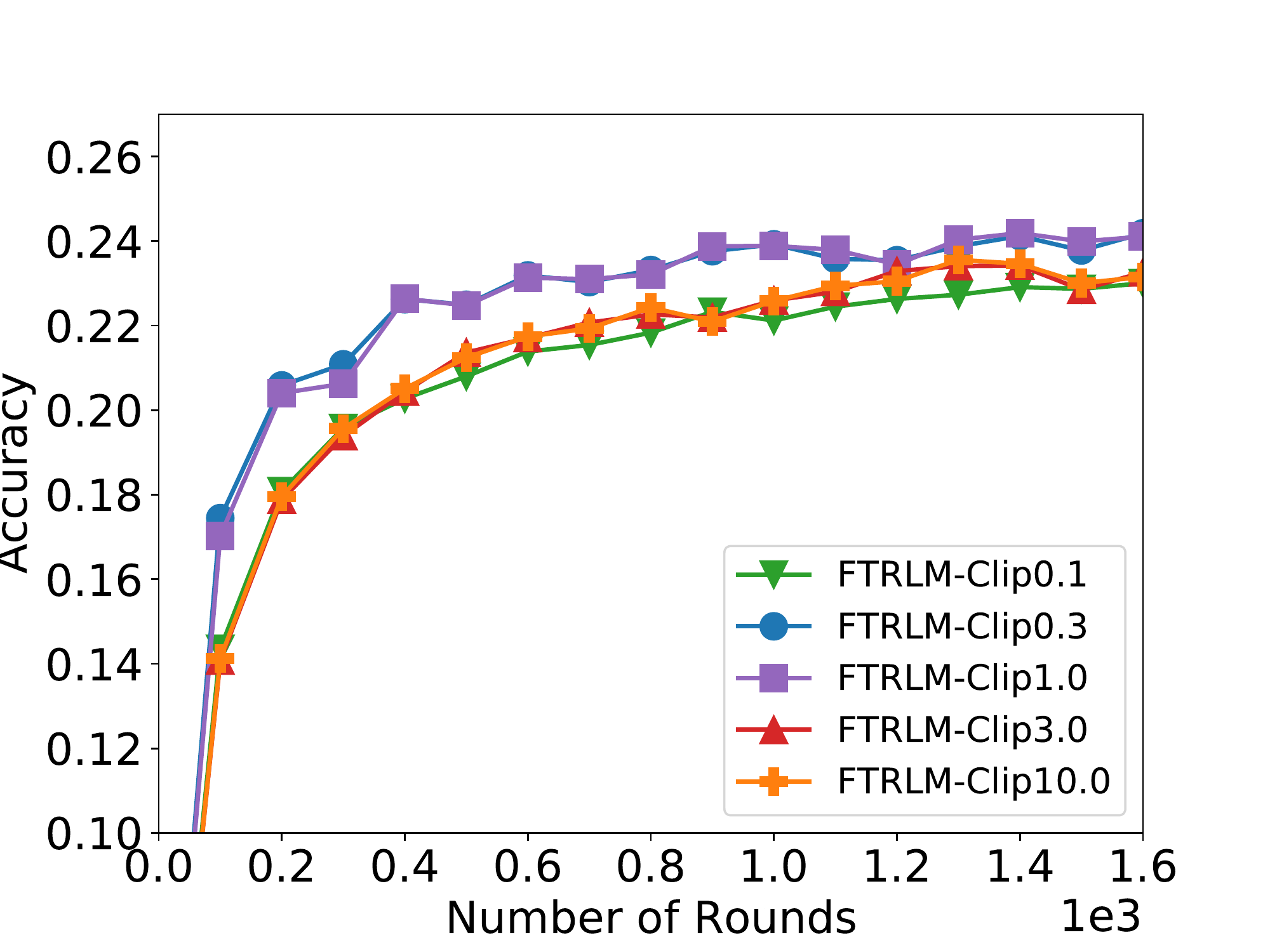}
\caption{FTRLM with zero noise}
\end{subfigure}
\caption{Training curve of the best validation accuracy under various clip norm for StackOverflow.}
\label{fig:fl-utility-clip}
\end{figure}

\paragraph{Fix the clip norm and client learning rate to reduce hyperparameter tuning complexity.} 
We first find the largest noise multiplier that would meet the target accuracy based on selecting 100 clients per round. As an extensive grid search over noise multiplier while simultaneously tuning server learning rate, client learning rate and clip norm is computationally intensive, we fix the clip norm to 1 and the client learning rate to 0.5 based on \cref{fig:fl-utility-clip}. We then tune the server learning rate from $\{0.3, 1, 3 \}$ for each noise multiplier.

\paragraph{Grid search for the largest noise multiplier to meet the target.} We use a grid of ten noise multipliers between $0$ ($\epsilon=\infty$, test accuracy=$24.89$) and $0.3$ ($\epsilon=18.89$, test accuracy=$18.89$) for DP-SGDM, and between $0$ ($\epsilon=\infty$, test accuracy=$25.15$) and $1.13$ ($\epsilon=19.74$, test accuracy=$21.33$) for DP-FTRLM. And we further add five noise multipliers between $0$ and $0.035$ for DP-SGDM, and between $0$ and $0.149$ for DP-FTRLM based on the results of the previous grid search on ten noise multipliers. 
The test accuracy is presented in \cref{fig:fl-utility-noise}. We set the target test accuracy as $24.5\%$ and select noise multiplier $0.007$ (with server learning rate $3$) for DP-SGDM and noise multiplier $0.149$ (with server learning rate $3$) for DP-FTRLM. 

\paragraph{Report goal for the nontrivial privacy epsilon in practice.} The standard deviation of noise added in each round is proportional to the inverse of the number of clients per round (report goal). The practical federated learning tasks often have a very large population and report goal, and we could simultaneously increase the noise multiplier and report goal, so that the utility (accuracy for classification and prediction tasks) will likely not 
degrade \cite{mcmahan2017learning} while the privacy guarantee is improved.  The validation accuracy of simulation performance with two different report goals for StackOverflow is presented in \cref{fig:fl-utility-clients-restarts}. The noise multiplier 0.149 is used for DP-FTRLM and 0.007 is used for DP-SGD when report goal is 100, which is the largest noise multiplier to meet the target test accuracy determined by \cref{fig:fl-utility-noise}. We run each experiment for five times and plot the curves for the median validation accuracy, the corresponding test accuracy are $24.73\%$ for DP-SGDM and $24.51\%$ for DP-FTRLM. We then run the same experiments with report goal of 1000,
and proportionally increase the corresponding noise multiplier to be 1.49 for DP-FTRLM and 0.07 for DP-SGDM. The performance of 1000 report goal is slightly better with test accuracy $25.19\%$ for DP-SGDM and $24.67\%$ for DP-FTRLM. We will assume the utility will not decrease if report goal and noise multiplier are simultaneously and proportionally increased.

As shown in \cref{tab:fl-utility}, both report goals 100 and 1000 would provide trivial privacy guarantee of large epsilon for the target utility. We have to increase the report goal to $2.06e4$ to get a nontrivial privacy epsilon (less than 10) with DP-FTRLM and the StackOverflow population of $3.42e5$ \footnote{The best epsilon DP-SGDM can achieve is $10.16$ by increasing report goal to be as large as the population $3.42e5$}. 
Smaller report goal could achieve similar privacy guarantee if the population becomes larger. In \cref{fig:fl-utility-population-real}, the relationship between privacy guarantee and report goal for DP-FTRLM and DP-SGDM are presented. DP-FTRLM provides better privacy guarantee by smaller report goal when the privacy epsilon is relatively large or very small. The range where DP-FTRLM outperforms DP-SGDM in report goals and privacy guarantees are larger when the population is relatively small or very large.

\begin{figure}[ht]
\centering
\includegraphics[width=0.6\textwidth]{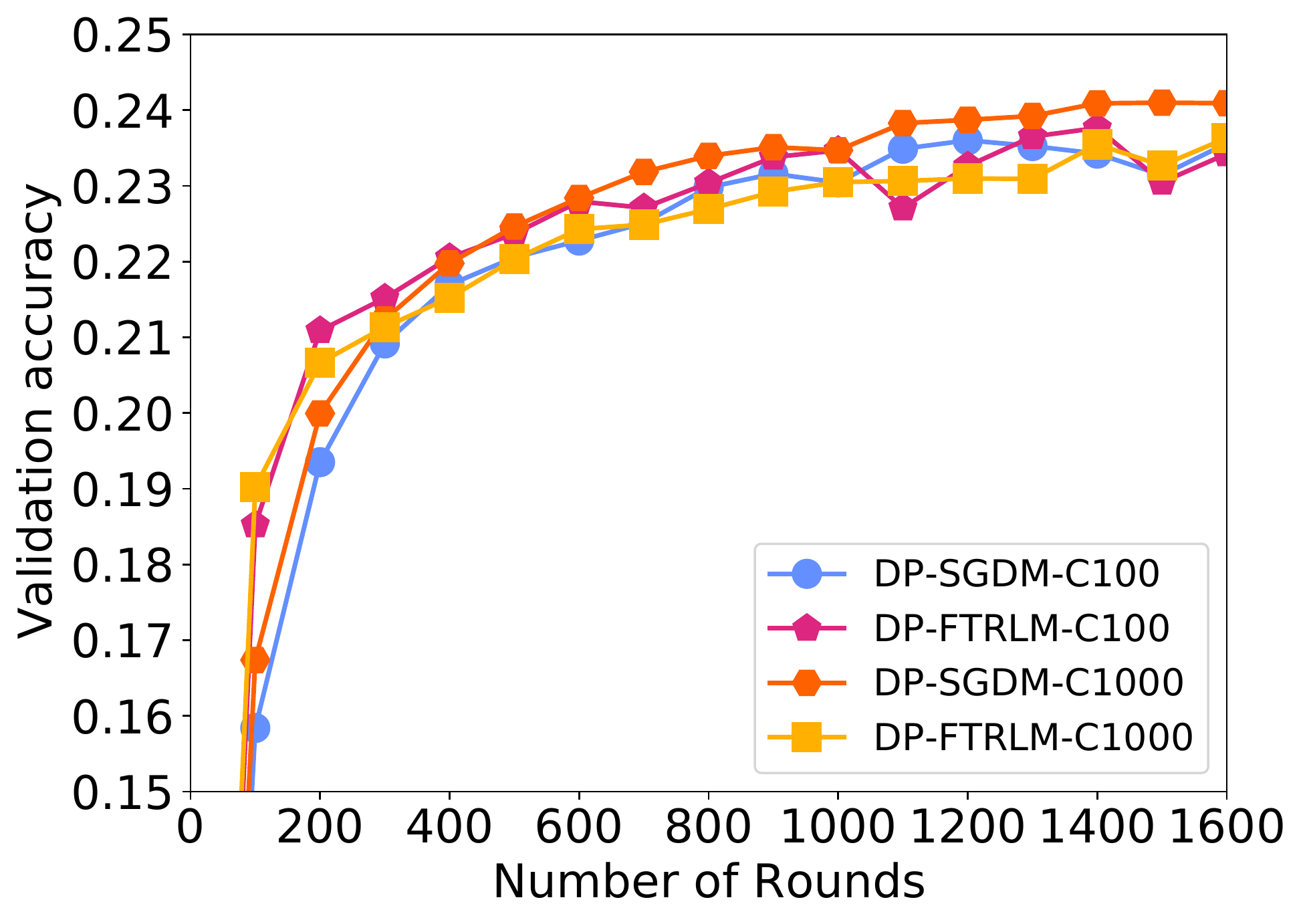}
\caption{Training curves of validation accuracy for DP-SGDM and DP-FTRLM for StackOverflow for report goal 100 and 1000 (suffix C100 and C1000 in the legend). DP-FTRLM with restart (see \cref{app:multipass}) is used when report goal is 1000 (less than five epochs of data). Simultaneously increasing noise multiplier and report goal by 10x could significantly improve the privacy guarantee without sacrificing the utility. The noise multiplier for DP-SGDM-C100, DP-FTRLM-C100, DP-SGDM-C1000, DP-FTRLM-C1000 are 0.007, 0.149, 0.07, and 1.49, respectively. The corresponding test accuracy are 24.73\%, 24.51\%, 25.19\% and 24.67\%. The corresponding privacy $\epsilon$ can be found in \cref{tab:fl-utility}}. \label{fig:fl-utility-clients-restarts}
\end{figure}

\begin{table}[ht]
    \centering
    \begin{tabular}{|c|c|c|c|c|c|c|}
    \hline
     \multirow{2}{*}{Server Optimizer} & \multicolumn{2}{c|}{Privacy} & \multicolumn{3}{c|}{Setting} \\ 
     \cline{2-6} & Epsilon  & Delta & Noise & Report goal & Population  \\ 
    \hline
    DP-SGDM &  1.78e7 & 1e-6 & 0.007 & 100 & 3.42e5\\
    DP-FTRLM & 363.66 & 1e-6 & 0.149 & 100 & 3.42e5\\
    DP-SGDM  & 7.71e4 & 1e-6 & 0.07 & 1000 & 3.42e5\\
    DP-FTRLM & 32.52 & 1e-6 & 1.49 & 1000 & 3.42e5\\
    \hline 
    DP-SGDM  & 9.42 & 1e-6 & 23.97 & 3.42e5 & 3.42e5\\
    DP-FTRLM & 9.76 & 1e-6 & 7.53 & 5.06e3 & 3.42e5\\
    DP-FTRLM & 4.11 & 1e-6 & 24.29 & 1.63e4 & 3.42e5\\
    \hline 
    DP-SGDM & 8.11 & 1e-6 & 0.67 & 9.56e3 & 1e6\\
    DP-FTRLM & 7.57 & 1e-6 & 5.73 & 3.85e3 & 1e6\\
    DP-SGDM & 3.70 & 1e-6 & 1.20 & 1.71e4 & 1e6\\
    DP-FTRLM & 3.56 & 1e-6 & 15.29 & 1.03e4 & 1e6\\
    \hline 
    \end{tabular}
    \caption{The ($\epsilon, \delta$) privacy guarantee for DP-FTRLM and DP-SGDM under realistic and hypothetical report goal and population of StackOverflow that would meet the target test accuracy 24.5\%. Note that the DP-FTRLM privacy accounting is based on the restart strategy in \cref{app:multipass}.} 
    \label{tab:fl-utility}
\end{table}

\fi
\end{document}